\newcommand{\stkout}[1]{\ifmmode\text{\sout{\ensuremath{#1}}}\else\sout{#1}\fi}
\begin{document}

\title{Pseudochaotic Many-Body Dynamics as a Pseudorandom State Generator}

\author{Wonjun Lee}
\email{wonjun1998@postech.ac.kr}
\affiliation{Department of Physics, Pohang University of Science and Technology, Pohang 37673, South Korea}
\affiliation{Center for Artificial Low Dimensional Electronic Systems, Institute for Basic Science, Pohang, 37673, South Korea}

\author{Hyukjoon Kwon}
\email{hjkwon@kias.re.kr}
\affiliation{School of Computational Sciences, Korea Institute for Advanced Study, Seoul 02455, South Korea}

\author{Gil Young Cho}
\email{gilyoungcho@kaist.ac.kr}
\affiliation{Department of Physics, Korea Advanced Institute of Science and Technology, Daejeon 34141, South Korea}
\affiliation{Center for Artificial Low Dimensional Electronic Systems, Institute for Basic Science, Pohang, 37673, South Korea}
\affiliation{Asia-Pacific Center for Theoretical Physics, Pohang, Gyeongbuk 37673, South Korea}

\date{\today}

\begin{abstract}
Quantum chaos is central to understanding quantum dynamics and is crucial for generating random quantum states, a key resource for quantum information tasks. In this work, we introduce a new class of quantum many-body dynamics, termed pseudochaotic dynamics. Although distinct from chaotic dynamics, out-of-time-ordered correlators, the key indicators of quantum chaos, fail to distinguish them. Moreover, pseudochaotic dynamics generates pseudorandom states that are computationally indistinguishable from Haar-random states. We construct pseudochaotic dynamics by embedding a smaller $k$-qubit subsystem into a larger $n$-qubit system. We demonstrate that a subsystem of size $k=\omega(\log n)$ is sufficient to induce pseudochaotic behavior in the entire $n$-qubit system. Furthermore, we construct a quantum circuit exhibiting pseudochaotic dynamics and demonstrate that it generates pseudorandom states within $\mathrm{polylog}(n)$ depth. In summary, our results constitute the discovery of new quantum dynamics that are computationally indistinguishable from genuine quantum chaos, which provides efficient routes to generate useful pseudorandom states.

\end{abstract}

\maketitle

\noindent{\large{\bf Introduction}}\\
Quantum many-body dynamics represents a forefront of our modern understanding of quantum mechanics with profound implications across fields such as quantum information science~\cite{Saffman_2010,Wendin_2017,Pezze_2018,Monroe2021}, thermodynamics~\cite{Eisert_2015,Nandkishore_2015,D_Alessio_2016}, condensed matter physics~\cite{Dziarmaga_2010,Heyl_2018,Abanin_2019}, and high-energy physics~\cite{Lunin_2002,Maldacena2016SYK,Maldacena2016chaos}. However, due to the hardness of simulating the dynamics, their properties are still not fully understood. One area of particular interest is quantum chaotic dynamics, with prototypical examples including the Sachdev–Ye–Kitaev (SYK) model~\cite{Sachdev1993,Kitaev2015} and random quantum circuits~\cite{Matthew2023}. A defining characteristic of chaos, both in classical and quantum systems, is the butterfly effect~\cite{Lieb1972,Shenker_2014,Roberts2017}, which asserts that local information in an initial state quickly scrambles across an exponentially large space. Since tracking this scrambled information requires exponential resources, simulating such quantum dynamics using classical computers is generally intractable. This challenge has spurred the use of quantum devices to study quantum chaos~\cite{Lewis2019,Monroe2021,Joshi2022} with potential applications in quantum supremacy tasks~\cite{preskill2012, Boixo_2018,arute2019, aaronson2020}. Discovering new classes of quantum many-body dynamics could similarly yield unexpected insights across these fields.    
 
A deep connection between quantum chaos and randomness offers a promising route for generating ensembles of quantum states~\cite{Brand_o_2016, Nakata2017, Ho_2022, Cotler2023} close to uniformly random (i.e., Haar-random) quantum ensembles. As a quantum cryptographic primitive, Haar-random quantum ensembles have crucial applications in quantum information science including quantum cryptography~\cite{ananth2022, kretschmer2023}, quantum estimation theory~\cite{Knill_2008, Huang_2020, huang2022}, and quantum complexity theory~\cite{bouland2019}. However, preparing a genuine Haar-random ensemble of quantum states demands exponentially deep circuits~\cite{knill1995}, which current technology struggles to achieve. The recent formulation of pseudorandom quantum states~\cite{Ji2018} has shed light on this problem by considering an ensemble of quantum states that even quantum computers cannot distinguish from Haar-random states within limited computation time, i.e., computationally indistinguishable, but are preparable with lower circuit depth. The pseudorandomness in quantum states and associated computational indistinguishability of various quantum resources like entanglement~\cite{aaronson2023quantum}, magic~\cite{gu2023little}, and coherence~\cite{haug2023pseudorandom}, have found a wide range of applications in quantum information processing~\cite{kretschmer2021, morimae2022, bostanci2023, elben2023, movassagh2023}.

Motivated by these previous developments,  we introduce a new class of quantum many-body dynamics for quantum simulations, called `pseudochaotic dynamics,’ capable of generating pseudorandom states. Although pseudochaotic dynamics are not chaotic and thus fundamentally distinct from conventional chaotic dynamics, we demonstrate that they are surprisingly indistinguishable within limited computation time from chaotic ones through the defining metric of chaos, namely out-of-time-ordered correlators (OTOCs)~\cite{hashimoto2017}, which quantify the butterfly effect.

{In this work}, we provide a systematic construction of these dynamics by embedding the unitary dynamics of a $k$-qubit subsystem into the entire $n$-qubit quantum system. Remarkably, this approach is feasible even with a very small subsystem size with $k=\omega(\log n)$ and a circuit depth of $\omega(\log n)$. This depth without any conditions almost touches the shallowest depth for generating pseudorandom states made by assuming cryptographic assumptions~\cite{aaronson2023quantum,Naor2004}. Moreover, the subsystem's dynamics need not be inherently chaotic, which contrasts strongly with our common understanding of quantum chaos. We also discuss how the properties of pseudochaos are related to coherence generated by the subsystem dynamics~\cite{haug2023pseudorandom}, and other quantum resources studied in the context of pseudorandom quantum states, such as entanglement~\cite{aaronson2023quantum}, magic~\cite{gu2023little}.

\begin{figure*}[t]
    \includegraphics[width=\linewidth]{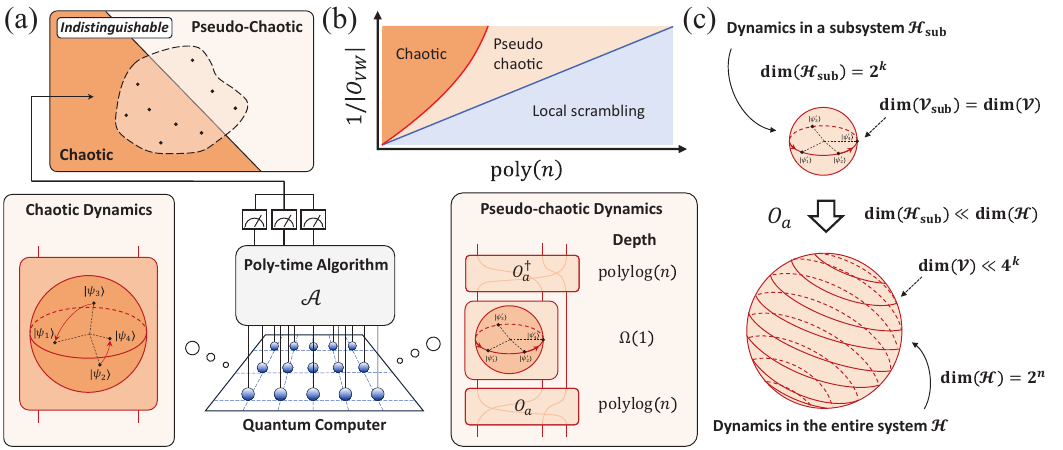}
    \centering
    \caption{Overview. (a) A quantum computer simulates dynamics either from a chaotic ensemble or a pseudochaotic ensemble. Any poly-time quantum algorithm $\mathcal{A}$ on a polynomial number of copies with poly-time classical post-processing fails to pin down whether the dynamics is chaotic or not by measuring OTOC. An example of pseudochaotic dynamics is obtained by conjugating the dynamics in a subspace by a random permutation. To make this dynamics pseudochaotic, the dimension of the subspace $2^k$ should be given by $k=\omega(\log n)$ with the number of qubits $n$, which is much smaller than the entire space dimension $2^n$. A circuit implementation of this dynamics requires $\operatorname{polylog}(n)$ depth with all-to-all connectivity. (b) We can schematically classify how the late-time $1/O_{VW}$ scales with $n$ into three different regimes. In a chaotic system {(orange color)}, this scaling is exponential in $n$. For a system with local scrambling {(blue color)}, the scaling is at most polynomial in $n$. Pseudochaotic dynamics {(peach color)} exhibits the scaling which falls between these two. (c) The $2^k$-dimensional subspace is mapped to the entire space by isometries $\{O_a\}$. $\mathcal{V}_\mathrm{sub}$ is the space spanned by an ensemble of unitary operators $\{u\}$ in the subspace, which could be non-chaotic. Through the action of ${O_a}$, $\mathcal{V}_\mathrm{sub}$ is mapped to $\mathcal{V}$ within the entire Hilbert space, preserving its dimension. Remarkably, even if the subspace dimension is negligibly smaller than the entire space dimension, its ensemble average over random isometries cannot be distinguished from chaotic dynamics by any poly-time quantum algorithms with access to polynomially many copies of evolved states.}
    \label{fig:overview}
\end{figure*}
 
\vspace{1em} 
\noindent{\large{\bf Results}}\\
\noindent {\bf Pseudochaotic Dynamics.}\\
We define pseudochaotic dynamics as a {non-chaotic} unitary time evolution with {i.} {computational} indistinguishability from chaotic time evolution via OTOCs, and {ii.} capability to generate pseudorandom states. The latter implies that the pseudochaotic dynamics loses its initial information over a certain time scale as like the former does.

\textit{i. {Computational} indistinguishability via OTOC}. The OTOC of a unitary operator $U$ with Pauli like local operators $V$ and $W$, $V^2=W^2=1$, at infinite temperature is~\cite{hashimoto2017} 
\begin{equation}\label{eq:def-OTOC}
    O_{VW}(U) = \frac{1}{2^n}\operatorname{tr}\left( V U W U^\dagger V U W U^\dagger \right).
\end{equation}
This can be estimated in experiments by reverse time evolution or interferometric measurements~\cite{Brian2016,yao2016}. $O_{VW}(U)$ can be thought of as an observable when we replace the trace operation by the inner product with the Bell pair state in the double copy space, and thus it can be estimated by measurements in that space.

In quantum chaotic systems, $O_{VW}(U)$ converges to zero as the time interval increases for any local operators $V$ and $W$. In addition, the values of these converged OTOCs decrease exponentially with the system size~\cite{Roberts2017}. Detecting this characteristic exponential decay in system size requires the uncertainties in the estimated OTOCs to also decay exponentially. However, if the number of realizable copies is limited to a polynomial in the system size, the uncertainties in estimating the OTOCs by a poly-time quantum algorithm scale at best as $\Omega(1/\operatorname{poly}(n))$ when the algorithm saturates the Heisenberg limit, making it impossible to observe the exponential decay of the OTOCs. Consequently, any dynamics with OTOCs scaling negligibly, {i.e.,} $O_{VW} \sim o(1/\operatorname{poly}(n))$, for all time $t\geq t_*$ for some constant $t_*>0$ becomes {computationally} indistinguishable from chaotic one by OTOCs. We take this indistinguishability as one criterion for $U$ to be pseudochaotic.

\textit{ii. Capability to generate pseudorandom states}. An ensemble of chaotic unitary operators can generate Haar-random states~\cite{Schiulaz_2019}. We require that pseudochaotic dynamics produces a pseudorandom state ensemble~\cite{Ji2018} in the same way. This capability of generating pseudorandom states is another manifestation of the indistinguishability of pseudochaotic dynamics from chaotic ones. For this, we note that a pseudorandom state ensemble is an ensemble that cannot be distinguished from a truly random ensemble using only a polynomial number of measurements and a poly-time quantum algorithm. This immediately implies that an ensemble of states generated through pseudochaotic dynamics as we explained above should be indistinguishable from one generated by fully chaotic dynamics within polynomial copies by any polynomial time (quantum) algorithm. Figure~\ref{fig:overview}(a) illustrates this concept of the pseudochaotic dynamics.

\vspace{1em}
\noindent {\bf Explicit Construction.}\\
We introduce a systematic construction for pseudochaotic dynamics, which we term `random subsystem-embedded dynamics (RSED)'. The RSED consists of two components: a random subset isometry $O_a$ and an embedded unitary operator $u$ in the subsystem. The random subset isometry is defined as 
\begin{equation}\label{eq:isometry}
    O_a=\sum_{b\in\{0,1\}^k}(-1)^{f(ba)}\ketbra{p(ba)}{ba},
\end{equation}
where $k\leq n$ represents a subsystem of size $k$ within the total system of size $n$, and $p$ and $f$ are random permutation and function, respectively. Our key observation is that $k=\omega(\log n)$ is necessary for RSED to be pseudochaotic. Below we always set $k=\omega(\log n)$. The term $a\in\{0,1\}^{n-k}$ serves as the seed for these random mappings. This isometry embeds a unitary operator $u$, acting on the $2^k$-dimensional subsystem Hilbert space, into the $2^n$-dimensional Hilbert space of the entire system.

The full unitary evolution of RSED is given by
\begin{equation}\label{eq:full-evolution}
    U=\sum_{a\in\{0,1\}^{n-k}}O_a u O_a^\dagger.
\end{equation}
Figure~\ref{fig:numerics}(b) illustrates how dynamics in the subsystem is embedded into the entire system by an isometry $O_a$. The effect of the conjugation by $\{O_a\}$ is equivalent to applying random permutation with random sign factors on the unitary operator in the entire space, $u\otimes I^{\otimes(n-k)}$. The time evolution operator {with an evolution time $t$} is  
\begin{equation} \label{eq:tau-evolution}
    U^t=\sum_{a\in\{0,1\}^{n-k}}O_a u^t O_a^\dagger, 
\end{equation}
because of $O_a^{\dagger} O_{a'} = \delta_{a,a'}$. In principle, arbitrary $u$ is allowed. However, for a RSED to be pseudochaotic, it is sufficient {for $u^t$ to have negligibly small elements for all time $t\geq t_{*}$ for some positive constant $t_{*}$ as we explain below.} {More details on the RSED can be found in {Supplementary Note~1}}. 

{Interestingly, the level statistics of a pseudochaotic RSED differ drastically from those of conventional chaotic systems due to exponential degeneracies in its energy spectrum. Thus, even when a chaotic \( u \) is chosen, the level statistics of the corresponding RSED deviate from the standard Wigner-Dyson distribution~\cite{Cotler2017Chaos,livan2018}. In contrast, the spectral form factor of the RSED closely follows the behavior of that of \( u \). Further details can be found in Supplementary Note~10.}

\vspace{1em}
\noindent {\bf Negligible OTOC.}\\
We first show that if elements of $u$ have negligible magnitudes in the computational basis, {i.e.,} the diagonalizing basis of $O_a$, then {an individual realization of} the RSED has negligible OTOCs and thus cannot be distinguished from chaotic unitary evolutions via OTOC. {Here, negligible, or $\operatorname{negl}(n)$ appeared below, means the magnitude of a quantity decays faster than inverse of any polynomial function of $n$.}

\textit{Theorem 1:}
OTOCs with local operators are negligible {with probabilities higher than $1-\operatorname{negl(n)}$} in the system size $n$ for {an individual realization of} the RSED with an embedded unitary operator {$u$} of the dimension $2^k$ with $k=\omega(\log n)$ {(sampled from an ensemble)} if the maximum (averaged) magnitude of elements of the embedded operator {$u$} is $O(2^{-k/2})$.

\textit{Proof:} Details are in {Theorem 5} of {Supplementary Note~3}. 

Such $u$ naturally includes general chaotic dynamics following the random matrix theory, whose time evolution operators have the matrix elements of order $O(2^{-k/2})$ {for all time $t\geq t_*$ for some constant $t_*>0$}. {In literature, such $t_*$ is called the intermediate time regime for chaotic dynamics~\cite{Garc_a_Mata_2023}.} 

Notably, non-chaotic $u$ can also have such property. An example is the product of Hadamard gates $H^{\otimes k}$ with a random sign operator $P$ in the subsystem, namely $u=H^{\otimes k}P$. The matrix elements of $u^t$ are on average order of $2^{-k/2}$ for $t\geq t_* \approx 1$ (See Supplementary Note~7). Thus, this RSED is expected to demonstrate negligible OTOCs for all $t\gtrsim 1$, according to {Theorem} 1. Indeed, we numerically confirm that individual realizations of the RSED exhibit negligibly small $O_{VW}$ as shown in Figure~\ref{fig:numerics}(a,b). By passing, we mention that without the sign randomization $P$, $u= H^{\otimes k}$ alone cannot produce pseudochaotic dynamics and OTOCs are not suppressed as the matrix elements of $u^t$ do not persistently scale with $2^{-k/2}$ (Supplementary Note~8).

We also compute OTOCs of the RSED by embedding the Pauli SYK model~\cite{Hanada2024}, which is chaotic. As expected, this RSED demonstrates vanishing OTOCs, see Figure~\ref{fig:numerics}(c,d). {Importantly, the late-time saturated values of \( O_{VW} \) for $u=H^{\otimes k}P$ scale as \(\operatorname{negl}(n)\), as clearly demonstrated in the log-log plot of \( O_{VW} \) versus system size \( n \) (Figure~\ref{fig:OTOC-sys-scale}).} Minor numerical details and additional data are available in {Supplementary Note 7}. By passing, we note that OTOCs {at finite temperatures} and {those} with non-local Pauli operators are also negligible {(Supplementary Notes~4,8)}.

{The exponential decay of OTOCs in evolution time and their associated exponents, known as Lyapunov exponents, are also well-established signatures of quantum chaotic systems~\cite{Maldacena2016chaos,Garc_a_Mata_2023}. However, for systems governed by nonlocal Hamiltonians, such as the pseudochaotic dynamics considered here, a well-defined Lyapunov exponent does not exist. Further details on this issue are provided in Supplementary Note~6.}

\begin{figure}[t]
    \includegraphics[width=\linewidth]{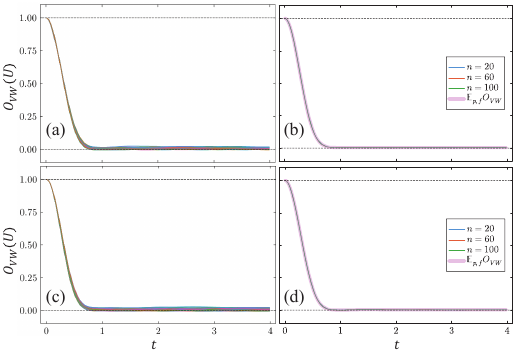}
    \centering
    \caption{
    Time dependence of OTOCs $O_{VW}$ of (a,b) the Pauli SYK model and (c,d) $H^{\otimes k}P$ with $V=Z_i$ and $W=Z_j$ with $i\neq j$. (a,c) $O_{VW}$ of independent realizations. (b,d) averaged $O_{VW}$ over random subset isometries and random realizations of the embedded $u$'s for various system sizes $n$. $\mathbb{E}_{p,f} O_{VW}$ in the caption denotes that the averaged OTOCs are computed using the closed formula in {Method} Eq.~\eqref{eq:closed-form-ZZ-OTOC}. For all cases, the subspace dimension is $2^k$ with $k=10$.}
    \label{fig:numerics}
\end{figure}

\begin{figure}[t]
    \includegraphics[width=0.8\linewidth]{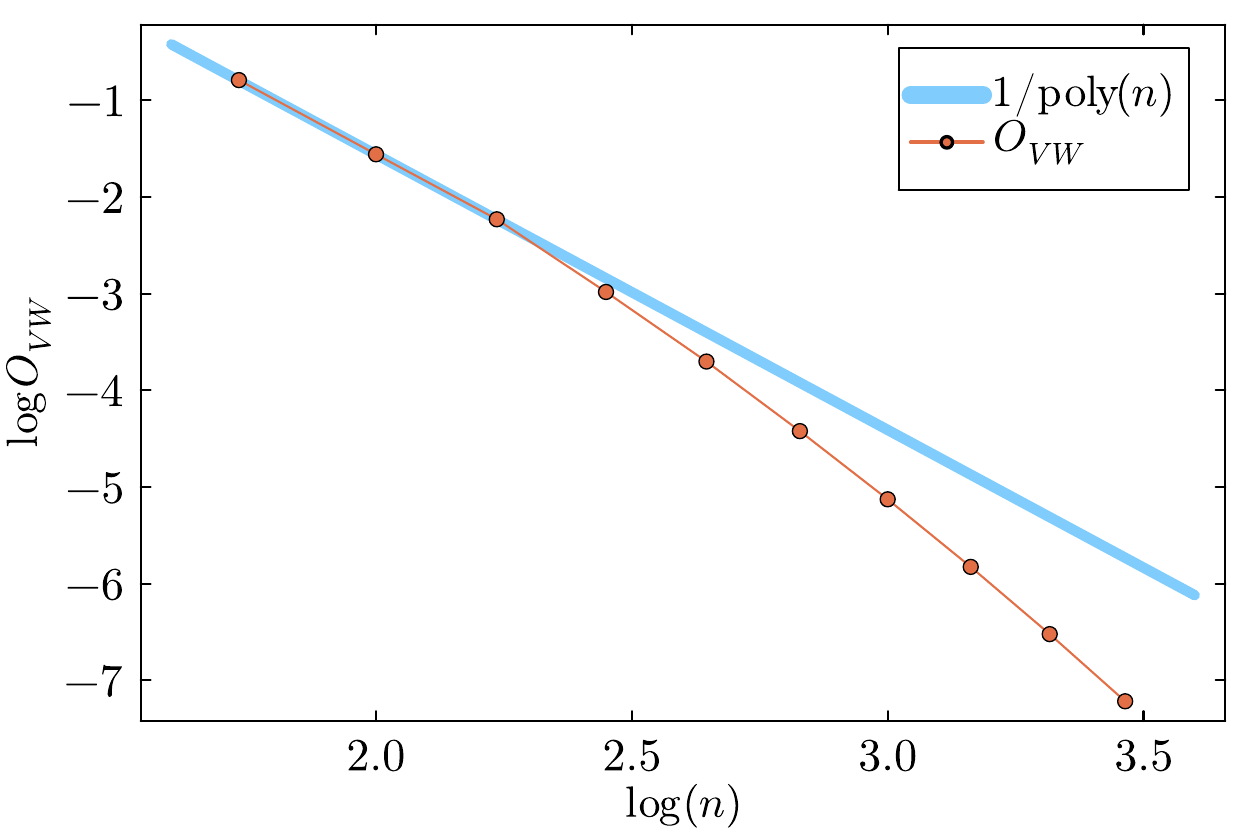}
    \centering
    \caption{
    Scaling of late-time OTOCs $O_{VW}$ for random-phase Hadamard gates \( u=H^{\otimes k}P \). In evaluating OTOCs Eq.\eqref{eq:def-OTOC}, we chose \( V = Z_i \), \( W = Z_j \), \( i \neq j \), and \( k = (\log n)^2 \). The OTOCs are averaged over random isometries and random phase gates \( P \). The time \( t \) at which the OTOC values are extracted is fixed to \( t = 4 \) for all system sizes \( n \). Since the OTOCs approaches to the infinite-time values for \( t \gtrsim 1 \), any such value of \( t \) is sufficient to capture the late-time behavior. If the OTOCs scaled as an inverse polynomial in \( n \), the data would appear as straight lines in the log-log plot. However, the observed curve (red solid line) is concave, indicating that the OTOCs decay faster than any inverse polynomial. The linear function fit with the numerical data (blue straight line) has the slope $\approx -2.85$ in this log-log plot.}
    \label{fig:OTOC-sys-scale}
\end{figure}

\vspace{1em}
\noindent {\bf Pseudorandom State Generator.}\\
We first demonstrate that RSED with chaotic $u$ is a pseudorandom state generator. More precisely, we show that if an ensemble of $u$ generates a pseudorandom state ensemble in the subsystem, then the corresponding ensemble of $U$ in Eq.~\eqref{eq:full-evolution} produces a pseudorandom state ensemble in the entire Hilbert space.

\textit{Theorem 2:}
RSEDs with an ensemble of embedded unitary operators that generate a pseudorandom state ensemble in the subspace with a negligible error generate a pseudorandom state ensemble in the entire space.

\textit{Proof:}
This is proven in {Theorem 6} of {Supplementary Note 9}.

Such an ensemble of chaotic $u$ can be constructed by sampling time evolution operators from a {single, fixed} chaotic $u$, provided that the time interval exceeds its relaxation time. This immediately implies that the corresponding $U$ produces a pseudorandom state ensemble by sampling states in time. This is nicely parallel to the generation of the Haar random state ensemble by sampling states in the time trajectory of a state under chaotic dynamics at a sufficiently long time interval~\cite{Schiulaz_2019}.

Next, we consider an ensemble of $u$ with negligibly small elements with the unbiased mean magnitude of $2^{-k/2}$, e.g., {$u^t$ for $t \geq t_* \approx 1$ with} $u = H^{\otimes k}P$, and show that the corresponding ensemble of RSED can also produce a pseudorandom state ensemble. Hence, such RSED serves as another example of pseudochaotic dynamics. We highlight that the subspace dynamics $u$ does not need to be ergodic, as illustrated in Figure~\ref{fig:overview}(c).

\textit{Theorem 3:}
Let $\mathcal{E}_k$ be an ensemble of unitary operators in {a $k$-qubit subsystem with dimension $K=2^k$}. Let us assume that for all $u\in\mathcal{E}_k$, there exists $\epsilon>0$ such that 
\begin{equation}\label{eq:magnitude-of-prob-for-randomness}
    \operatorname{Pr}\left[ \abs{u_{b,b'}}^2 \geq K^{-\epsilon} \right] \leq \operatorname{negl}(n)
\end{equation}
{for all $b$ and $b'$}. In addition, let us assume that $\mathbb{E}_{u\sim\mathcal{E}_k}\left[ \abs{u_{b,b'}}^2 \right]=K^{-1}$ holds for all $b$ and $b'$. Then, an ensemble of RSEDs with $\mathcal{E}_k$ generates an ensemble of pseudorandom states.

\textit{Proof:}
Let us consider an initial computational state $\ket{p(ba)}$. This evolves under the subsystem embedded dynamics $U$ with an embedded dynamics $u$ as
\begin{equation}\label{eq:evolved-state}
    U\ket{p(ba)} = \sum_{b'\in\{0,1\}^k}u_{b',b}(-1)^{f(ba)+f(b'a)}\ket{p(b'a)}.
\end{equation}
Let $\rho$ be {Hybrid 3} of Ref.~\cite{aaronson2023quantum}, and $\sigma$ be the ensemble average of $U\ket{p(ba)}$ over $\mathcal{E}_k$, random permutations $p$, and random functions $f$. Then, the triangular inequality gives 
\begin{equation}
    \operatorname{TD}\left(\sigma,\rho_\mathrm{Haar}\right) \leq \operatorname{TD}\left(\rho,\rho_\mathrm{Haar}\right) + \operatorname{TD}(\rho,\sigma)
\end{equation}
with $\rho_\mathrm{Haar}=\int d\psi_\mathrm{Haar}\ketbra{\psi}{\psi}^{\otimes t}$. The first term on the right-hand side is negligible due to {Lemma 3} of Ref.~\cite{aaronson2023quantum}. In addition, the second term $\operatorname{TD(\rho,\sigma)}$ is is negligible due to the assumption of $\mathbb{E}_{u\sim\mathcal{E}_k}\left[ \abs{u_{b,b'}}^2 \right]=K^{-1}$ as shown in the proof of {Theorem 7} of {Supplementary Note~9}.

While unitary operators in a general random ensemble have unbiased elements, it is not a necessary condition for a pseudochaotic dynamics. Indeed, even when elements are biased, RSED is pseudochaotic if $u$ generates maximal relative entropy of coherence~\cite{Baumgratz2014} in computational basis with a negligible deviation. This sufficient condition is consistent with the necessary condition of $\omega(\log n)$ coherence introduced in {Ref.}~\cite{haug2023pseudorandom}. More details can be found from {Theorem 8} of {Supplementary Note~11}. 

\vspace{1em}
\noindent {\bf Quantum Circuit and Its Resources}\\
We present an explicit quantum circuit for pseudochaotic RSED and the resources required to implement it.

The circuit consists of three steps, as shown in Figure~\ref{fig:circuit}. The first and last steps apply {quantum secure pseudorandom function and permutation in the entire system} with {$\operatorname{polylog}(n)$ depth circuits under the assumption of the sub-exponential hardness of the Learning with Errors (LWE) problem~\cite{Banerjee2012,Hosoyamada2019,Arunachalam2021,Zhao2024}.} In the middle, {only} the subsystem evolves under dynamics generating nearly maximal coherence in the computational basis. The generation of maximal coherence can be achieved by products of Hadamard gates, so the time complexity for the middle step is $\Omega(1)$.

\begin{figure}[ht]
    \includegraphics[width=\linewidth]{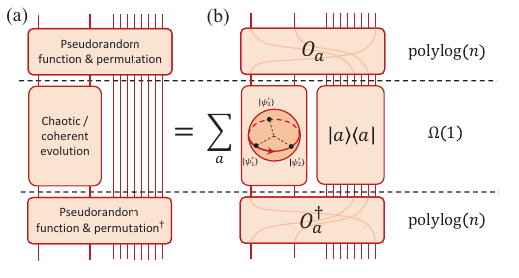}
    \centering
    \caption{Schematic circuit for RSED. Coherent evolution {in (a)} refers to unitary dynamics generating nearly maximal coherence in the subsystem. {The sum of subsystem dynamics conjugated by random isometries {in (b)} can be implemented by} pseudorandom function and permutation {in the entire space in (a)}. {This identity can be derived by inserting a resolution identity operator in the computational basis at the straight lines of the middle step in (a). See Supplementary Note~1 for details.}}
    \label{fig:circuit}
\end{figure}

The pseudochaotic RSED can be turned into a truly chaotic dynamics by increasing the number of entangling and non-Clifford gates in {random permutation and function}
(the first and last steps in Figure~\ref{fig:circuit}(a)), coherent gates in the subsystem dynamics{, and the size of the subsystem} (the middle step in Figure~\ref{fig:circuit}(a)), which brings the ensemble of resulting states closer to Haar random states. More explicitly, the exponential decay of OTOC {in system size}, which is expected for general quantum chaotic systems, can be achieved once pseudorandom isometries are replaced by truly random unitaries by using exponentially many non-Clifford and entangling gates and embedding $\Omega(n)$ Hadamard gates~\cite{lee2024fast}. On the other hand, using only one of these resourceful gates cannot decrease OTOCs and the trace distance with Haar random states. Thus, such a process cannot turn the RSED into chaotic dynamics. 

\textit{Theorem 4:}
Each of entanglement, magic, and coherence of a pseudochaotic RSED can be increased independently without changing other resources and making the RSED chaotic.

\textit{Proof:}
Entanglement, magic, and coherence can be controlled independently by attaching random Clifford gates, random T-gates, and random Hadamard gates, respectively, to the end of the circuit in Figure~\ref{fig:circuit}. Details can be found in {Supplementary Note~12}.

\vspace{1em} 
\noindent{\large{\bf Discussion}}\\
\noindent
In this work, we propose a new concept called pseudochaotic dynamics, which is a non-chaotic dynamics that cannot be distinguished from the maximally chaotic dynamics using polynomial resources. We further introduce the RSED as a systematic way to construct pseudochaotic dynamics and pseudorandom states. {This dynamics can be implemented by pseudorandom permutation and function, which can be implemented by polylogarithmic depth circuits~\cite{NAOR1999,Banerjee2012,Zhandry2015,Hosoyamada2019}, for example in Rydberg atoms or ion-trapped qubits~\cite{Evered2023,Bluvstein2024}. Using this, we expect that pseudochaotic dynamics can be realized in near-term devices with a few dozens of qubits (Supplementary Note~13).} {Although it is not the primary focus of this work, an RSED can generate an approximate state \( t \)-design with the shallowest circuit depth among currently known protocols~\cite{Brand_o_2016, Nakata2017, Ho_2022, Cotler2023, feng2024, haah2024, schuster2024}, which will be detailed separately in Ref.~\cite{lee2024fast}.} {These together} make our RSED highly efficient for tasks such as classical shadow tomography~\cite{Huang_2020}, benchmarking quantum circuits~\cite{Helsen2022}, and even studying black holes~\cite{yoshida2017,Yoshida2019,bouland2019computational, Piroli_2020}.

{Let us highlight the distinctions between our work and previous studies~\cite{aaronson2023quantum,gu2023little,CHAMON2022169086,Chamon2024}. First, our approach clearly contrasts with prior investigations of pseudorandom quantum states~\cite{aaronson2023quantum,gu2023little}, which primarily focused on quantum resource requirements. Instead, we emphasize the dynamical properties such as OTOCs of quantum circuits that generate pseudorandom states. Second, our construction of a pseudorandom state generator is distinct from prior work based on random gates~\cite{Chamon2024}, which exhibit suppression of OTOCs on average by the operator mean field theory~\cite{CHAMON2022169086}. In contrast, our pseudochaotic RSED exhibit suppression of OTOCs for each individual realization. Additionally, we assume the hardness of the LWE problem which is believed to be secure against quantum attacks~\cite{Arunachalam2021,aaronson2023quantum,Zhao2024}, {while Ref.~\cite{Chamon2024} relies on {the assumption that security against a {classical} adaptive chosen-plaintext and chosen-ciphertext attack implies {quantum} security}}. Third, our work introduces a Hamiltonian-based RSED for generating pseudorandom states, opening the door to their realization in analog quantum simulators. This stands in sharp contrast to previous works~\cite{aaronson2023quantum,gu2023little,Chamon2024}, which rely on quantum circuits implemented on digital quantum computers. Lastly, our pseudochaotic RSED achieves the known lower bound on circuit depth for generating pseudorandom states~\cite{aaronson2023quantum,Chamon2024}.} Finally, we note a recent related work that introduces a similar concept of pseudochaos~\cite{gu2024chaos}. While our definition of pseudochaotic dynamics only requires negligible OTOCs and the generation of a pseudorandom state ensemble, Ref.~\cite{gu2024chaos} imposes an additional constraint on the definition of pseudochaos to avoid producing states with high entanglement and magic. Notably, our construction of RSED encompasses the pseudo-Gaussian unitary ensemble introduced in Ref.~\cite{gu2024chaos}, by embedding an ensemble of unitaries whose eigenvalues follow Wigner's semicircle distribution without level repulsions. Clarifying the precise relationship between the two definitions of pseudochaos remains an important direction for future work.

We finish by discussing interesting future research directions. First, it will be interesting to clarify the relation between the two properties, having negligible OTOCs and generating pseudorandom states, of the pseudochaotic dynamics are related. {Second, it would be also interesting to study dynamical properties of RSEDs with various embedded Hamiltonians including integrable ones, which could potentially lead to the discovery of a new class of quantum many-body dynamics.} Another interesting question is to investigate whether typical pseudochaotic dynamics can be used to construct pseudorandom unitaries and whether these dynamics are difficult to simulate with classical algorithms. If both are true, the quantum advantage in random circuit sampling, which relies on typical circuits being close to Haar-random unitaries~\cite{Bremner2016,bouland2019,movassagh2023}, could be demonstrated with significantly lower circuit depth by replacing them with pseudochaotic dynamics. Answering these questions could, therefore, provide a new perspective on the connection between quantum computational advantage and quantum chaos~\cite{Brand_o_2016,Harrow2023}. 


\vspace{1em}  
\noindent{\large{\bf Methods}}\\
\noindent {\bf Out-of-time ordered correlators}\\
A Poisson-bracket out-of-time ordered correlator quantifies how a local operator $W$ spreads under a unitary evolution $U$ by measuring the magnitude of parts of $UWU^\dagger$ commuting with another local operator $V$. Formally, it is defined as
\begin{equation}
    C_{VW}(U) = \frac{1}{2^{n+1}}\tr\left([UWU^\dagger,V]^\dagger[UWU^\dagger,V]\right). 
\end{equation}
If $U$ does not spread $W$ much, then $V$ at almost everywhere commutes with $UWU^\dagger$. Thus, $C_{VW}(U)$ is vanishing. On the other hand, if $U$ is chaotic so makes $UWU^{\dagger}$ be a sum of arbitrary non-local Pauli strings, then $C_{VW}$ saturates to unity. When local operators satisfy $V^2=W^2=I$ like as Pauli operators, then it becomes
\begin{equation}
    C_{VW}(U) = 1 - \real [O_{VW}(U)].
\end{equation}
Here, $O_{VW}(U)$ is the OTOC used in the main text. Any chaotic $U$ makes $O_{VW}(U)$ vanishingly small.

\vspace{1em}  
\noindent {\bf Calculation of OTOCs}\\
Estimation of $O_{VW}(U)$ of a chaotic system is generally challenging as it requires to simulate the system. However, for the RSED, it is possible to calculate $O_{VW}(U)$ both analytically and numerically. Here, we compute $O_{VW}(U)$ with $V=Z_i$ and $W=Z_j$ with $i\neq j$. 

Let $p$ and $f$ be a random permutation and function, respectively. Then, $O_{VW}(U)$ is given by
\begin{equation}
\begin{split}
    O_{VW}(U) 
    &= \frac{1}{2^n} \sum_{\{a_i\}_{i=1}^4,\{b_i\}_{i=1}^8} (-1)^{\sum_{i=1}^8 f(b_i a_i)}\\
    &\times V_{p(b_8 a_4),p(b_1 a_1)}U_{b_1,b_2}W_{p(b_2a_1),p(b_3a_2)}U^\dagger_{b_3,b_4}\\
    &\times V_{p(b_4a_2),p(b_5a_3)}U_{b_5,b_6}W_{p(b_6a_3),p(b_7a_4)}U^\dagger_{b_7,b_8}.
\end{split}
\end{equation}
Here, $\{b_i\}$ and $\{a_i\}$ are summed over $\{0,1\}^k$ and $\{0,1\}^{n-k}$, respectively. Since $V=Z_i$ and $W=Z_j$, this can be simplified as
\begin{equation}
    \begin{split}
    O_{VW}(U) 
    &= \frac{1}{2^n} \sum_{a,\{b_i\}_{i=1}^4} U_{b_1,b_2} U_{b_2,b_3}^\dagger U_{b_3,b_4} U_{b_4,b_1}^\dagger \\
    &\times (-1)^{[p(b_1a)]_i+[p(b_2a)]_j+[p(b_3a)]_i+[p(b_4a)]_j}.
\end{split}
\end{equation}
Numerically, this can be approximately computed by the importance sampling on $a\in\{0,1\}^{n-k}$. The ensemble average of $O_{VW}(U)$ over $f$ is given by 
\begin{equation}\label{eq:closed-form-ZZ-OTOC}
    \mathbb{E}_f [O_{VW}(U)] = \frac{1}{2^k} \tr \left( (U.*U.*U)U^\dagger \right),
\end{equation}
since that of $(-1)^{[p(b_1a)]_i+[p(b_3a)]_i}$ is $\delta_{b_1,b_3}$. Here, $A.*B$ is the element-wise multiplication of $A$ and $B$. More details are deferred to {Supplementary Notes~2 and 3} .

\vspace{1em}  
\noindent {\bf Pseudorandom state ensemble}\\
A pseudorandom state ensemble $\mathcal{E}$ is an ensemble of states that cannot be distinguished by any polynomial copies of states and any poly-time quantum algorithms. For any $t=\operatorname{poly}(n)$, there is no poly-time quantum algorithm $\mathcal{A}$ that satisfies 
\begin{equation}
    \abs{\mathcal{A}(\rho)-\mathcal{A}(\sigma)}
    \geq
    \frac{1}{O(\operatorname{poly}(n))}
\end{equation}
with $\rho = \mathbb{E}_{\phi\sim\mathcal{E}}\left[\ketbra{\phi}{\phi}^{\otimes t}\right]$ and $\sigma=\mathbb{E}_{\psi\sim\mathrm{Haar}}\left[\ketbra{\psi}{\psi}^{\otimes t}\right]$.

\vspace{1em}  
\noindent {\bf Data Availability}\\
{The authors declare that the main data supporting the findings of this study are available within the article and its Supplementary Information files. Source data have been deposited in the Mendeley Data (\href{https://data.mendeley.com/datasets/h7gsjtv27p/1}{DOI:10.17632/h7gsjtv27p.1})(Ref.~\cite{Lee2025pseudodata}).}

\vspace{1em}  
\noindent {\bf References}\\

%


\vspace{1em}  
\noindent {\bf Acknowledgements}\\
We thank Changhun Oh for helpful discussions. W.L. and G.Y.C. are supported by Samsung Science and Technology Foundation under Project Number SSTF-BA2002-05 and SSTF-BA2401-03, the NRF of Korea (Grants No.~RS-2023-00208291, No.~2023M3K5A1094810, No.~2023M3K5A1094813, No.~RS-2024-00410027, No.~RS-2024-00444725) funded by the Korean Government (MSIT), the Air Force Office of Scientific Research under Award No.~FA2386-22-1-4061, No. FA23862514026., and Institute of Basic Science under project code IBS-R014-D1. H.K. is supported by the KIAS Individual Grant No. CG085302 at Korea Institute for Advanced Study and National Research Foundation of Korea (Grants No. RS-2023-NR119931, No. RS-2024-00413957, and No. RS-2024-00438415) funded by the Korean Government (MSIT).

\vspace{1em}  
\noindent {\bf Author Contributions}\\
W.L., H.K., and G.Y.C. conceived and designed the project. W.L. performed theoretical analyses and calculations under the supervision of H.K. and G.Y.C. All authors contributed to discussions and to writing and revising the manuscript.

\vspace{1em}  
\noindent {\bf Competing Interests}\\
The authors declare no competing interests.

\end{document}


\begin{center}
    {\color{gray}\large Supplementary Information for}
\end{center}

\title{Pseudochaotic Many-Body Dynamics as a Pseudorandom State Generator}
\author{Wonjun Lee}
\email{wonjun1998@postech.ac.kr}
\affiliation{Department of Physics, Pohang University of Science and Technology, Pohang, 37673, Republic of Korea}
\affiliation{Center for Artificial Low Dimensional Electronic Systems, Institute for Basic Science, Pohang, 37673, Republic of Korea}

\author{Hyukjoon Kwon}
\email{hjkwon@kias.re.kr}
\affiliation{School of Computational Sciences, Korea Institute for Advanced Study, Seoul 02455, South Korea}

\author{Gil Young Cho}
\email{gilyoungcho@kaist.ac.kr}
\affiliation{Department of Physics, Korea Advanced Institute of Science and Technology, Daejeon 34141, Korea}
\affiliation{Center for Artificial Low Dimensional Electronic Systems, Institute for Basic Science, Pohang, 37673, Republic of Korea}
\affiliation{Asia-Pacific Center for Theoretical Physics, Pohang, 37673, Republic of Korea}

\maketitle
\tableofcontents

\newpage
\clearpage
\setcounter{page}{1}
\section*{Supplementary Figures}
\hypertarget{fig:Hada}{}
\subsection*{Supplementary Figure 1: Time dependence of OTOCs with embedded Hadamard gate}
\begin{figure}[h!]
    \includegraphics[width=16cm]{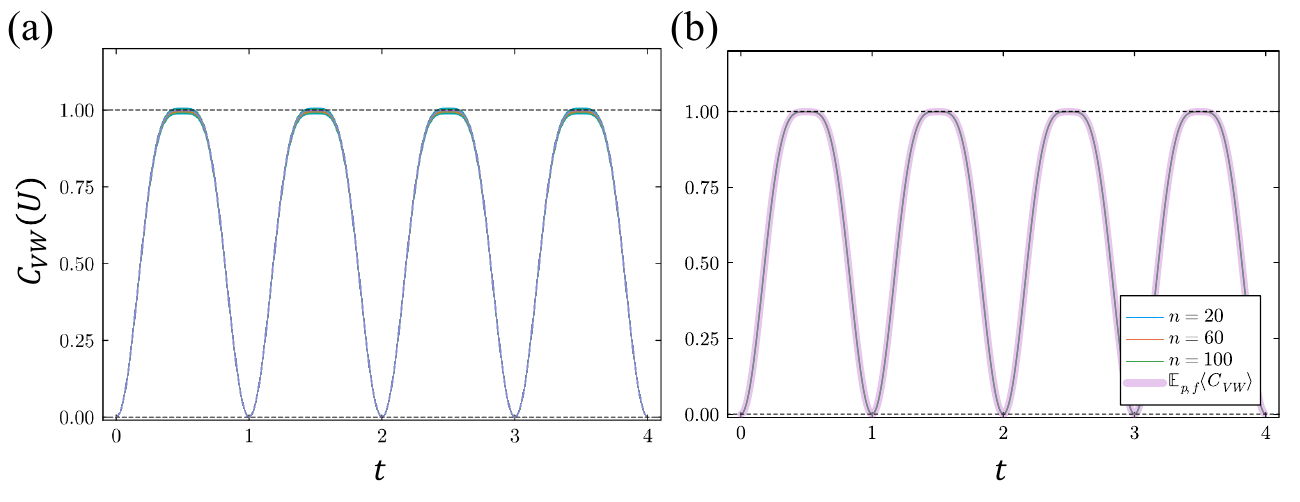}
    \begin{justify}
        Time dependence of Poisson bracket OTOCs of $u^\tau = H^\tau$ with the Hadamard gate $H$ and local operators $V=Z_i$ and $W=Z_j$ with $i\neq j$. (a) Poisson bracket OTOCs of $10^2$ independent realizations of RSED. (b) Poisson bracket OTOCs averaged over random subset isometries and Hamiltonian for various system sizes $n$. $\mathbb{E}_{p,f}\langle C_{VW}\rangle$ in the captions means OTOCs are evaluated using the analytic formula in Eq.~\eqref{eq:ZZ-OTOC-AVGRSP}. The subspace size is set to be $k=10$.
    \end{justify}
\end{figure}

\newpage
\hypertarget{fig:HadaP-stat}{}
\subsection*{Supplementary Figure 2: Level statistics of the RSED}
\begin{figure}[h]
    \centering
    \begin{subfigure}[]
        \centering
        \includegraphics[width=0.4\linewidth]{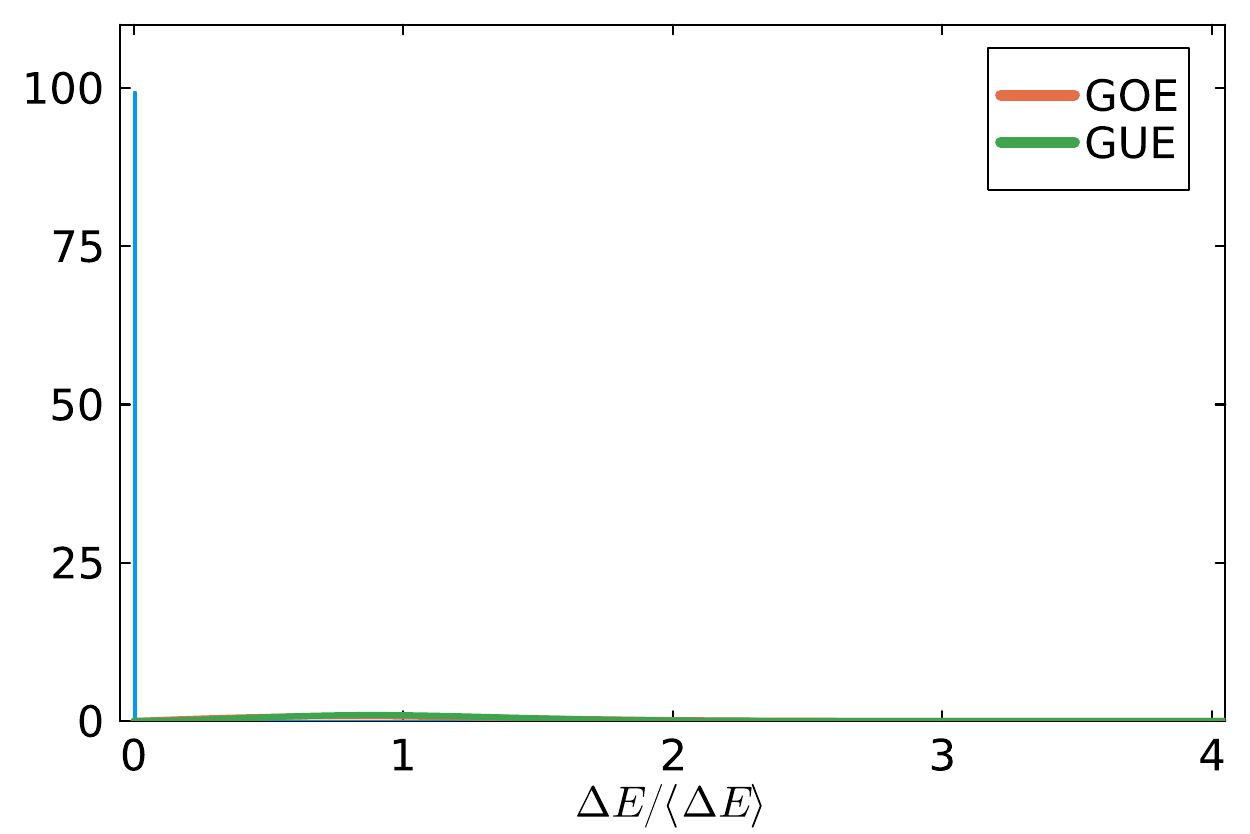}
    \end{subfigure}
    \begin{subfigure}[]
        \centering
        \includegraphics[width=0.4\linewidth]{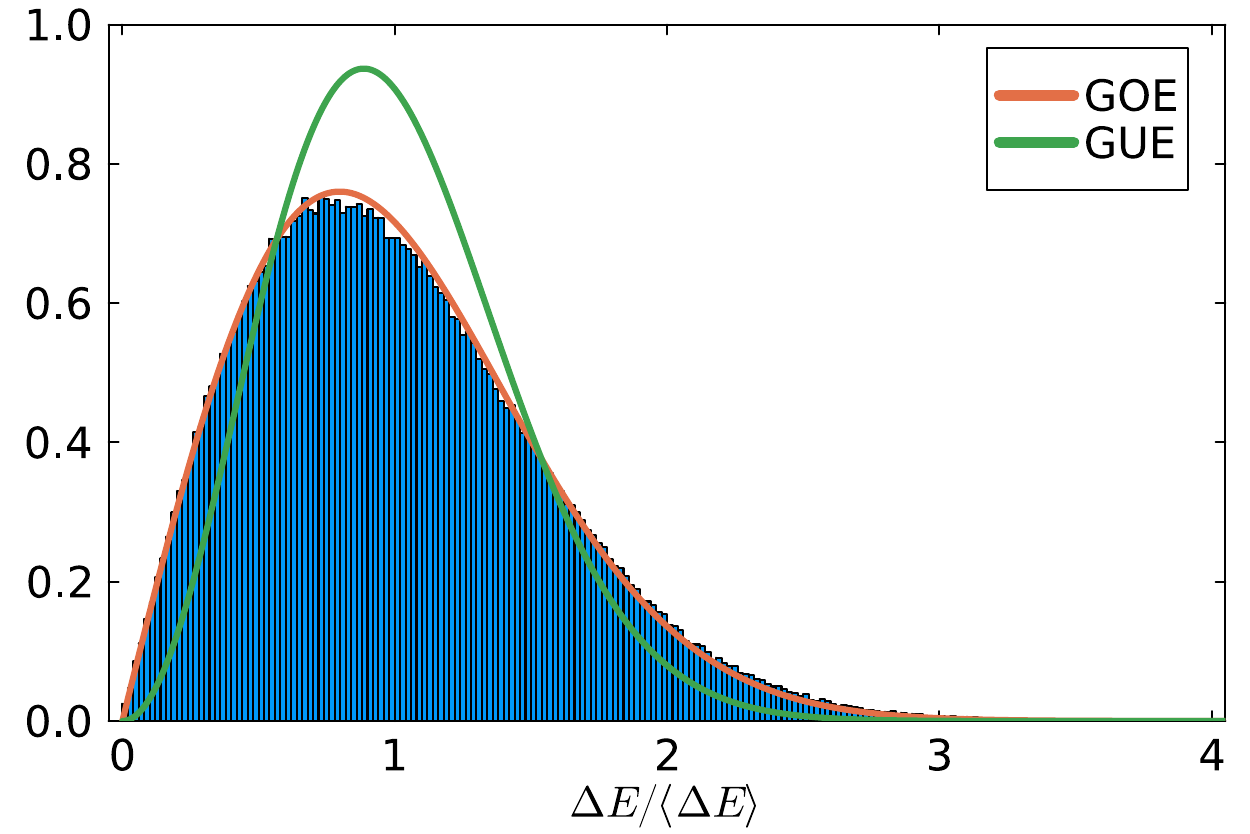}
    \end{subfigure}
    \begin{justify}
        (a) The level statistics of the RSED with the embedded random phase Hadamard gate with the Wigner-Dyson distributions for the Gaussian orthogonal ensemble (GOE) and the Gaussian unitary ensemble (GUE). The system and subsystem sizes are taken as $n=13$ and $k=10$, respectively. (b) The right figure statistics excludes the peak at the zero due to the exponential energy degeneracies of the RSED to show the level statistics of the embedded Hamiltonian separately and explicitly.
    \end{justify}
\end{figure}

\newpage
\hypertarget{fig:HadaP}{}
\subsection*{Supplementary Figure 3: Time dependence of OTOCs with embedded random phase Hadamard gate}
\begin{figure}[ht]
    \includegraphics[width=8cm]{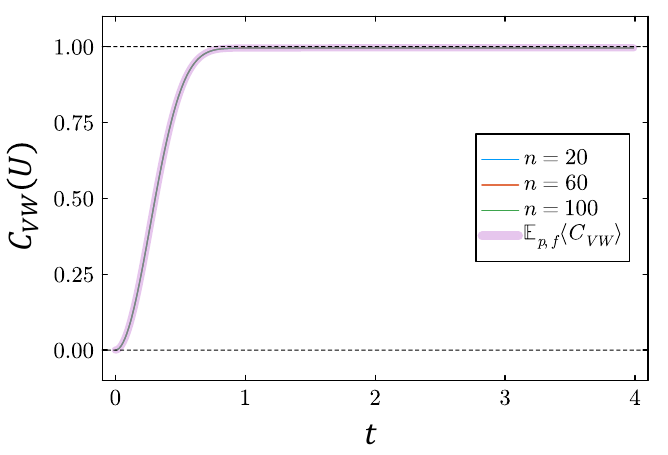}
    \centering
    \begin{justify}
        Time dependence of Poisson bracket OTOCs of $u^\tau = (H^{\otimes k}P)^\tau$ with the random sign gate $P$, the Hadamard gate $H$, and local operators $V=Z_i$ and $W=Z_j$ with $i\neq j$ averaged over random subset isometries and Hamiltonian for various system sizes $n$. $\mathbb{E}_{p,f}\langle C_{VW}\rangle$ in the captions means OTOCs are evaluated using the analytic formula in Eq.~\eqref{eq:ZZ-OTOC-AVGRSP}. In both cases, the subspace size is set to be $k=10$.
    \end{justify}
\end{figure}

\newpage
\hypertarget{fig:Pauli-SYK}{}
\subsection*{Supplementary Figure 4: Time dependence of OTOCs with embedded Pauli-SYK model}
\begin{figure}[ht]
    \includegraphics[width=16cm]{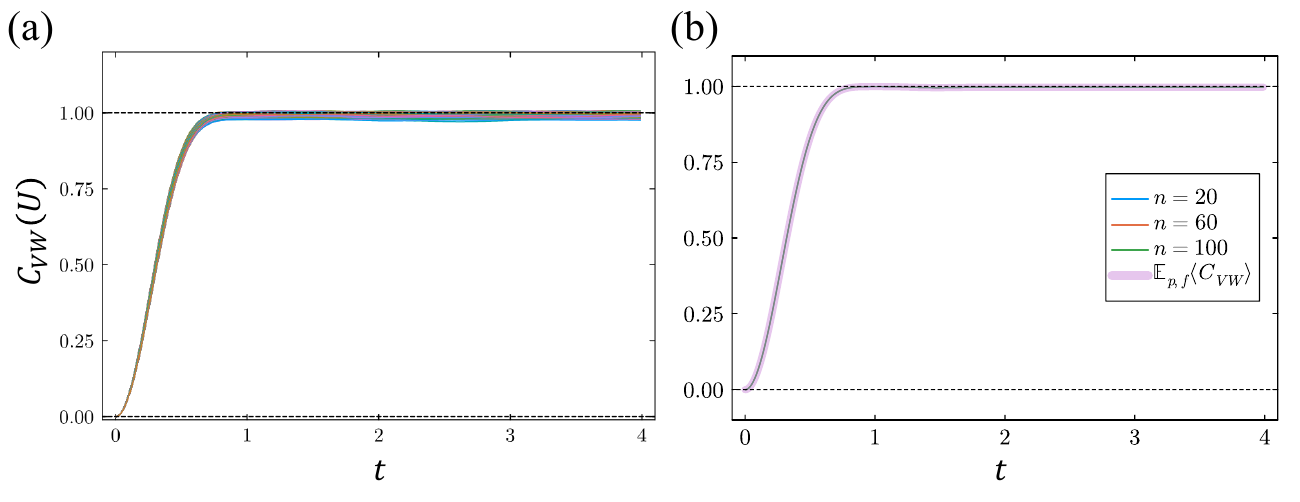}
    \centering
    \begin{justify}
        Time dependence of Poisson bracket OTOCs of the Pauli SYK model with local operators $V=Z_i$ and $W=Z_j$ with $i\neq j$. (a) OTOCs of $10^3$ independent realizations of RSED and Hamiltonian. (b) OTOCs averaged over random subset isometries and Hamiltonian for various system sizes $n$. $\mathbb{E}_{p,f}\langle C_{VW}\rangle$ in the captions means OTOCs are evaluated using the analytic formula in Eq.~\eqref{eq:ZZ-OTOC-AVGRSP}. The subspace size is set to be $k=10$.
    \end{justify}
\end{figure}

\newpage
\hypertarget{fig:HadaS-finite-temp}{}
\subsection*{Supplementary Figure 5: Time dependence of finite temperature OTOCs}
\begin{figure}[ht]
    \centering
    \includegraphics[width=0.8\linewidth]{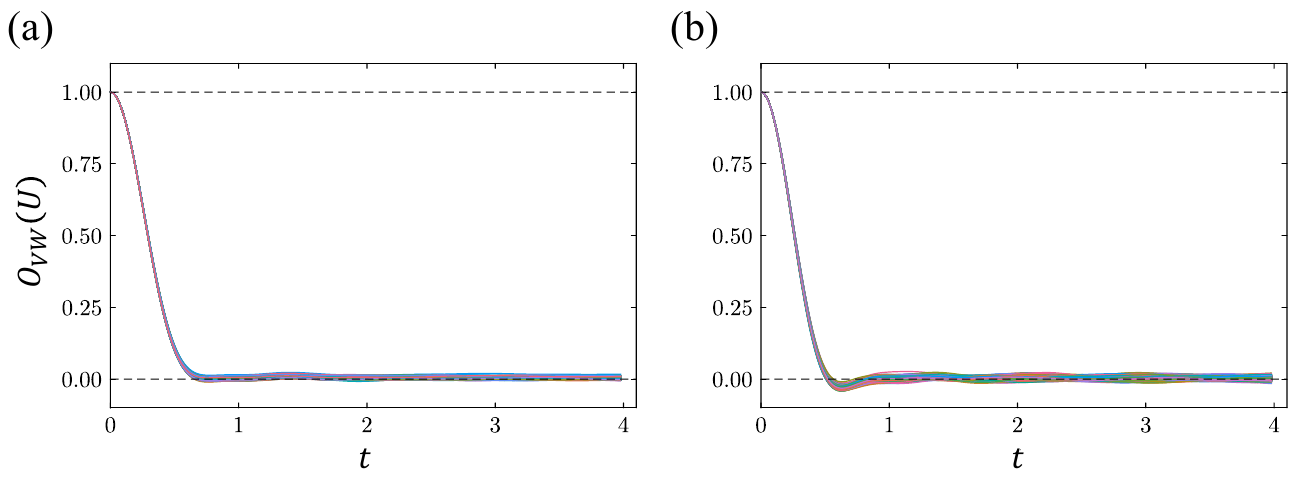}
    \begin{justify}
        Numerics of finite temperature OTOCs of random phase Hadamard gates with $V=Z_i$, $W=Z_j$, and (a) $\beta=1$ (b) $\beta=10$. Each line corresponds to $O_{VW}$ of independent realizations of RSEDs.
    \end{justify}
\end{figure}

\newpage
\section*{Supplementary Notes}
\subsection*{1. Random subsystem Embedded Dynamics}
Let us consider a $n$-qubit system $\mathcal{H}$ and its a random subspace $S_{p,a}=\{p(b,a)|b\in\{0,1\}^k\}$ constructed from an invertible random permutation $p:\{0,1\}^k\times\{0,1\}^{n-k}\rightarrow\{0,1\}^n$ for some constant natural number $k\in\mathbb{N}$ less than $n$. In the main text, we consider the case of $k=\omega (\log n)$ for the realization of pseudochaotic dynamics. However, in principle $k$ can take any positive integer. Here, the subscript `$a$' of $S_{p,a}$ is in $\{0,1\}^{n-k}$ and plays the role of a seed of the random permutation. In addition, we introduce a random function $f(b,a):\{0,1\}^k\times\{0,1\}^{n-k}\rightarrow\{0,1\}$. This introduces extra phase (or sign) randomness on the dynamics. We note that in practice, the random permutation and function are realized by a pseudorandom number generator which is indistinguishable from the true random with limited computational resource. The subspace can be accessed from $\mathcal{H}$ by the isometry $O_a$ defined as
\begin{equation}
    O_a \equiv \sum_{b \in\{0,1\}^k}(-1)^{f(b,a)}\ketbra{p(b,a)}{b,a},
\end{equation}
We will say the subspace $S_{p,a}$ is embedded in the entire space $\mathcal{H}$ by the isometry $O_a$.

Now, let us consider subspace dynamics embedded in $\mathcal{H}$ by $O_a$. This can be characterized by a $k$-qubit hamiltonian $h_a$ represented by a $2^k\times 2^k$ matrix. Such subsystem Hamiltonian can be embedded into $\mathcal{H}$ using $O_a$
\begin{equation}
    H_a=O_a h_a O_a^{\dagger},
\end{equation}
where $H_a$ is a $2^n\times 2^n$ matrix. We call dynamics generated by such a Hamiltonian as a random subspace embedded dynamics (RSED). Its continuous time evolution operator $U_a(t)$ is given by
\begin{equation}
    U_a(t) = e^{-i H_a t} = O e^{-i h_a t} O^{\dagger}+(1-\Pi_a)
\end{equation}
with the projector $\Pi_a=O_a O_a^\dagger$. This can be generalized to embed multiple Hamiltonians $\{h_{a_i}\}$ into $\mathcal{H}$ using isometries $\{O_{a_i}\}$ with a set of seeds $\{a_i\}\subseteq \{0,1\}^{n-k}$. We note that each seed generates non-overlapping subspace $\{S_{p,a_i}\}$. Let $\chi$ be the total number of such subspaces, then the total Hamiltonian for the entire Hilbert space $\mathcal{H}$ is given by
\begin{equation}
    H_{\{a_i\}} = \sum_i^\chi O_{a_i} h_i O_{a_i}^{\dagger}.
\end{equation}
The corresponding unitary evolution $U_{\{a_i\}}(t)$ is given by
\begin{equation}
    U_{\{b_i\}}(t)=\sum_{i=1}^\chi O_{a_i} e^{-i h_i t} O_{a_i}^{\dagger}+\left(1-\sum_{i=1}^\chi \Pi_{a_i}\right), 
\end{equation}
because of the orthogonlity $O_{a_i} O_{a_j}^{\dagger} = \delta_{a_i, a_j}$. In this work, we focus on the case of the maximal embedding of $\chi=2^{n-k}$ with a constant subspace Hamiltonian $h_a=h$.

The trajectory generated by dynamics of each $h_a$ can be densely embedded into $\mathcal{H}$ so that its dynamical properties, specifically regarding its non-periodicity, are preserved in $\mathcal{H}$, if the dimension of the subspace, $2^k$, is large enough. To clearly see this, let us suppose that $h_a$ is a chaotic Hamiltonian in $S_{p,a}$. It is obvious that $H_a$ cannot be chaotic as the space spanned by its time evolution unitary operators is exponentially smaller than the set of unitary operators in $\mathcal{H}$. At the same time, one can imagine a regime that any polynomially many copies are insufficient to reveal whether it is chaotic or not for sufficiently large $k$. If $H_a$ is in such a regime, then we call its dynamics is pseudochaotic. In the following sections, we study when this regime emerges. 

{
When all $\{u_a\}$ are given by the same unitary $u$, independent of $a$ --- which is precisely the case in our RSED construction in the main text --- then the collection $\{O_a u O_a^\dag\}$ can be executed simultaneously by $PF(u\otimes I^{\otimes(n-k)})FP^\dag$ with pseudorandom function $F$ and permutation $P$. This implementation avoids the need for any control gates conditioned on bits outside the subsystem. This can be understood by expanding $I^{\otimes(n-k)}$ into $\sum_{a\in\{0,1\}^{n-k}}\ketbra{a}$:
\begin{equation}
    \begin{split}
        PF(u\otimes I^{\otimes(n-k)})FP^\dag 
        &= \left(\sum_{\substack{b_1\in\{0,1\}^k\\a_1\in\{0,1\}^{n-k}}}(-1)^{f(b_1,a_1)}\ketbra{p(b_1,a_1)}{b_1,a_1}\right)\left(u\otimes\sum_{a\in\{0,1\}^{n-k}}\ketbra{a}{a}\right)\\
        &\quad\times \left(\sum_{\substack{b_2\in\{0,1\}^k\\a_2\in\{0,1\}^{n-k}}}(-1)^{f(b_2,a_2)}\ketbra{b_2,a_2}{p(b_2,a_2)}\right)\\
        &= \sum_{a\in\{0,1\}^{n-k}}\sum_{b_1,b_2\in\{0,1\}^k}(-1)^{f(b_1,a)+f(b_2,a)}\ketbra{p(b_1,a)}{b_1,a}u\ketbra{b_2,a}{p(b_2,a)}\\
        &= \sum_{a\in\{0,1\}^{n-k}}O_a u O_a^\dag.
    \end{split}
\end{equation}
This is illustrated in Figure 4 of the main text.
}

Before finishing this section, we introduce two functions that are useful to calculate dynamical properties of $H_a$. For a given permutation $p$, these two functions are $x_j:\{0,1\}^k\times\{0,1\}^{n-k}\rightarrow\{0,1\}^k$ and $y_j:\{0,1\}^k\times\{0,1\}^{n-k}\rightarrow\{0,1\}^{n-k}$ for some position $j\in[1,n]$ such that $p(x_j(b,a),y_j(b,a))=p(b,a)+\hat{e}_j$ for all $b\in\{0,1\}^k$ and $a\in\{0,1\}^{n-k}$ with the unit vector $\hat{e}_j$ at the site $j$. In other words, $x_j(b,a)$ and $y_j(b,a)$ are the state and seed, respectively, of the permutation that gives $p(b,a)$ with a bit-flip at $j$. They will be extensively used for computing OTOCs in later sections.

\newpage
\subsection*{2. OTOCs with $V^2=I$ and $W^2=I$}
The out-of-time ordered correlator for operators $V$ and $W$ is defined as 
\begin{equation}
    C_{VW}(U) = \frac{1}{2}\left\langle [UWU^\dagger,V]^\dagger[UWU^\dagger,V] \right\rangle_\beta,
\end{equation}
where $U$ is a time evolution operator. For Pauli like operators with $V^2=W^2=I$, it becomes
\begin{equation}
    C_{VW}(U) = 1 - \real \left\langle V U W U^\dagger V U W U^\dagger \right\rangle_\beta.
\end{equation}
The expectation value $\real \left\langle V U W U^\dagger V U W U^\dagger \right\rangle_\beta$ can be estimated, for example, by interferometric measurements~\cite{Brian2016,yao2016}. For pseudochaotic dynamics, it has a negligible value. Thus, distinguishing pseudochaotic dynamics from chaotic dynamics requires $\omega(\operatorname{poly}(n))$ copies. For later use, we define this four-point correlator at the infinite temperature as
\begin{equation}\label{eq:inf-otoc}
    O_{VW}(U) \equiv \frac{1}{2^n}\operatorname{tr}\left( V U W U^\dagger V U W U^\dagger \right).
\end{equation}
In what follows, we explicitly compute the ensemble average of $\abs{O_{VW}}^2$ over random isometries with $V$ and $W$ being on-site Pauli-like traceless matrices. We find that it is negligible in $n$ if $k$ is set by $\omega(\log n)$. This implies that the magnitude of $O_{VW}$ with each isometry is also negligible in $n$. Thus, any polynomial number of measurements cannot determine whether $O_{VW}$ is vanishing (chaotic) or not.

\newpage
\hypertarget{sec:local-OTOC}{}
\subsection*{3. OTOCs with traceless on-site operators}
\begin{theorem}\label{thm:negl-otoc-supp}
    OTOCs with local operators are negligible with probabilities higher than $1-\operatorname{negl(n)}$ in the system size $n$ for RSED with an embedded unitary operator of the dimension $2^k$ with $k=\omega(\log n)$ (sampled from an ensemble) if the maximum (averaged) magnitude of elements of the embedded operator is $O(2^{-k/2})$.
\end{theorem}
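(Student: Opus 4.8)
The plan is to compute the ensemble average $\mathbb{E}_{p,f}\,\lvert O_{VW}(U_a)\rvert^2$ explicitly and show it is $\mathrm{negl}(n)$, then invoke Markov's inequality to upgrade this to a high-probability statement about $\lvert O_{VW}(U_a)\rvert$ itself. First I would substitute the decomposition $U_a(t) = O_a e^{-ih_a t}O_a^\dagger + (1-\Pi_a)$ into the definition \eqref{eq:inf-otoc}. Since $V = Z_i$ and $W = Z_j$ are on-site traceless operators, and the embedding randomizes how the computational basis states of the subsystem are scattered across $\mathcal{H}$, the key structural input is that $U_a$ acts nontrivially only inside the $2^k$-dimensional subspace $S_{p,a}$, which is an exponentially small ``random'' slice of $\mathcal{H}$. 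Using the functions $x_j, y_j$ introduced at the end of Supplementary Note 1, I would express the matrix elements $\langle p(b,a)\rvert U_a W U_a^\dagger \lvert p(b',a')\rangle$ in terms of the embedded unitary $u$ and the sign/permutation data, tracking which terms survive after the action of the diagonal operators $V, W$.

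The second step is the combinatorial heart: expanding the trace $\mathrm{tr}(VUWU^\dagger VUWU^\dagger)$ and then $\lvert\cdot\rvert^2$ produces a sum over basis-state indices tied together by the permutation $p$ and its $\hat e_j$-shifts. Taking the expectation over the random permutation $p$ and random function $f$ collapses most index configurations: the random signs $(-1)^{f(b,a)}$ average to zero unless the $f$-labels pair up, and the random permutation makes distinct ``orbits'' of indices effectively independent and uniformly distributed, so only the diagonal (fully-paired) contractions survive. Each surviving term carries a factor controlled by the magnitude of the entries of the embedded operator, which by hypothesis is $O(2^{-k/2})$ — this is where the assumption on the maximum (or averaged) element magnitude enters, bounding sums like $\sum_{b,b'} \lvert u_{bb'}\rvert^2 \lvert u_{b b''}\rvert^2$ and higher analogues by powers of $2^{-k}$ times combinatorial counts that are at most $\mathrm{poly}(2^k)$. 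Collecting powers of $2^{-k}$ and $2^{-n}$ against the number of surviving configurations, I expect $\mathbb{E}_{p,f}\,\lvert O_{VW}\rvert^2 = O(2^{-ck})$ for some constant $c>0$ (possibly with a benign $\mathrm{poly}$ prefactor), which is $\mathrm{negl}(n)$ precisely when $k = \omega(\log n)$.

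Finally, since $\lvert O_{VW}\rvert^2 \ge 0$, Markov's inequality gives $\Pr\big[\lvert O_{VW}\rvert^2 > t\big] \le \mathbb{E}[\lvert O_{VW}\rvert^2]/t$; choosing $t = 2^{-ck/2}$ (still negligible) yields $\lvert O_{VW}\rvert \le 2^{-ck/4}$ except with probability $O(2^{-ck/2}) = \mathrm{negl}(n)$, which is the claimed statement. The main obstacle I anticipate is the bookkeeping in the second step: correctly enumerating which index contractions survive the joint $p$- and $f$-averages without over- or under-counting, and in particular handling the ``off-diagonal'' contributions where the shifted indices $p(b,a)+\hat e_i$ and $p(b',a')+\hat e_j$ coincide with un-shifted ones — these cross terms are where a naive bound would fail, and controlling them is exactly what forces the hypothesis that the embedded operator's entries are uniformly small rather than merely that $u$ is unitary. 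I would also need to confirm that the $(1-\Pi_a)$ piece of $U_a$ contributes only terms that either cancel or are subleading, since on the orthogonal complement $U_a$ acts trivially and $V, W$ there give at most $O(1)$ contributions already accounted for in the normalization.
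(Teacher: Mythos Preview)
Your proposal is correct and matches the paper's approach: bound the second moment $\mathbb{E}_{p,f}\lvert O_{VW}\rvert^{2}$ by a contraction analysis of the random signs $(-1)^{f}$ and random-permutation indices, use the $O(2^{-k/2})$ entry bound to control the surviving sums, and finish with Markov. Two small clarifications: the paper works in the maximal embedding $U=\sum_{a}O_{a}uO_{a}^{\dagger}$, so the $(1-\Pi_{a})$ residual you flag simply does not appear; and since the theorem covers general on-site $V=X_{i}\cos\alpha+Z_{i}\sin\alpha$ (not just $Z_{i}$), the paper's bookkeeping is organized as a case analysis over binary labels $c_{\ell}\in\{0,1\}$ recording whether each $V,W$ insertion is $Z$-like or $X$-like, with the $X$-like cases invoking the bit-flip functions $x_{j},y_{j}$ and Lemma~\ref{thm:average-bit-flip} --- exactly the ``cross-term'' difficulty you correctly anticipate as the main obstacle.
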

\begin{proof}

Let $V$ and $W$ be random traceless hermitian on-site operators positioned at different sites $i$ and $j$, respectively, {and satisfying $V^2=W^2=I$}. 
If the ensemble average of $\abs{O_{VW}(U)}^2$ {defined in Eq.~\eqref{eq:inf-otoc}} is $O(2^{-k})$, then Markov's inequality gives
\begin{equation}
    \operatorname{Pr}\left[\abs{O_{VW}(U)}^2\geq 2^{-(1-\epsilon)k}\right] \leq O(2^{-\epsilon k})
\end{equation}
for some constant $\epsilon>0$, meaning that any OTOC is negligible with the failure probability negligible in $n$. Then, let us compute the second moments of $O_{VW}(U)$ for random on-site traceless unitary operators $V$ and $W$.  

\begin{equation}
    \begin{split}
    \abs{O_{VW}(U)}^2
    &= \frac{1}{4^n} \sum_{\left\{b_i\right\}_{i=1}^{8}} \sum_{\left\{a_i\right\}_{i=1}^4}\sum_{\left\{b'_i\right\}_{i=1}^{8}} \sum_{\left\{a'_i\right\}_{i=1}^4}(-1)^{\sum_{i=1}^{8} \left[ f\left(b_i ,a_{\lfloor(i+1)/2\rfloor}\right)+f\left(b'_i ,a'_{\lfloor(i+1)/2\rfloor}\right)\right]} \\
    &\qquad\times [V]_{p\left(b_8 , a_4\right), p\left(b_1 , a_1\right)}[V]_{p\left(b_4 , a_2\right), p\left(b_5 , a_3\right)}\left[V^*\right]_{p\left(b'_8 , a'_4\right), p\left(b'_1 , a'_1\right)}\left[V^*\right]_{p\left(b'_4 , a'_2\right), p\left(b'_5 , a'_3\right)}\\
    &\qquad\times [W]_{p\left(b_2 , a_1\right), p\left(b_3 , a_2\right)}[W]_{p\left(b_6 , a_3\right), p\left(b_7 , a_4\right)}\left[W^*\right]_{p\left(b'_2 , a'_1\right), p\left(b'_3 , a'_2\right)}\left[W^*\right]_{p\left(b'_6 , a'_3\right), p\left(b'_7 , a'_4\right)} \\
    &\qquad\times \left[U\right]_{b_1, b_2}\left[U^*\right]_{b_4,b_3} \left[U\right]_{b_5, b_6}\left[U^*\right]_{b_8,b_7}\left[U^*\right]_{b'_1, b'_2}\left[U\right]_{b'_4,b'_3} \left[U^*\right]_{b'_5, b'_6}\left[U\right]_{b'_8,b'_7}\\
    \end{split}
\end{equation}
Now, let us consider the following local operators
\begin{equation}
    V = X_i \cos \alpha + Z_i \sin \alpha
    \quad\text{and}\quad
    W = X_j \cos \beta + Z_j \sin \beta
\end{equation}
with $\alpha,\beta\in[0,2\pi)$. While trivial to do so, we do not include the Pauli $Y$ operator in calculation, as its role is the same as the Pauli $X$ operator here. Let us consider that we have expanded Eq.~\eqref{eq:local-OTOC-square} by replacing $V$ and $W$ in terms of the Pauli $X$ or $Z$ operators. Then, $\abs{O_{VW}(U)}^2$ is given by the sum of expectation values of length-eight Pauli strings comprised of $X$ and $Z$. Each element of a Pauli operator is non-vanishing when $c$ is set by $0$ or $1$ for $Z$ and $X$ operators. Then, we have
\begin{equation}
    \begin{split}
    \abs{O_{VW}(U)}^2
    &= \frac{1}{4^n} \sum_{\left\{b_i\right\}_{i=1}^{8}} \sum_{\left\{a_i\right\}_{i=1}^4}\sum_{\left\{b'_i\right\}_{i=1}^{8}} \sum_{\left\{a'_i\right\}_{i=1}^4}\sum_{\substack{c_1,c_2,c_4,c_6\\c'_1,c'_2,c'_4,c'_6}} \Phi(\{a\},\{b\}) \Phi(\{a'\},\{b'\}) \\
    &\qquad\times [V]_{[p\left(b_1 , a_1\right)]_i+c_1, [p\left(b_1 , a_1\right)]_i}[V]_{[p\left(b_4 , a_2\right)]_i, [p\left(b_4 , a_2\right)]_i+c_4}\\
    &\qquad\times\left[V^*\right]_{[p\left(b'_1 , a'_1\right)]_i+c'_1, [p\left(b'_1 , a'_1\right)]_i}\left[V^*\right]_{[p\left(b'_4 , a'_2\right)]_i, [p\left(b'_4 , a'_2\right)]_i+c'_4}\\
    &\qquad\times [W]_{[p\left(b_2 , a_1\right)]_j, [p\left(b_2 , a_1\right)]_j+c_2}[W]_{[p\left(b_6 , a_3\right)]_j, [p\left(b_6 , a_3\right)]_j+c_6}\\
    &\qquad\times \left[W^*\right]_{[p\left(b'_2 , a'_1\right)]_j, [p\left(b'_2 , a'_1\right)]_j+c'_2}\left[W^*\right]_{[p\left(b'_6, a'_3\right)]_j, [p\left(b'_6, a'_3\right)]_j+c'_6} \\
    &\qquad\times \left[U\right]_{b_1, b_2} \left[U\right]_{b_5, b_6}\left[U\right]_{b'_4,b'_3} \left[U\right]_{b'_8,b'_7}\\
    &\qquad\times \left[U^*\right]_{b'_1, b'_2}\left[U^*\right]_{b'_5, b'_6}\left[U^*\right]_{b_4,b_3}\left[U^*\right]_{b_8,b_7}\\
    &\qquad\times\delta_{p(b_8,a_4),p(b_1,a_1)+c_1\hat{e}_i}\delta_{p(b_5,a_3),p(b_4,a_2)+c_4\hat{e}_i}\delta_{p(b_3,a_2),p(b_2,a_1)+c_2\hat{e}_j}\delta_{p(b_7,a_4),p(b_6,a_3)+c_6\hat{e}_j}\\
    &\qquad\times\delta_{p(b'_8,a'_4),p(b'_1,a'_1)+c'_1\hat{e}_i}\delta_{p(b'_5,a'_3),p(b'_4,a'_2)+c'_4\hat{e}_i}\delta_{p(b'_3,a'_2),p(b'_2,a'_1)+c'_2\hat{e}_j}\delta_{p(b'_7,a'_4),p(b'_6,a'_3)+c'_6\hat{e}_j}\\
    \end{split}
\end{equation}
with
\begin{equation}
    \Phi(\{a\},\{b\})
    = (-1)^{f(b_1,a_1)+f(b_2,a_1)+f(b_8,a_4)+f(b_3,a_2)+f(b_4,a_2)+f(b_6,a_3)+f(b_5,a_3)+f(b_7,a_4)}\\
\end{equation}
For given $\{c_1,c_2,c_4,c_6\}$, the product of delta functions,
\begin{equation}
    \delta(\{a\},\{b\},\{c\})=\delta_{p(b_8,a_4),p(b_1,a_1)+c_1\hat{e}_i}\delta_{p(b_3,a_2),p(b_2,a_1)+c_2\hat{e}_j}\delta_{p(b_5,a_3),p(b_4,a_2)+c_4\hat{e}_i}\delta_{p(b_7,a_4),p(b_6,a_3)+c_6\hat{e}_j},
\end{equation}
is non-vanishing for unique values of $(b_3,a_2)$, $(b_5,a_3)$, $(b_7,a_4)$, and $b_8$ if $a_4$ obtained from $\delta_{p(b_8,a_4),p(b_1,a_1)+c_1\hat{e}_i}$,
\begin{equation}
    a_4 = y_i^{c_1}(b_1,a_1) = 
    \begin{cases}
        a_1 & c_1=0\\
        y_i(b_1,a_1) & c_1=1
    \end{cases}
\end{equation}
agrees with that obtained from $\delta_{p(b_7,a_4),p(b_6,a_3)+c_6\hat{e}_j}$,
\begin{equation}
    a_4 = y_j^{c_6}(b_6,a_3) =
    \begin{cases}
        a_3 & c_6=0\\
        y_j(b_6,a_3) & c_6=1
    \end{cases},
\end{equation}
with 
\begin{equation}
    a_2 = y_j^{c_2}(b_2,a_1) = 
    \begin{cases}
        a_1 & c_2=0\\
        y_j(b_2,a_1) & c_2=1
    \end{cases}
    \qquad\text{and}\qquad
    a_3 = y_i^{c_4}(b_4,a_2) = 
    \begin{cases}
        a_2 & c_4=0\\
        y_i(b_4,a_2) & c_4=1
    \end{cases}.
\end{equation}
In other words, values of indices $\{b_1,b_2,b_4,b_6,b_8,a_1\}$ satisfying the following equation can only give non-vanishing contribution to $\abs{O_{VW}(U)}^2$:
\begin{equation}
    p(b_1,a_1) = p(b_8,y_j^{c_5}(b_7,y_i^{c_4}(b_4,y_j^{c_2}(b_2,a_1)))) + c_1\hat{e}_i.
\end{equation}
This equation is simplified when $c_1=c_2=c_4=c_6=0$, which gives $a_1=a_2=a_3=a_4$, $b_8=b_1$, and no constraints on $\{b_1,b_2,b_4,b_6\}$. However, in general, for given $\{b_1,b_2,b_4,b_6,a_1\}$, there may not exist $b_8$ satisfying this. Let this constraint to be $\delta(\{a\},\{b\},\{c\})$, and let indices $\{b_i\}$ and $\{a_i\}$ satisfying this constraint to be $b_i(\{c\})$ and $a_i(\{c\})$, respectively, then we have
\begin{equation}\label{eq:local-OTOC-square}
    \begin{split}
    \abs{O_{VW}(U)}^2 &= \frac{1}{4^n} \sum_{\substack{b_1,b_2,b_4,b_6\\b'_1,b'_2,b'_4,b'_6}} \sum_{a,a'} \sum_{\substack{c_1,c_2,c_4,c_6\\c'_1,c'_2,c'_4,c'_6}} \Phi(\{a(\{c\})\},\{b(\{c\})\}) \Phi(\{a'(\{c'\})\},\{b'(\{c'\})\}) \\
    &\qquad\qquad\qquad\times [V]_{[p\left(b_1 , a\right)]_i+c_1, [p\left(b_1 , a\right)]_i}[V]_{[p\left(b_4 , a_2(\{c\})\right)]_i, [p\left(b_4 , a_2(\{c\})\right)]_i+c_4}\\
    &\qquad\qquad\qquad\times\left[V^*\right]_{[p\left(b'_1 , a'\right)]_i+c'_1, [p\left(b'_1 , a'\right)]_i}\left[V^*\right]_{[p\left(b'_4 , a'_2(\{c'\})\right)]_i, [p\left(b'_4 , a'_2(\{c'\})\right)]_i+c'_4}\\
    &\qquad\qquad\qquad\times [W]_{[p\left(b_2 , a\right)]_j, [p\left(b_2 , a\right)]_j+c_2}[W]_{[p\left(b_6 , a_3(\{c\})\right)]_j, [p\left(b_6 , a_3(\{c\})\right)]_j+c_6}\\
    &\qquad\qquad\qquad\times \left[W^*\right]_{[p\left(b'_2 , a'\right)]_j, [p\left(b'_2 , a'\right)]_j+c'_2}\left[W^*\right]_{[p\left(b'_6, a'_3(\{c'\})\right)]_j, [p\left(b'_6, a'_3(\{c'\})\right)]_j+c'_6} \\
    &\qquad\qquad\qquad\times \left[U\right]_{b_1, b_2} \left[U\right]_{b_5(\{c\}), b_6}\left[U\right]_{b'_4,b'_3(\{c'\})} \left[U\right]_{b'_8(\{c'\}),b'_7(\{c'\})}\\
    &\qquad\qquad\qquad\times \left[U^*\right]_{b'_1, b'_2}\left[U^*\right]_{b'_5(\{c'\}), b'_6}\left[U^*\right]_{b_4,b_3(\{c\})}\left[U^*\right]_{b_8(\{c\}),b_7(\{c\})}\\
    &\qquad\qquad\qquad\times \delta(\{a\},\{b\},\{c\})\delta(\{a'\},\{b'\},\{c'\}).
    \end{split}
\end{equation}
If all elements of $U$ have magnitudes of $O(2^{-k/2})$ with high probability, then we have
\begin{equation}\label{eq:O_VW_square_upper_bound}
    \begin{split}
    \abs{O_{VW}(U)}^2
    &\leq \frac{A}{2^{2n+4k}} \sum_{\substack{b_1,b_2,b_4,b_6\\b'_1,b'_2,b'_4,b'_6}} \sum_{a,a'} \sum_{\substack{c_1,c_2,c_4,c_6\\c'_1,c'_2,c'_4,c'_6}}\Phi(a,b,c)\Phi(a',b',c')\\
    &\qquad\times [V]_{[p\left(b_1 , a\right)]_i+c_1, [p\left(b_1 , a\right)]_i}[V]_{[p\left(b_4 , a_2(a,b,c)\right)]_i, [p\left(b_4 , a_2(a,b,c)\right)]_i+c_4}\\
    &\qquad\times\left[V^*\right]_{[p\left(b'_1 , a'\right)]_i+c'_1, [p\left(b'_1 , a'\right)]_i}\left[V^*\right]_{[p\left(b'_4 , a'_2(a',b',c')\right)]_i, [p\left(b'_4 , a'_2(a',b',c')\right)]_i+c'_4}\\
    &\qquad\times [W]_{[p\left(b_2 , a\right)]_j, [p\left(b_2 , a\right)]_j+c_2}[W]_{[p\left(b_6 , a_3(a,b,c)\right)]_j, [p\left(b_6 , a_3(a,b,c)\right)]_j+c_6}\\
    &\qquad\times \left[W^*\right]_{[p\left(b'_2 , a'\right)]_j, [p\left(b'_2 , a'\right)]_j+c'_2}\left[W^*\right]_{[p\left(b'_6, a'_3(a',b',c')\right)]_j, [p\left(b'_6, a'_3(a',b',c')\right)]_j+c'_6} \\
    \end{split}
\end{equation}
for some constant $A$. $\abs{O_{VW}(U)}^2$ is upper bounded by a negligible number if, for example, at least three of $\{b_i\}$ and $\{b'_i\}$ are canceled by delta functions arising from the ensemble average over $f$ and $p$. In this case, we have
\begin{equation}
    \abs{O_{VW}(U)}^2 \leq \frac{A}{2^k}\times 2^{8}.
\end{equation}
Below, we study how many delta functions can arise due to the ensemble average over $f$ and $p$ for each of $\{c_i\}$ and $\{c'_i\}$. 

\noindent
\textbf{Case:} $c_1=c_2=c_4=c_6=1$,
\begin{equation}
    \begin{split}
    \Phi(\{a(\{c\})\},\{b(\{c\})\})
    &= (-1)^{f(b_1,a_1)+f(b_2,a_1)+f(x_i(b_1,a_1),y_i(b_1,a_1))+f(x_j(b_2,a_1),y_j(b_2,a_1))}\\
    &\quad\times(-1)^{f(b_4,a_2(\{c\}))+f(b_6,a_3(\{c\}))+f(x_i(b_4,a_2(\{c\})),y_i(b_4,a_2(\{c\})))+f(x_j(b_6,a_3(\{c\})),y_j(b_6,a_3(\{c\})))}.
    \end{split}
\end{equation}

In this case, there are eight random functions in total. They can be contracted with each other or other random functions in $\Phi(\{a'(\{c'\})\},\{b'(\{c'\})\})$, and one may see that these contractions can be organized into three different types. The first type is the contraction of $f(b_1,a_1)$ with $f(b_2,a_1)$, which gives
\begin{equation}
    \begin{split}
    \Phi(\{a(\{c\})\},\{b(\{c\})\})
    &= (-1)^{f(b_4,a_2(\{c\}))+f(b_6,a_3(\{c\}))+f(x_i(b_4,a_2(\{c\})),y_i(b_4,a_2(\{c\})))+f(x_j(b_6,a_3(\{c\})),y_j(b_6,a_3(\{c\})))}\\
    &\quad\times\delta_{b_1,b_2}.
    \end{split}
\end{equation}
This contraction introduces a single delta function $\delta_{b_1,b_2}$ by removing four random functions. The second one is the contraction of $f(b_4,a_2(\{c\}))$ with $f(b_6,a_3(\{c\}))$. After the contraction, we get 
\begin{equation}
    \begin{split}
    \Phi(\{a(\{c\})\},\{b(\{c\})\})
    &= (-1)^{f(b_1,a_1)+f(b_2,a_1)+f(x_i(b_1,a_1),y_i(b_1,a_1))+f(x_j(b_2,a_1),y_j(b_2,a_1))}\\
    &\quad\times\delta_{b_4,b_6}\delta_{a_2,y_i(b_4,a_2)}\delta_{a_2,y_j(b_1,a_1)}.
    \end{split}
\end{equation}
The contraction produces $\delta_{b_4,b_6}\delta_{a_2,y_i(b_4,a_2)}\delta_{a_2,y_j(b_1,a_1)}$ in return for removing four random functions. The third one is the contraction of $f(b_1,a_1)$ with $f(x_j(b_2,a_1),y_j(b_2,a_1))$, giving
\begin{equation}
    \begin{split}
    \Phi(\{a(\{c\})\},\{b(\{c\})\})
    &= (-1)^{f(b_4,a_2(\{c\}))+f(b_6,a_3(\{c\}))+f(x_i(b_4,a_2(\{c\})),y_i(b_4,a_2(\{c\})))+f(x_j(b_6,a_3(\{c\})),y_j(b_6,a_3(\{c\})))}\\ &\quad\times(-1)^{f(b_2,a_1)+f(x_i(b_1,a_1),y_i(b_1,a_1))}\delta_{b_1,x_j(b_2,a_1)}\delta_{a_1,y_j(b_2,a_1)}.
    \end{split}
\end{equation}
This contraction generates $\delta_{b_1,x_j(b_2,a_1)}$ and $\delta_{a_1,y_j(b_2,a_1)}$ by removing two random functions. We skip other possible contractions between a random function in $\Phi(\{a(\{c\})\},\{b(\{c\})\})$ and another one in $\Phi(\{a'(\{c'\})\},\{b'(\{c'\})\})$ as they belong to the three types. All types of contractions produce at least a single delta function on $\{b_i\}$ using at most four random functions. Thus, the full contraction of $\Phi(\{a(\{c\})\},\{b(\{c\})\})\Phi(\{a'(\{c'\})\},\{b'(\{c'\})\})$ gives at least four delta functions on $\{b_i\}$. Therefore, $\abs{O_{VW}(U)}^2$ is upper bounded by $O(2^{-k})$.

\noindent
\textbf{Case:} $c_1=c_4=0$ and $c_2=c_6=1$,
\begin{equation}
    \begin{split}
    \Phi(\{a(\{c\})\},\{b(\{c\})\})
    &= (-1)^{f(b_2,a_1)+f(x_j(b_2,a_1),y_j(b_2,a_1))+f(b_6,a_3(\{c\}))+f(x_j(b_6,a_3(\{c\})),y_j(b_6,a_3(\{c\})))}\\
    &\quad\times (-1)^{[p(b_1,a_1)]_i+[p(b_4,a_2(\{c\}))]_i}.
    \end{split}
\end{equation}
This case $\Phi(\{a(\{c\})\},\{b(\{c\})\})$ has four random functions and two random permutation terms in total. Their contractions can be classified into two types. The first type is the contraction of $f(b_2,a_1)$ with $f(b_6,a_3(\{c\}))$. This gives $\delta_{b_2,b_6}\delta_{a_1,y_j(b_2,a_1)}$ by removing the four random functions,
\begin{equation}
    \Phi(\{a(\{c\})\},\{b(\{c\})\})
    = (-1)^{[p(b_1,a_1)]_i+[p(b_4,a_2(\{c\}))]_i}\delta_{b_2,b_6}\delta_{a_1,y_j(b_2,a_1)}.
\end{equation}
The second type is the contraction of $[p(b_1,a_1)]_i$ with $[p(b_4,a_2(\{c\}))]_i$. This introduces $\times\delta_{b_1,b_4}\delta_{a_1,y_j(b_2,a_1)}$ by removing the two random permutation terms,
\begin{equation}
    \begin{split}
        \Phi(\{a(\{c\})\},\{b(\{c\})\})
        &= (-1)^{f(b_2,a_1)+f(x_j(b_2,a_1),y_j(b_2,a_1))+f(b_6,a_3(\{c\}))+f(x_j(b_6,a_3(\{c\})),y_j(b_6,a_3(\{c\})))}\\
        &\quad \times\delta_{b_1,b_4}\delta_{a_1,y_j(b_2,a_1)}.
    \end{split}
\end{equation}
Therefore, contractions of four random functions or two random permutation terms all give delta functions on $\{b_i\}$. In addition, they produces $\delta_{a_1,y_j(b_2,a_1)}$, whose summation over $a_1$ gives $2^{O(k)}$ with a negligible failure probability due to \textbf{Lemma}~\ref{thm:average-bit-flip}. This further suppresses the magnitude of the upper bound of $\abs{O_{VW}(U)}^2$ in [Eq.~\ref{eq:O_VW_square_upper_bound}]. Other possibilities are contractions between functions in $\Phi(\{a(\{c\})\},\{b(\{c\})\})$ and $\Phi(\{a'(\{c'\})\},\{b'(\{c'\})\})$. One may see that each of them again introduces a delta function on $\{b_i\}$ by replacing at most four random functions or two random permutation terms. Therefore, in this case too, $\abs{O_{VW}(U)}^2$ is upper bounded by $O(2^{-k})$.

\noindent
\textbf{Case:} $c_1=c_2=0$ and $c_4=c_6=1$,
\begin{equation}
    \begin{split}
    \Phi(\{a(\{c\})\},\{b(\{c\})\})
    &= (-1)^{f(b_4,a_2)+f(x_i(b_4,a_2(\{c\})),y_i(b_4,a_2(\{c\})))+f(b_6,a_3(\{c\}))+f(x_j(b_6,a_3(\{c\})),y_j(b_6,a_3(\{c\})))}\\
    &\quad\times (-1)^{[p(b_1,a_1)]_i+[p(b_2,a_1)]_j}.
    \end{split}
\end{equation}
Similar to the previous case, this case $\Phi(\{a(\{c\})\},\{b(\{c\})\})$ has four random functions and two random permutation terms. The contraction of random functions gives $\delta_{b_4,b_6}\delta_{a_1,y_i(b_4,a_1)}$,
\begin{equation}
    \Phi(\{a(\{c\})\},\{b(\{c\})\})
    = (-1)^{[p(b_1,a_1)]_i+[p(b_2,a_1)]_j}\delta_{b_4,b_6}\delta_{a_1,y_i(b_4,a_1)}.
\end{equation}
On the other hand, the contraction of $[p(b_1,a_1)]_i$ with $[p(b_2,a_1)]_j$ makes $\Phi(\{a(\{c\})\},\{b(\{c\})\})$ vanishing as $i\neq j$. Contractions between random permutation terms in $\Phi(\{a(\{c\})\},\{b(\{c\})\})$ and $\Phi(\{a'(\{c'\})\},\{b'(\{c'\})\})$ can generate non-vanishing delta functions. For example, the contraction of $[p(b_1,a_1)]_i$ and $[p(b'_1,a'_1)]_i$ in $(-1)^{[p(b_1,a_1)]_i+p(b'_1,a'_1)]_i}$ introduces $\delta_{b_1,b'_1}$ and $\delta_{a_1,a'_1}$. Again, contractions of four random functions or two random permutation terms give delta functions on $\{b_i\}$. Consequently, the full contraction introduces four such delta functions and make $\abs{O_{VW}(U)}^2$ be upper bounded by $O(2^{-k})$.

\noindent
\textbf{Case:} $c_1=c_2=c_4=c_6=0$,
\begin{equation}
    \Phi(\{a(\{c\})\},\{b(\{c\})\})=(-1)^{[p(b_1,a_1)]_i+[p(b_4,a_2(\{c\}))]_i+[p(b_2,a_1)]_j+[p(b_6,a_3(\{c\}))]_j}.
\end{equation}
This is the last case. In this case, $\Phi(\{a(\{c\})\},\{b(\{c\})\})$ has four random permutation terms. The contraction of $[p(b_1,a_1)]_i$ with $[p(b_4,a_2(\{c\}))]_i$ gives $\delta_{b_1,b_4}$,
\begin{equation}
    \Phi(\{a(\{c\})\},\{b(\{c\})\})=(-1)^{[p(b_2,a_1)]_j+[p(b_6,a_3(\{c\}))]_j}\delta_{b_1,b_4},
\end{equation}
and the contraction of $[p(b_2,a_1)]_j$ with $[p(b_6,a_3(\{c\}))]_j$ gives $\delta_{b_1,b_4}$,
\begin{equation}
    \Phi(\{a(\{c\})\},\{b(\{c\})\})=(-1)^{[p(b_1,a_1)]_i+[p(b_4,a_2(\{c\}))]_i}\delta_{b_2,b_6}.
\end{equation}
Thus, the full contraction again introduces four delta functions on $\{b_i\}$, and $\abs{O_{VW}(U)}^2$ should be $O(2^{-k})$.

\end{proof}

\newpage
\hypertarget{sec:non-local-otoc}{}
\subsection*{4. Non-local Pauli OTOCs}
Here, we consider the case when $V$ is given by arbitrary sequences of $\{X_i\}$ and $\{Z_i\}$ operators, e.g., $V = X_4 Z_5 X_6 X_7$, and $W$ is given by $Z_j$ or $X_j$. To show these OTOCs are negligible for $k=\omega(\log n)$, we prove that OTOCs with $(V,W)=(Z_i,Z_j)$, $(X_i,Z_j)$, and $(X_i,X_j)$ are negligible for an RSED with $u = H^{\otimes k} P$ ($H$ is the Hadamard operator, $P$ is a random sign operator) when $k=\omega(\log n)$ in a way that can be generalized to the case with $V$ being a non-local Pauli string. We will not consider cases with $V=Y_i$ or $W=Y_j$ since they give the same OTOC formulas of $V=X_i$ and $V=X_j$ up to phase factors that do not affect the proof. 

Our proofs rely on the validity of \textbf{Lemma}~\ref{thm:average-bit-flip} and \textbf{Lemma}~\ref{thm:parity-sum} introduced below. These lemmas can be applied for a non-local $V$ as well. First, \textbf{Lemma}~\ref{thm:average-bit-flip} holds for multiple bit-flips, so appearances of $\{X_i\}$ multiple times in $V$ are allowed. Next, \textbf{Lemma}~\ref{thm:parity-sum} holds as long as the exponent is given by a random function, which is true for the sum of random phase functions arising from multiple $\{Z_i\}$ terms in $V$. Therefore, the proofs for \hyperlink{sec:XZ-OTOCs}{XZ-OTOCs} as well as \hyperlink{sec:XX-OTOCs}{XX-OTOCs} can be applied to those cases. Combining altogether, The OTOC with a non-local Pauli string $V$ and an onsite Pauli matrix $W$ are negligible for $k=\omega(\log n)$.\\

\noindent
\textbf{Bit-flip distribution}\\
\indent Here, we study how many configurations in a subspace remains under a set of bit flips conjugated by a random permutation. Specifically, we are interested in the probability distribution $P_z(n,k,l)$ of the number $l$ of such configurations with the bit flips in a $k$-bit subspace $b\in\{0,1\}^k$ of the space $\{0,1\}^n$ with a label $a\in\{0,1\}^{n-k}$. We refer such a configuration as a bit-flip allowed configuration and call it with its image under the conjugation by the permutation after the bit-flips as a bit-flip pair. We note that the probability distribution of the number of the bit-flip pairs can be use to upper bound $\sum_a \delta_{a,y_z(b,a)}$ with $b\in\{0,1\}^k$, which is required to upper bound non-local Pauli OTOCs. Below, we find an upper bound of $P_z(n,k,l)$ which is tight enough to argue OTOCs of a certain dynamics are negligible. Although the bit-flip pair and the lemma introduced below are well fitted for multiple bit-flips at once, we will consider each bit-flip pair is made by a single bit-flip for simplicity.

Let us consider a random random permutation $p:\{0,1\}^k\times\{0,1\}^{n-k}\rightarrow \{0,1\}^n$. The image of a subspace $b\in\{0,1\}^k$ with a label $a\in\{0,1\}^{n-k}$ under the permutation can be successively constructed by sampling random bit-strings in $\{0,1\}^n$ without replacement. Let us first find an upper bound of $P_z(n,k,0)$, \textit{i.e.}, the case with no bit-flip pair. Suppose we have sampled $j$ distinct bitstrings $\{b_i\}_{i=1}^j$ having no bit-flip pair. Then, the probability $\tilde{P}_z(\{b_i\}_{i=1}^j)$ of sampling a bitstring $b_{j+1}$ that forms a bit-flip pair with one of $\{b_i\}_{i=1}^j$ is given by
\begin{equation}
    \tilde{P}_z(\{b_i\}_{i=1}^j) = \frac{1}{2^{n}-j}\left|B_z\left(\left\{b_i\right\}_{i=1}^j\right)\right|,
\end{equation}
where $B_z\left(\left\{b_i\right\}_{i=1}^j\right)$ is the set of bitstrings whose elements can form a bit-flip pair with one of $\{b_i\}_{i=1}^j$. Importantly, $B_z\left(\left\{b_i\right\}_{i=1}^j\right)$ is nothing but $\{b_i\}_{i=1}^j$ with bit-flips at the position $z$. This implies that $P_z(n, k, 0)$ is given by
\begin{equation}
P_z(n, k, 0) = \prod_{j=1}^{2^k}\left(1-\frac{j-1}{2^n-j+1}\right).
\end{equation}
This can be generalized to arbitrary $l$ with a caution that bit-flip pairs can be formed during the construction process. Let $\{i_j\}_{j=0}^l$ be the steps in which bit-flip pairs are formed with $i_0=0$. Then, one may see that $P_x(n,k,l)$ is upper bounded by
\begin{equation}
    \begin{split}
    P_z(n, k, l) &= \sum_{i_1=1}^{2^k-l+1} \sum_{i_2=i_1+1}^{2^k-l+2} \cdots \sum_{i_l=i_{l-1}+1}^{2^k} \prod_{m=1}^{l}\left[\frac{i_m-2m+1}{2^n-i_m+1} \prod_{j=i_{m-1}+1}^{i_{m}-1}\left(1-\frac{j-2m+1}{2^n-j}\right)\right]\\ 
    &\qquad\qquad\qquad\qquad\qquad\qquad\times\prod_{j=i_l+1}^{2^k}\left(1-\frac{j-2 l-1}{2^n-j}\right) \\
    & \leq \sum_{i_1=1}^{2^k-l+1} \sum_{i_2=i_1+1}^{2^k-l+2} \cdots \sum_{i_l=i_{l-1}+1}^{2^k} \left(\frac{2^k}{2^n-2^k}\right)^l\left(1-\frac{1}{2^n}\right)^{2^k-l} \\
    & \leq \binom{2^k}{l} \left(\frac{2^{k}}{2^n-2^k}\right)^l\left(1-\frac{1}{2^n}\right)^{2^k-l}\\
    & \leq \left(\frac{2^{2k}}{2^n-2^k}\right)^l\exp(-\frac{1}{2^{n-k+l+1}}).
    \end{split}
\end{equation}
For large $n\gg k$ and $k=\operatorname{polylog}(n)$, the exponential term in the last line converges to unity, so the upper bound asymptotically follows $\Theta(2^{l(2k-n)})$, which decays exponentially in $n$.

Now, let us compute how many bit-flip pairs are formed, or equivalently, estimate how large $\sum_{a}\delta_{a,y_z(b,a)}$ is with high probability. 

\begin{lemma}\label{thm:average-bit-flip}
    With a negligible failure probability, $\sum_{a}\delta_{a,y_z(b,a)}$ is $2^{(1+\epsilon)k}$ for any positive number $\epsilon > 0$.
\end{lemma}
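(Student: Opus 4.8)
The plan is to translate the bound on $P_z(n,k,l)$, the probability that exactly $l$ bit-flip pairs are formed, directly into a tail bound on the random variable $N_z(b,a):=\sum_a \delta_{a,y_z(b,a)}$, which counts the number of bit-flip-allowed configurations. Note that $N_z$ is, by the definition given in the ``Bit-flip distribution'' paragraph, exactly the number $l$ of bit-flip pairs (up to the factor of $2$ relating pairs to configurations, which is absorbed into the constants). So the statement ``$N_z \le 2^{(1+\epsilon)k}$ with negligible failure probability'' is equivalent to ``$\Pr[l \ge 2^{(1+\epsilon)k}] = \mathrm{negl}(n)$'', and the latter follows by summing the derived upper bound $P_z(n,k,l)\le \left(\frac{2^{2k}}{2^n-2^k}\right)^l\exp(-2^{-(n-k+l+1)})$ over all $l\ge 2^{(1+\epsilon)k}$.

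The concrete steps I would carry out: First, recall from the excerpt that for $n\gg k$ and $k=\operatorname{polylog}(n)$ (in particular for $k=\omega(\log n)$ with $k=o(n)$), the factor $\exp(-2^{-(n-k+l+1)}) \le 1$, so $P_z(n,k,l)\le q^l$ with $q:=\frac{2^{2k}}{2^n-2^k}\le 2^{2k-n+1}$ for large $n$. Second, write
\begin{equation}
    \Pr\!\left[N_z \ge 2^{(1+\epsilon)k}\right] \;=\; \sum_{l \ge 2^{(1+\epsilon)k}} P_z(n,k,l) \;\le\; \sum_{l \ge 2^{(1+\epsilon)k}} q^{\,l} \;=\; \frac{q^{\,2^{(1+\epsilon)k}}}{1-q},
\end{equation}
which is valid because $q\to 0$, so the geometric series converges. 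Third, take logarithms: $\log_2$ of the numerator is $2^{(1+\epsilon)k}\cdot(2k-n+1)$, which is $-2^{(1+\epsilon)k}\cdot\Theta(n)$ since $2k-n+1 = -\Theta(n)$ for $k=o(n)$. Because $2^{(1+\epsilon)k}$ is superpolynomial in $n$ when $k=\omega(\log n)$, this is bounded above by $-n^{\omega(1)}$, i.e. the failure probability decays faster than any inverse polynomial in $n$ — it is $\mathrm{negl}(n)$. Fourth, note that conditioned on this event, $N_z = l < 2^{(1+\epsilon)k}$, so $\sum_a \delta_{a,y_z(b,a)} \le 2^{(1+\epsilon)k}$ as claimed; and the bound is uniform in $b$, or can be made so by a union bound over the $2^k$ choices of $b$ at the cost of only an extra $2^k$ factor, which does not spoil negligibility.

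The main subtlety — not really an obstacle but the point requiring care — is the interpretation step: one must argue that the combinatorial quantity $\sum_a\delta_{a,y_z(b,a)}$ is genuinely governed by the distribution $P_z(n,k,l)$ derived for a fixed subspace label and a random permutation $p$, i.e.\ that the sampling-without-replacement model used to derive $P_z$ correctly models the image $\{p(b',a)\}_{b'\in\{0,1\}^k}$ and that $y_z(b,a)=a$ holds precisely when $p(b,a)+\hat e_z$ lies in that same image (forming a bit-flip pair within the subspace). A secondary point is handling the multiple-bit-flip generalization invoked in Supplementary Note 4: the same derivation goes through with $\hat e_z$ replaced by any fixed nonzero vector, since $B_z(\{b_i\})$ is still just the translate of $\{b_i\}$ and has the same cardinality, so $P_z(n,k,l)$ and hence the tail bound are unchanged. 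Finally, I would remark that the constant hidden in ``$2^{(1+\epsilon)k}$'' can be taken to be $1$ exactly (rather than $1+o(1)$) because any slack in $\epsilon$ is absorbed by choosing $n$ large, and that the result also immediately implies $\mathbb{E}_p\big[\sum_a\delta_{a,y_z(b,a)}\big]=\Theta(2^{2k-n})\cdot 2^{n-k}+\cdots = \Theta(2^k)$ on average, consistent with — and stronger in the typical case than — the high-probability bound.
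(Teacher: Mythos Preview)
Your proposal contains a genuine gap: the identification of $N_z := \sum_{a}\delta_{a,y_z(b,a)}$ with the random variable $l$ whose distribution is $P_z(n,k,l)$ is incorrect. The variable $l$ counts bit-flip-allowed configurations \emph{within a single subspace}: for a fixed label $a$, it is the number of $b' \in \{0,1\}^k$ for which $p(b',a)+\hat{e}_z$ also lies in $S_{p,a}$. By contrast, $N_z$ fixes the configuration $b$ and counts the number of \emph{labels} $a \in \{0,1\}^{n-k}$ for which that particular $b$ is bit-flip-allowed. These are different random variables with different ranges ($l \le 2^{k}$ versus $N_z \le 2^{n-k}$), and $N_z$ is a sum of $2^{n-k}$ dependent indicators coupled through the single global permutation $p$, not a single-subspace count. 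Your key equation $\Pr[N_z \ge T] = \sum_{l \ge T} P_z(n,k,l)$ therefore has no justification, and the geometric-series tail bound that follows from it, while arithmetically valid for $l$, says nothing about $N_z$.

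The paper proceeds differently and uses only the first moment. Since $\delta_{a,y_z(b,a)} = 1$ forces subspace $a$ to contain at least one bit-flip pair, one has $N_z \le \#\{a : l_a \ge 1\}$, whose expectation is $2^{n-k}\sum_{l\ge 1} P_z(n,k,l) \lesssim 2^k$; Markov's inequality then gives $\Pr[N_z \ge 2^{(1+\epsilon)k}] \lesssim 2^{-\epsilon k} = \operatorname{negl}(n)$. You in fact sketch exactly this expectation computation in your closing remark, but treat it as a consistency check rather than the argument itself. To salvage your direct-tail approach you would need to control the distribution of $N_z$, which requires handling the dependence among the $2^{n-k}$ subspaces; absent that, the first-moment-plus-Markov route is the natural fix, and it is what the paper does.
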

\begin{proof}
    Let us first compute its expectation value as
    \begin{equation}
        \mathbb{E}_S\sum_{a}\delta_{a,y_z(b,a)} = 2^{n-k} \sum_{l=1}^{2^k} P_z(n,k,l) \lesssim  \sum_{l=1}^{2^k} 2^{l(2k-n)+n-k} = 2^{k}\times\frac{1-2^{2^k(2k-n)}}{1-2^{2k-n}}\approx 2^k.
    \end{equation}
    The last approximation holds for $k=\operatorname{polylog}(n)$. Then, Markov's inequality gives
    \begin{equation}
        \operatorname{Pr}\left[\sum_{a}\delta_{a,y_z(b,a)}\geq 2^{(1+\epsilon)k}\right] \leq 2^{-(1+\epsilon)k}\mathbb{E}_S\sum_{a}\delta_{a,y_z(b,a)}\lesssim 2^{-\epsilon k}=\operatorname{negl}(n)
    \end{equation}
    for some constant $\epsilon>0$. Thus, the sum is $2^{O(k)}$ with a probability negligible in $n$.
\end{proof}

\noindent
\textbf{ZZ-OTOCs}\\
\indent Here, we consider the case of $V=Z_i$ and $W=Z_j$ with $i\neq j$. Then, $O_{VW}(U)$ is given by
\begin{equation}
    \begin{split}
        O_{VW}(U)
        &=\frac{1}{2^n} \sum_{\left\{b_i\right\}_{i=1}^8} \sum_{\left\{a_i\right\}_{i=1}^4}(-1)^{\sum_{i=1}^8 f\left(b_i , a_i\right)}[V]_{p\left(b_8 , a_4\right), p\left(b_1 , a_1\right)}[U]_{b_1, b_2}[W]_{p\left(b_2 , a_1\right), p\left(b_3 , a_2\right)}\left[U^{\dagger}\right]_{b_3, b_4}  \\
        &\qquad\qquad\qquad\qquad \times[V]_{p\left(b_4 , a_2\right), p\left(b_5 , a_3\right)}[U]_{b_5, b_6}[W]_{p\left(b_6 , a_3\right), p\left(b_7 , a_4\right)}\left[U^{\dagger}\right]_{b_7, b_8}\\
        &=\frac{1}{2^n} \sum_{\left\{b_i\right\}_{i=1}^4} \sum_{a} (-1)^{[p(b_1,a)]_i+[p(b_2,a)]_j+[p(b_3,a)]_i+[p(b_4,a)]_j} [U]_{b_1, b_2}\left[U^{\dagger}\right]_{b_2, b_3}[U]_{b_3, b_4}\left[U^{\dagger}\right]_{b_4, b_1}.
    \end{split}
\end{equation}
By averaging over random subsets and phases, we get
\begin{equation}\label{eq:ZZ-OTOC-AVGRSP}
    \begin{split}
        \mathbb{E}_{S, f} O_{VW}(U) 
        & =\frac{1}{2^n} \sum_{\left\{b_i\right\}_{i=1}^4} \sum_a \delta_{b_1, b_3}\delta_{b_2, b_4}[U]_{b_1, b_2}\left[U^{\dagger}\right]_{b_2, b_3}[U]_{b_3, b_4}\left[U^{\dagger}\right]_{b_4, b_1} + E_{Z_iZ_j}(U) \\
        & =\frac{1}{2^k} \operatorname{tr}\left(\left(U . * U . * U^*\right) U^{\dagger}\right).
    \end{split}
\end{equation}
where `$.*$' is the element-wise multiplication. Now, let us compute the variance of $O_{VW}$. To this end, we compute the average of the square of $O_{VW}$:
\begin{equation}
    \begin{split}
        \mathbb{E}_{S, f} \left[ \left(O_{VW}(U)\right)^2 \right]
        & = \mathbb{E}_{S, f} \frac{1}{4^n} \sum_{\left\{b_i\right\}_{i=1}^8} \sum_{a_1,a_2} (-1)^{[p(b_1,a_1)]_i+[p(b_2,a_1)]_j+[p(b_3,a_1)]_i+[p(b_4,a_1)]_j} [U]_{b_1, b_2}\left[U^{\dagger}\right]_{b_2, b_3}[U]_{b_3, b_4}\left[U^{\dagger}\right]_{b_4, b_1} \\
        &\qquad\qquad\qquad\qquad\times (-1)^{[p(b_5,a_2)]_i+[p(b_6,a_2)]_j+[p(b_7,a_2)]_i+[p(b_8,a_2)]_j} [U]_{b_5, b_6}\left[U^{\dagger}\right]_{b_6, b_7}[U]_{b_7, b_8}\left[U^{\dagger}\right]_{b_8, b_5}\\
        & = \frac{1}{4^n} \sum_{\left\{b_i\right\}_{i=1}^8} \sum_{a_1,a_2} [U]_{b_1, b_2}\left[U^{\dagger}\right]_{b_2, b_3}[U]_{b_3, b_4}\left[U^{\dagger}\right]_{b_4, b_1}[U]_{b_5, b_6}\left[U^{\dagger}\right]_{b_6, b_7}[U]_{b_7, b_8}\left[U^{\dagger}\right]_{b_8, b_5} \\
        &\qquad\qquad\qquad\quad\times \left(\delta_{b_1,b_3}\delta_{b_5,b_7}+\delta_{b_1,b_5}\delta_{b_3,b_7}\delta_{a_1,a_2}+\delta_{b_1,b_7}\delta_{b_3,b_5}\delta_{a_1,a_2}-2\delta_{b_1,b_3,b_5,b_7}\delta_{a_1,a_2}\right)\\
        &\qquad\qquad\qquad\quad\times \left(\delta_{b_2,b_4}\delta_{b_6,b_8}+\delta_{b_2,b_6}\delta_{b_4,b_8}\delta_{a_1,a_2}+\delta_{b_2,b_8}\delta_{b_4,b_6}\delta_{a_1,a_2}-2\delta_{b_2,b_4,b_6,b_8}\delta_{a_1,a_2}\right).
    \end{split}
\end{equation}

Now, let us consider when $U$ is the product of Hadamard gates. In this case, $\mathbb{E}_{S, f} \left[ O_{VW}(U) \right]$ is given by
\begin{equation}
    \begin{split}
        \mathbb{E}_{S, f} \left[ O_{VW}(U) \right]
        &= 2^{-k} \operatorname{tr} \left[\left[\left(U . * U . * U^*\right) U^{\dagger}\right]_{i j}\right]\\
        &= 2^{-k} \operatorname{tr} \left[\left[\left(H . * H . * H^*\right) H^{\dagger}\right]_{i j}\right]\\
        &= 2^{-2k} \operatorname{tr} \left[\left[H H^\dagger\right]_{i j}\right]\\
        &= 2^{-k}.
    \end{split}
\end{equation}
Thus, if we set $k=\omega(\log n)$, then the ensemble average of the correlator becomes negligible. Next, let us compute its second moments.
\begin{equation}
    \begin{split}
        \mathbb{E}_{S, f} \left[ \left(O_{VW}(U)\right)^2 \right]
        & =\frac{1}{4^{n+2k}} \sum_{\left\{b_i\right\}_{i=1}^8} \sum_{a_1,a_2} (-1)^{b_1\cdot b_2 + b_2\cdot b_3 + b_3\cdot b_4 + b_4\cdot b_1 + b_5\cdot b_6 + b_6\cdot b_7 + b_7\cdot b_8 + b_8\cdot b_5} \\
        &\qquad\qquad\qquad\quad\times \left(\delta_{b_1,b_3}\delta_{b_5,b_7}+\delta_{b_1,b_5}\delta_{b_3,b_7}\delta_{a_1,a_2}+\delta_{b_1,b_7}\delta_{b_3,b_5}\delta_{a_1,a_2}-2\delta_{b_1,b_3,b_5,b_7}\delta_{a_1,a_2}\right)\\
        &\qquad\qquad\qquad\quad\times \left(\delta_{b_2,b_4}\delta_{b_6,b_8}+\delta_{b_2,b_6}\delta_{b_4,b_8}\delta_{a_1,a_2}+\delta_{b_2,b_8}\delta_{b_4,b_6}\delta_{a_1,a_2}-2\delta_{b_2,b_4,b_6,b_8}\delta_{a_1,a_2}\right)\\
        &=\frac{1}{4^{n+2k}}\left[\sum_{\left\{b_i\right\}_{b=1}^4} \sum_{a_1,a_2} 1+8\times\sum_{\left\{b_i\right\}_{b=1}^4} \sum_{a} 1-6\times\sum_{\left\{b_i\right\}_{b=1}^3} \sum_{a} 1+\sum_{\left\{b_i\right\}_{b=1}^2} \sum_{a} 1\right]\\
        &=\frac{1}{2^{2n+4k}}\left[2^{2n+2k}+8\times 2^{n+3k}-6\times 2^{n+2k} + 2^{n+k}\right]\\
        &=\frac{1}{2^{2k}}+\frac{8}{2^{n+k}}-\frac{6}{2^{n+2k}}+\frac{1}{2^{n+3k}}
    \end{split}
\end{equation}
Thus, the variance of the correlator is given by
\begin{equation}
    \operatorname{Var}_{S,f} \left[O_{VW}(U)\right] = \mathbb{E}_{S, f} \left[ \left(O_{VW}(U)\right)^2 \right] - \left[\mathbb{E}_{S, f} O_{VW}(U) \right]^2 = \frac{8}{2^{n+k}}-\frac{6}{2^{n+2k}}+\frac{1}{2^{n+3k}},
\end{equation}
which is exponentially vanishing in $n$. Thus, the ensemble of $\{O_{VW}(U)\}$ over random functions and permutations is highly concentrated near the averaged value.\\

\noindent
\hypertarget{sec:XZ-OTOCs}{\textbf{XZ-OTOCs}}\\
\indent Here, we consider the case of $V=X_i$ and $W=Z_j$ with $i\neq j$. Then, $O_{VW}(U)$ is given by
\begin{equation*}
    \begin{split}
    O_{VW}(U)
    &=\frac{1}{2^n} \sum_{\left\{b_i\right\}_{i=1}^{8}} \sum_{\left\{a_i\right\}_{i=1}^4}(-1)^{\sum_{i=1}^{8} f\left(b_i ,a_{\lfloor(i+1)/2\rfloor}\right)}[V]_{p\left(b_8 , a_4\right), p\left(b_1 , a_1\right)}\left[U\right]_{b_1, b_2}[W]_{p\left(b_2 , a_1\right), p\left(b_3 , a_2\right)} \\
    &\qquad\qquad\qquad\qquad \times \left[U^\dagger\right]_{b_3, b_4} [V]_{p\left(b_4 , a_2\right), p\left(b_5 , a_3\right)}\left[U\right]_{b_5, b_6}[W]_{p\left(b_6 , a_3\right), p\left(b_7 , a_4\right)}\left[U^\dagger\right]_{b_7, b_8} \\
    & =\frac{1}{2^n} \sum_{\left\{b_i\right\}_{i=1}^8} \sum_{\left\{a_i\right\}_{i=1}^4}(-1)^{\sum_{i=1}^{8} f\left(b_i ,a_{\lfloor(i+1)/2\rfloor}\right)+\left[p\left(b_2 , a_1\right)\right]_j+\left[p\left(b_6 , a_3\right)\right]_j}\left[U\right]_{b_1, b_2}\left[U^\dagger\right]_{b_3, b_4}\left[U\right]_{b_5, b_6}\left[U^\dagger\right]_{b_7, b_8} \\
    & \qquad\qquad\qquad\qquad\times \delta_{p\left(b_8 , a_4\right)+\hat{e}_i, p\left(b_1 , a_1\right)} \delta_{p\left(b_4 , a_2\right)+\hat{e}_i, p\left(b_5 , a_3\right)} \delta_{p\left(b_2 , a_1\right), p\left(b_3 , a_2\right)} \delta_{p\left(b_6 , a_3\right), p\left(b_7 , a_4\right)} \\
    & =\frac{1}{2^n} \sum_{\left\{b_i\right\}_{i=1}^6} \sum_{\left\{a_i\right\}_{i=1}^2}(-1)^{f\left(b_1 , a_1\right)+f\left(b_3 , a_1\right)+f\left(b_4 , a_2\right)+f\left(b_6 , a_2\right)+\left[p\left(b_2 , a_1\right)\right]_j+\left[p\left(b_5 , a_2\right)\right]_j} \\
    & \qquad\qquad\qquad\qquad\times  \left[U\right]_{b_1, b_2}\left[U^\dagger\right]_{b_2, b_3}\left[U\right]_{b_4, b_5}\left[U^\dagger\right]_{b_5, b_6} \delta_{p\left(b_6 , a_2\right)+\hat{e}_i, p\left(b_1 , a_1\right)} \delta_{p\left(b_3 , a_1\right)+\hat{e}_i, p\left(b_4 , a_2\right)} \\
    \end{split}
\end{equation*}
The terms in the last line vanish upon averaging over the random function $f$ unless $b_1=b_3$ or $p(b_1,a)=p(b_3,a)+\hat{e}_i$ hold, which are exclusive. Thus, the ensemble average of the four-point correlator is given by
\begin{equation}
    \begin{split}
        \mathbb{E}_{S,f}\left[O_{VW}(U)\right]
        &=\mathbb{E}_{S}\frac{1}{2^n} \sum_{\left\{b_i\right\}_{i=1}^6} \sum_{\left\{a_i\right\}_{i=1}^2}(-1)^{\left[p\left(b_2 , a_1\right)\right]_j+\left[p\left(b_5 , a_2\right)\right]_j}\\
        & \qquad\qquad\qquad\qquad\times(\delta_{b_1,b_3}\delta_{b_4,b_6}+\delta_{a_1,a_2}(\delta_{b_1,b_4}\delta_{b_3,b_6}+\delta_{b_1,b_6}\delta_{b_3,b_4}-2\delta_{b_1,b_3,b_4,b_6})) \\
        & \qquad\qquad\qquad\qquad\times  \left[U\right]_{b_1, b_2}\left[U^\dagger\right]_{b_2, b_3}\left[U\right]_{b_4, b_5}\left[U^\dagger\right]_{b_5, b_6} \delta_{p\left(b_6 , a_2\right)+\hat{e}_i, p\left(b_1 , a_1\right)} \delta_{p\left(b_3 , a_1\right)+\hat{e}_i, p\left(b_4 , a_2\right)}.
    \end{split}
\end{equation}
Similarly, the second moment of the correlator is given by
\begin{equation}
    \begin{split}
        \mathbb{E}_{S, f} \left[ \left(O_{VW}(U)\right)^2 \right]
        &= \mathbb{E}_{S, f} \frac{1}{4^n} \sum_{\left\{b_i\right\}_{i=1}^{12}} \sum_{\{a_i\}_{i=1}^4} (-1)^{f\left(b_1 , a_1\right)+f\left(b_3 , a_1\right)+f\left(b_4 , a_2\right)+f\left(b_6 , a_2\right)+\left[p\left(b_2 , a_1\right)\right]_j+\left[p\left(b_5 , a_2\right)\right]_j} \\
        & \qquad\qquad\qquad\times (-1)^{f\left(b_7 , a_3\right)+f\left(b_9 , a_3\right)+f\left(b_{10} , a_4\right)+f\left(b_{12} , a_4\right)+\left[p\left(b_8 , a_3\right)\right]_j+\left[p\left(b_{11} , a_4\right)\right]_j}\\
        & \qquad\qquad\qquad\times  \left[U\right]_{b_1, b_2}\left[U^\dagger\right]_{b_2, b_3}\left[U\right]_{b_4, b_5}\left[U^\dagger\right]_{b_5, b_6} \delta_{p\left(b_6 , a_2\right)+\hat{e}_i, p\left(b_1 , a_1\right)} \delta_{p\left(b_3 , a_1\right)+\hat{e}_i, p\left(b_4 , a_2\right)} \\
        & \qquad\qquad\qquad\times  \left[U\right]_{b_7, b_8}\left[U^\dagger\right]_{b_8, b_9}\left[U\right]_{b_{10}, b_{11}}\left[U^\dagger\right]_{b_{11}, b_{12}} \delta_{p\left(b_{12} , a_4\right)+\hat{e}_i, p\left(b_7 , a_3\right)} \delta_{p\left(b_9 , a_3\right)+\hat{e}_i, p\left(b_{10} , a_4\right)}.
    \end{split}
\end{equation}

Now, let us replace $U$ by the product of the Hadamard gates. The following calculation shows that the ensemble average of $O_{VW}(U)$ has a negligible magnitude. 
\begin{equation}\label{eq:XZ-OTOC-Hada}
    \begin{split}
        \mathbb{E}_{S,f}\left[O_{VW}(U)\right]
        &=\mathbb{E}_{S}\frac{1}{2^{n+2k}} \sum_{\left\{b_i\right\}_{i=1}^6} \sum_{\left\{a_i\right\}_{i=1}^2}(-1)^{\left[p\left(b_2 , a_1\right)\right]_j+\left[p\left(b_5 , a_2\right)\right]_j+b_1\cdot b_2 + b_2\cdot b_3 + b_4\cdot b_5 + b_5\cdot b_6} \\
        & \qquad\qquad\qquad\qquad\times(\delta_{b_1,b_3}\delta_{b_4,b_6}+\delta_{a_1,a_2}(\delta_{b_1,b_4}\delta_{b_3,b_6}+\delta_{b_1,b_6}\delta_{b_3,b_4}-2\delta_{b_1,b_3,b_4,b_6})) \\
        & \qquad\qquad\qquad\qquad\times \delta_{p\left(b_6 , a_2\right)+\hat{e}_i, p\left(b_1 , a_1\right)} \delta_{p\left(b_3 , a_1\right)+\hat{e}_i, p\left(b_4 , a_2\right)} \\
        &=\mathbb{E}_{S}\frac{1}{2^{n+2k}} \sum_{b_1,b_2,b_4,b_5} \sum_{a_1,a_2}(-1)^{\left[p\left(b_2 , a_1\right)\right]_j+\left[p\left(b_5 , a_2\right)\right]_j}\delta_{p\left(b_4 , a_2\right)+\hat{e}_i, p\left(b_1 , a_1\right)} \\
        &\quad+\mathbb{E}_{S}\frac{1}{2^{n+2k}} \sum_{\left\{b_i\right\}_{i=1}^6} \sum_{a}(-1)^{\left[p\left(b_2 , a\right)\right]_j+\left[p\left(b_5 , a\right)\right]_j+b_1\cdot b_2 + b_2\cdot b_3 + b_4\cdot b_5 + b_5\cdot b_6}\\
        & \qquad\qquad\qquad\qquad\times(\delta_{b_1,b_4}\delta_{b_3,b_6}+\delta_{b_1,b_6}\delta_{b_3,b_4}-2\delta_{b_1,b_3,b_4,b_6}) \\
        & \qquad\qquad\qquad\qquad\times \delta_{p\left(b_6 , a\right)+\hat{e}_i, p\left(b_1 , a\right)} \delta_{p\left(b_3 , a\right)+\hat{e}_i, p\left(b_4 , a\right)} \\
        &=\mathbb{E}_{S}\frac{1}{2^{n+2k}} \sum_{b_1,b_2,b_5} \sum_{a}(-1)^{\left[p\left(b_2 , a\right)\right]_j+\left[p\left(b_5 , y_i(b_1,a)\right)\right]_j}\left(1-\delta_{a,y_i(b_1,a)}\right) \\
        &\quad+\mathbb{E}_{S}\frac{1}{2^{n+2k}} \sum_{b_1,b_2,b_5} \sum_{a}(-1)^{\left[p\left(b_2 , a\right)\right]_j+\left[p\left(b_5 , a\right)\right]_j+(b_1\oplus x_i(b_1,a))\cdot (b_2\oplus b_5)}\delta_{a,y_i(b_1,a)}\left(1-\delta_{b_1,x_i(b_1,a)}\right)\\
        &\leq \mathbb{E}_{S}\frac{1}{2^{n+2k}} \sum_{b_1,b_2,b_5} \sum_{a}(-1)^{\left[p\left(b_2 , a\right)\right]_j+\left[p\left(b_5 , y_i(b_1,a)\right)\right]_j}\left(1-\delta_{a,y_i(b_1,a)}\right) \\
        &\quad+\mathbb{E}_{S} \frac{1}{2^{n}} \sum_{b_1} \left(\sum_{a}\delta_{a,y_i(b_1,a)}\right)\\
        &\leq \frac{1}{2^{n/4}} + \frac{1}{2^{n-2k}}.
    \end{split}
\end{equation}
The first term in the second last line of Eq.~\eqref{eq:XZ-OTOC-Hada} can be upper bounded using the fact that $[p(b_2,a)]_j$ and $[p(b_5,y_i(b_1,a)]_j$ are independent binary random variable with a probability higher than $1-\operatorname{negl}(n)$. Their parity summation follows the binomial distribution with the number of trials given by the total number of summand $m$. Importantly, the summation is upper bounded by $\sqrt{m}e^{\epsilon n}$ for some $\epsilon>0$ with a negligible failure probability due to the following lemma, and thus, by setting $\epsilon = 1/4$, we get the upper bound $2^{-1/4}$ of the summation.

\begin{lemma}\label{thm:parity-sum}
    Let $X$ be a binomial variable $\sum_{i=1}^m (-1)^{x_i}$ with bits $\{x_i\}$ uniformly randomly sampled from $\{0,1\}$. Then, the probability of $|X|$ exceeds $\sqrt{m}\cdot\omega(\operatorname{poly}(n))$ is negligible in $n$.
\end{lemma}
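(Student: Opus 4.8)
The plan is to apply a standard concentration bound for sums of i.i.d.\ Rademacher (equivalently, centered Bernoulli) random variables, namely Hoeffding's inequality, and then convert the resulting Gaussian-type tail into a negligibility statement. First I would note that $X = \sum_{i=1}^m (-1)^{x_i}$ is a sum of $m$ independent random variables, each bounded in $[-1,1]$ with mean zero, since $x_i$ is uniform on $\{0,1\}$. Hoeffding's inequality then gives $\operatorname{Pr}[|X| \geq t] \leq 2\exp(-t^2/(2m))$ for any $t>0$.

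Next I would set $t = \sqrt{m}\cdot g(n)$ where $g(n) = \omega(\operatorname{poly}(n))$ is the super-polynomial factor in the statement; without loss of generality we may take $g(n)$ to grow at least like, say, $n^{\log n}$ or any function that eventually exceeds every polynomial. Substituting, $\operatorname{Pr}[|X| \geq \sqrt{m}\cdot g(n)] \leq 2\exp(-g(n)^2/2)$. Since $g(n)^2$ also grows faster than any polynomial in $n$, the quantity $2\exp(-g(n)^2/2)$ decays faster than the inverse of any polynomial, i.e.\ it is $\operatorname{negl}(n)$. This is exactly the claim. One caveat worth spelling out: the bound is uniform in $m$, which is important because in the applications (e.g.\ Eq.~\eqref{eq:XZ-OTOC-Hada}) the number of summands $m$ can itself scale with $n$ (like $2^{O(n)}$), but the $\sqrt{m}$ normalization absorbs this so that the tail probability depends only on $g(n)$, not on $m$.

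The only mild subtlety — not really an obstacle — is making sure the quantifier in ``$\omega(\operatorname{poly}(n))$'' is handled correctly: the statement should be read as ``for every polynomial $q$, $\operatorname{Pr}[|X|\geq \sqrt{m}\,q(n)]$ is negligible,'' or equivalently, using a single fixed super-polynomial function. I would phrase the proof in the latter form for cleanliness: fix any $g$ with $g(n)/n^c \to \infty$ for all constants $c$, apply Hoeffding with $t=\sqrt m\,g(n)$, and observe $2e^{-g(n)^2/2}$ is smaller than $1/n^c$ for all $c$ and all sufficiently large $n$. If one prefers to avoid invoking Hoeffding as a black box, the same bound follows from a one-line Chernoff argument: $\mathbb{E}[e^{\lambda X}] = (\cosh\lambda)^m \leq e^{\lambda^2 m/2}$, then optimize $\lambda = t/m$ and union bound over the two tails. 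Either route is short; the substantive content is simply that $\sqrt{m}$-scale fluctuations are typical and anything super-polynomially larger is exponentially unlikely.
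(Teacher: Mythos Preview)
Your proof is correct and follows the same overall structure as the paper's: obtain a sub-Gaussian tail bound of the form $\exp(-t^2/2m)$, set $t=\sqrt{m}\,g(n)$ with $g(n)$ super-polynomial, and conclude negligibility. The technical route differs slightly: the paper appeals to the central limit theorem to approximate the distribution of $X/\sqrt{m}$ by a standard normal and then quotes the Gaussian tail asymptotic $\Pr[|X|>a]\approx (2\sqrt{m}/a)e^{-a^2/2m}$, whereas you use Hoeffding's inequality (or the equivalent one-line Chernoff argument) directly. Your route is in fact cleaner and more rigorous, since Hoeffding gives a non-asymptotic inequality valid for every finite $m$, while the CLT only gives an approximation whose error at finite $m$ the paper does not control; this matters in principle because, as you note, $m$ can be as large as $2^{O(n)}$ in the applications. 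Your remark that the bound is uniform in $m$ is the right observation and is not made explicit in the paper's proof.
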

\begin{proof}
    For a binomial variable $X=\sum_{i=1}^m (-1)^{x_i}$ with random bits $x_i\in\{0,1\}$, its distribution can be approximated by the normal distribution $\mathcal{N}(\mu=0,\sigma=1)$ due to the central limit theorem. Then, we need to find a number $a$ such that
    \begin{equation}
        \operatorname{Pr}\left[ \abs{X} > a \right] \leq \operatorname{negl}(n).
    \end{equation}
    Due to the normal approximation, the left hand side becomes 
    \begin{equation}
        \operatorname{Pr}\left[ \abs{X} > a \right] \approx \frac{2\sqrt{m}}{a}e^{-a^2/2m}
    \end{equation}
    for sufficiently large $a^2/m$. This implies that
    \begin{equation}
        a = \sqrt{m}\cdot \omega(\operatorname{poly}(n)).
    \end{equation}    
\end{proof}  

The second term in the same line can be upper bounded using \textbf{Lemma}~\ref{thm:average-bit-flip}. The sum of $\delta_{a,y_i(b,a)}$ over $a$ is upper bounded by $2^{2k}$, so the second term is upper bounded by $2^{2k-n}$. Combining these gives the last inequality, and thus, the ensemble average value of $O_{VW}(U)$ is negligible in $n$ for the embedded Hadamard gate.

Next, let us compute the variance of $O_{VW}(U)$. To this end, we need to compute its second moment first.
\begin{equation}
    \begin{split}
        \mathbb{E}_{S, f} \left[ \left(O_{VW}(U)\right)^2 \right]
        &= \mathbb{E}_{S, f} \frac{1}{4^{n+2k}} \sum_{\left\{b_i\right\}_{i=1}^{12}} \sum_{\{a_i\}_{i=1}^4} (-1)^{f\left(b_1 , a_1\right)+f\left(b_3 , a_1\right)+f\left(b_4 , a_2\right)+f\left(b_6 , a_2\right)+\left[p\left(b_2 , a_1\right)\right]_j+\left[p\left(b_5 , a_2\right)\right]_j} \\
        & \qquad\qquad\qquad\times (-1)^{f\left(b_7 , a_3\right)+f\left(b_9 , a_3\right)+f\left(b_{10} , a_4\right)+f\left(b_{12} , a_4\right)+\left[p\left(b_8 , a_3\right)\right]_j+\left[p\left(b_{11} , a_4\right)\right]_j}\\
        & \qquad\qquad\qquad\times (-1)^{b_1\cdot b_2 + b_2\cdot b_3 + b_4\cdot b_5 + b_5\cdot b_6} \delta_{p\left(b_6 , a_2\right)+\hat{e}_i, p\left(b_1 , a_1\right)} \delta_{p\left(b_3 , a_1\right)+\hat{e}_i, p\left(b_4 , a_2\right)} \\
        & \qquad\qquad\qquad\times (-1)^{b_7\cdot b_8 + b_8\cdot b_9 + b_{10}\cdot b_{11} + b_{11}\cdot b_{12}} \delta_{p\left(b_{12} , a_4\right)+\hat{e}_i, p\left(b_7 , a_3\right)} \delta_{p\left(b_9 , a_3\right)+\hat{e}_i, p\left(b_{10} , a_4\right)} \\
        &= \mathbb{E}_{S} \frac{1}{4^{n+2k}} \sum_{\left\{b_i\right\}_{i=1}^{12}} \sum_{\{a_i\}_{i=1}^4} (-1)^{\left[p\left(b_2 , a_1\right)\right]_j+\left[p\left(b_5 , a_2\right)\right]_j+\left[p\left(b_8 , a_3\right)\right]_j+\left[p\left(b_{11} , a_4\right)\right]_j} \\
        & \qquad\qquad\qquad\times (-1)^{b_1\cdot b_2 + b_2\cdot b_3 + b_4\cdot b_5 + b_5\cdot b_6} \delta_{p\left(b_6 , a_2\right)+\hat{e}_i, p\left(b_1 , a_1\right)} \delta_{p\left(b_3 , a_1\right)+\hat{e}_i, p\left(b_4 , a_2\right)} \\
        & \qquad\qquad\qquad\times (-1)^{b_7\cdot b_8 + b_8\cdot b_9 + b_{10}\cdot b_{11} + b_{11}\cdot b_{12}} \delta_{p\left(b_{12} , a_4\right)+\hat{e}_i, p\left(b_7 , a_3\right)} \delta_{p\left(b_9 , a_3\right)+\hat{e}_i, p\left(b_{10} , a_4\right)} \\
        & \qquad\qquad\qquad\times [\delta_{b_1,b_3}\delta_{b_4,b_6}\delta_{b_7,b_9}\delta_{b_{10},b_{12}}\\
        & \qquad\qquad\qquad\quad + \delta_{a_1,a_2}\delta_{b_7,b_9}\delta_{b_{10},b_{12}}(\delta_{b_1,b_4}\delta_{b_3,b_6}+\delta_{b_1,b_6}\delta_{b_3,b_4})\\
        & \qquad\qquad\qquad\quad + \delta_{a_1,a_3}\delta_{b_4,b_6}\delta_{b_{10},b_{12}}(\delta_{b_1,b_7}\delta_{b_3,b_9}+\delta_{b_1,b_9}\delta_{b_3,b_7})\\
        & \qquad\qquad\qquad\quad + \delta_{a_1,a_4}\delta_{b_4,b_6}\delta_{b_7,b_9}(\delta_{b_1,b_{10}}\delta_{b_3,b_{12}}+\delta_{b_1,b_{12}}\delta_{b_3,b_{10}})\\
        & \qquad\qquad\qquad\quad + \delta_{a_2,a_3}\delta_{b_1,b_3}\delta_{b_{10},b_{12}}(\delta_{b_4,b_7}\delta_{b_6,b_9}+\delta_{b_4,b_9}\delta_{b_6,b_7})\\
        & \qquad\qquad\qquad\quad + \delta_{a_2,a_4}\delta_{b_1,b_3}\delta_{b_7,b_9}(\delta_{b_4,b_{10}}\delta_{b_6,b_{12}}+\delta_{b_4,b_{12}}\delta_{b_6,b_{10}})\\
        & \qquad\qquad\qquad\quad + \delta_{a_3,a_4}\delta_{b_1,b_3}\delta_{b_4,b_6}(\delta_{b_7,b_{10}}\delta_{b_9,b_{12}}+\delta_{b_7,b_{12}}\delta_{b_9,b_{10}})\\
        & \qquad\qquad\qquad\quad + \delta_{a_1,a_2}\delta_{a_3,a_4}(\delta_{b_1,b_4}\delta_{b_3,b_6}+\delta_{b_1,b_6}\delta_{b_3,b_4})(\delta_{b_7,b_{10}}\delta_{b_9,b_{12}}+\delta_{b_7,b_{12}}\delta_{b_9,b_{10}})\\
        & \qquad\qquad\qquad\quad + \delta_{a_1,a_3}\delta_{a_2,a_4}(\delta_{b_1,b_7}\delta_{b_3,b_9}+\delta_{b_1,b_9}\delta_{b_3,b_7})(\delta_{b_4,b_{10}}\delta_{b_6,b_{12}}+\delta_{b_4,b_{12}}\delta_{b_6,b_{10}})\\
        & \qquad\qquad\qquad\quad + \delta_{a_1,a_4}\delta_{a_2,a_3}(\delta_{b_7,b_4}\delta_{b_9,b_6}+\delta_{b_7,b_6}\delta_{b_9,b_4})(\delta_{b_1,b_{10}}\delta_{b_3,b_{12}}+\delta_{b_3,b_{12}}\delta_{b_1,b_{10}})\\
        & \qquad\qquad\qquad\quad + \delta_{a_1,a_2,a_3}\delta_{b_{10},b_{12}}(\delta_{b_1,b_4}\delta_{b_3,b_7}\delta_{b_6,b_9}+\delta_{b_1,b_4}\delta_{b_3,b_9}\delta_{b_6,b_7}+\cdots)\\
        & \qquad\qquad\qquad\quad + (\text{other three-point delta function terms of $\{a_i\}$})\\
        & \qquad\qquad\qquad\quad + (\text{four-point delta function terms of $\{a_i\}$})]\\
    \end{split}
\end{equation}

\begin{equation}
    \begin{split}
        &= \mathbb{E}_{S} \frac{1}{4^{n+2k}} \sum_{b_1,b_2,b_4,b_5,b_7,b_8,b_{10},b_{11}} \sum_{\{a_i\}_{i=1}^4} (-1)^{\left[p\left(b_2 , a_1\right)\right]_j+\left[p\left(b_5 , a_2\right)\right]_j+\left[p\left(b_8 , a_3\right)\right]_j+\left[p\left(b_{11} , a_4\right)\right]_j}\\
        & \qquad\qquad\qquad\qquad\qquad\qquad\qquad\qquad\times \delta_{p\left(b_4 , a_2\right)+\hat{e}_i, p\left(b_1 , a_1\right)}\delta_{p\left(b_{10} , a_4\right)+\hat{e}_i, p\left(b_7 , a_3\right)} \\
        &\quad + \mathbb{E}_{S} \frac{1}{4^{n+2k}} \sum_{b_1,b_2,b_3,b_5,b_7,b_8,b_{10},b_{11}} \sum_{a_1,a_3,a_4} (-1)^{\left[p\left(b_2 , a_1\right)\right]_j+\left[p\left(b_5 , a_1\right)\right]_j+\left[p\left(b_8 , a_3\right)\right]_j+\left[p\left(b_{11} , a_4\right)\right]_j}\\
        & \qquad\qquad\qquad\times (-1)^{b_1\cdot b_2 + b_2\cdot b_3 + b_1\cdot b_5 + b_5\cdot b_3}\delta_{p\left(b_3 , a_1\right)+\hat{e}_i, p\left(b_1 , a_1\right)}\delta_{p\left(b_{10} , a_4\right)+\hat{e}_i, p\left(b_7 , a_3\right)} \\
        &\quad + \mathbb{E}_{S} \frac{1}{4^{n+2k}} \sum_{b_1,b_2,b_3,b_5,b_7,b_8,b_{10},b_{11}} \sum_{a_1,a_2,a_4} (-1)^{\left[p\left(b_2 , a_1\right)\right]_j+\left[p\left(b_5 , a_2\right)\right]_j+\left[p\left(b_8 , a_1\right)\right]_j+\left[p\left(b_{11} , a_4\right)\right]_j}\\
        & \qquad\qquad\qquad\qquad\qquad\times (-1)^{b_1\cdot b_2 + b_2\cdot b_3 + b_3\cdot b_5 + b_5\cdot b_1}\\
        & \qquad\qquad\qquad\qquad\qquad\times\delta_{p\left(b_1 , a_2\right)+\hat{e}_i, p\left(b_1 , a_1\right)}\delta_{p\left(b_3 , a_1\right)+\hat{e}_i, p\left(b_3 , a_2\right)}\delta_{p\left(b_{10} , a_4\right)+\hat{e}_i, p\left(b_7 , a_1\right)} \\
        &\quad + (\text{other $\delta_{a_i,a_j}$ terms}) \\
        &\quad + \mathbb{E}_{S} \frac{1}{4^{n+2k}} \sum_{b_1,b_2,b_3,b_5,b_7,b_8,b_{11}} \sum_{a_1,a_3} (-1)^{\left[p\left(b_2 , a_1\right)\right]_j+\left[p\left(b_5 , a_1\right)\right]_j+\left[p\left(b_8 , a_3\right)\right]_j+\left[p\left(b_{11} , a_3\right)\right]_j}\\
        & \qquad\qquad\qquad\qquad\qquad\times (-1)^{b_1\cdot b_2 + b_2\cdot b_3 + b_1\cdot b_5 + b_5\cdot b_3 + b_7\cdot b_8 + b_8\cdot b_9 + b_7\cdot b_{11} + b_{11}\cdot b_9} \\
        & \qquad\qquad\qquad\qquad\qquad\times\delta_{p\left(b_3 , a_1\right)+\hat{e}_i, p\left(b_1 , a_1\right)}\delta_{p\left(b_9 , a_3\right)+\hat{e}_i, p\left(b_7 , a_3\right)} \\
        &\quad + \mathbb{E}_{S} \frac{1}{4^{n+2k}} \sum_{b_1,b_2,b_3,b_4,b_5,b_6,b_8,b_{11}} \sum_{a_1,a_2} (-1)^{\left[p\left(b_2 , a_1\right)\right]_j+\left[p\left(b_5 , a_2\right)\right]_j+\left[p\left(b_8 , a_1\right)\right]_j+\left[p\left(b_{11} , a_2\right)\right]_j}\\
        & \qquad\qquad\qquad\qquad\qquad\times (-1)^{b_1\cdot b_2 + b_2\cdot b_3 + b_4\cdot b_5 + b_5\cdot b_6 + b_1\cdot b_8 + b_8\cdot b_3 + b_4\cdot b_{11} + b_{11}\cdot b_6} \\
        & \qquad\qquad\qquad\qquad\qquad\times\delta_{p\left(b_6 , a_2\right)+\hat{e}_i, p\left(b_1 , a_1\right)}\delta_{p\left(b_3 , a_1\right)+\hat{e}_i, p\left(b_4 , a_2\right)} \\
        &\quad + \mathbb{E}_{S} \frac{1}{4^{n+2k}} \sum_{b_1,b_2,b_3,b_4,b_5,b_6,b_8,b_{11}} \sum_{a_1,a_2} (-1)^{\left[p\left(b_2 , a_1\right)\right]_j+\left[p\left(b_5 , a_2\right)\right]_j+\left[p\left(b_8 , a_1\right)\right]_j+\left[p\left(b_{11} , a_2\right)\right]_j}\\
        & \qquad\qquad\qquad\times (-1)^{b_1\cdot b_2 + b_2\cdot b_3 + b_4\cdot b_5 + b_5\cdot b_6 + b_1\cdot b_8 + b_8\cdot b_3 + b_6\cdot b_{11} + b_{11}\cdot b_4} \\
        & \qquad\qquad\qquad\times\delta_{p\left(b_6 , a_2\right)+\hat{e}_i, p\left(b_1 , a_1\right)}\delta_{p\left(b_3 , a_1\right)+\hat{e}_i, p\left(b_4 , a_2\right)}\delta_{p\left(b_4 , a_2\right)+\hat{e}_i, p\left(b_1 , a_1\right)}\delta_{p\left(b_3 , a_1\right)+\hat{e}_i, p\left(b_6 , a_2\right)} \\
        &\quad + (\text{other $\delta_{a_i,a_j}\delta_{a_k,a_l}$ terms}) \\
        &\quad + \mathbb{E}_{S} \frac{1}{4^{n+2k}} \sum_{b_1,b_2,b_3,b_5,b_6,b_8,b_{10},b_{11}} \sum_{a_1,a_4} (-1)^{\left[p\left(b_2 , a_1\right)\right]_j+\left[p\left(b_5 , a_1\right)\right]_j+\left[p\left(b_8 , a_1\right)\right]_j+\left[p\left(b_{11} , a_4\right)\right]_j}\\
        & \qquad\qquad\qquad\times (-1)^{b_1\cdot b_2 + b_2\cdot b_3 + b_1\cdot b_5 + b_5\cdot b_6 + b_3\cdot b_8 + b_8\cdot b_6}  \\
        & \qquad\qquad\qquad\times \delta_{p\left(b_6 , a_1\right)+\hat{e}_i, p\left(b_1 , a_1\right)} \delta_{p\left(b_3 , a_1\right)+\hat{e}_i, p\left(b_1 , a_1\right)}\delta_{p\left(b_{10} , a_4\right)+\hat{e}_i, p\left(b_3 , a_1\right)} \delta_{p\left(b_6 , a_1\right)+\hat{e}_i, p\left(b_{10} , a_4\right)} \\
        & \quad + (\text{other three-point delta function terms of $\{a_i\}$})\\
        & \quad + (\text{four-point delta function terms of $\{a_i\}$})\\
    \end{split}
\end{equation}

\begin{equation}\label{eq:XZ-OTOC-Hada-Var}
    \begin{split}
        &= \mathbb{E}_{S} \frac{1}{4^{n+k}} \sum_{b_2,b_5,b_8,b_{11}} \sum_{a_1,a_3} (-1)^{\left[p\left(b_2 , a_1\right)\right]_j+\left[p\left(b_5 , y_i(b_1,a_1)\right)\right]_j+\left[p\left(b_8 , a_3\right)\right]_j+\left[p\left(b_{11} , y_i(b_7,a_3)\right)\right]_j}\\
        &\quad + \mathbb{E}_{S} \frac{1}{4^{n+2k}} \sum_{b_1,b_2,b_5,b_7,b_8,b_{11}} \sum_{a_3} \left(\sum_{a_1} \delta_{a_1,y_i(b_1,a_1)}\right)(-1)^{b_1\cdot b_2 + b_2\cdot x_i(b_1,a_1) + b_1\cdot b_5 + b_5\cdot x_i(b_1,a_1)} \\    
        &\qquad\qquad\qquad\qquad\qquad\qquad\qquad\times (-1)^{\left[p\left(b_2 , a_1\right)\right]_j+\left[p\left(b_5 , a_1\right)\right]_j+\left[p\left(b_8 , a_1\right)\right]_j+\left[p\left(b_{11} , y_i(b_7,a_3)\right)\right]_j} \\
        &\quad + \mathbb{E}_{S} \frac{1}{4^{n+2k}} \sum_{b_1,b_2,b_3,b_5,b_7,b_8,b_{11}} \sum_{a} (-1)^{b_1\cdot b_2 + b_2\cdot b_3 + b_3\cdot b_5 + b_5\cdot b_1}\delta_{b_1,x_i(b_1,a)}\delta_{b_3,x_i(b_3,a)}\delta_{y_i(b_1,a),y_i(b_3,a)}\\
        &\qquad\qquad\qquad\qquad\qquad\qquad\qquad\times (-1)^{\left[p\left(b_2 , a\right)\right]_j+\left[p\left(b_5 , y_i(b_1,a)\right)\right]_j+\left[p\left(b_8 , a\right)\right]_j+\left[p\left(b_{11} , y_i(b_7,a)\right)\right]_j}  \\
        &\quad + (\text{other $\delta_{a_i,a_j}$ terms}) \\
        &\quad + \mathbb{E}_{S} \frac{1}{4^{n+2k}} \sum_{b_1,b_2,b_5,b_7,b_8,b_{11}} \sum_{a_1,a_3} (-1)^{b_1\cdot b_2 + b_2\cdot x_i(b_1,a_1) + b_1\cdot b_5 + b_5\cdot x_i(b_1,a_1) + b_7\cdot b_8 + b_8\cdot x_i(b_7,a_3) + b_7\cdot b_{11} + b_{11}\cdot x_i(b_7,a_3)} \\
        &\qquad\qquad\qquad\qquad\qquad\qquad\qquad\times (-1)^{\left[p\left(b_2 , a_1\right)\right]_j + \left[p\left(b_5 , a_1\right)\right]_j + \left[p\left(b_8 , a_3\right)\right]_j + \left[p\left(b_{11} , a_3\right)\right]_j}\delta_{a_1,y_i(b_1,a_1)}\delta_{a_3,y_i(b_7,a_3)}\\
        &\quad + \mathbb{E}_{S} \frac{1}{4^{n+2k}} \sum_{b_1,b_2,b_3,b_5,b_8,b_{11}} \sum_{a} (-1)^{b_1\cdot b_2 + b_2\cdot b_3 + x_i(b_3,a)\cdot b_5 + b_5\cdot x_i(b_1,a) + b_1\cdot b_8 + b_8\cdot b_3 + x_i(b_3,a)\cdot b_{11} + b_{11}\cdot x_i(b_1,a)} \\
        & \qquad\qquad\qquad\qquad\qquad\qquad\qquad\times (-1)^{\left[p\left(b_2 , a_1\right)\right]_j+\left[p\left(b_5 , a_2\right)\right]_j+\left[p\left(b_8 , a_1\right)\right]_j+\left[p\left(b_{11} , a_2\right)\right]_j}\delta_{y_i(b_1,a),y_i(b_3,a)}\\
        &\quad + \mathbb{E}_{S} \frac{1}{4^{n+2k}} \sum_{b_1,b_2,b_5,b_8,b_{11}} \sum_{a} (-1)^{\left[p\left(b_2 , a\right)\right]_j+\left[p\left(b_5 , y_i(b_1,a)\right)\right]_j+\left[p\left(b_8 , a\right)\right]_j+\left[p\left(b_{11} , y_i(b_1,a)\right)\right]_j}\\
        &\quad + (\text{other $\delta_{a_i,a_j}\delta_{a_k,a_l}$ terms}) \\
        &\quad + \mathbb{E}_{S} \frac{1}{4^{n+2k}} \sum_{b_1,b_2,b_3,b_5,b_8,b_{11}} \sum_{a} (-1)^{\left[p\left(b_2 , a\right)\right]_j+\left[p\left(b_5 , a\right)\right]_j+\left[p\left(b_8 , a\right)\right]_j+\left[p\left(b_{11} , a\right)\right]_j+b_1\cdot b_2 + b_2\cdot b_3 + b_1\cdot b_5 + b_5\cdot b_3}\\
        & \quad + (\text{other three-point delta function terms of $\{a_i\}$})\\
        & \quad + (\text{four-point delta function terms of $\{a_i\}$})\\
        &\leq \frac{1}{2^{n/2-k}} + \frac{1}{2^{n-O(k)}} + \frac{1}{2^{n-O(k)}}+\cdots+\frac{1}{2^{2n-O(k)}}+\frac{1}{2^{n-O(k)}}+\frac{1}{2^{n-O(k)}}+\cdots+\frac{1}{2^{n-O(k)}}+\cdots.
    \end{split}
\end{equation}
The terms in the last line of Eq.~\eqref{eq:XZ-OTOC-Hada-Var} are upper bounds of each of terms of the previous lines. We again use \textbf{Lemma}~\ref{thm:parity-sum} to upper bound parity summations of $\{[p(b_l,a_m)]_j\}$ and \textbf{Lemma}~\ref{thm:average-bit-flip} to upper bound the sum of $\delta_{a,y_i(b,a)}$ over $a$. All terms in the last line decay exponentially in $n$. Therefore, again, the ensemble of $\{O_{VW}(U)\}$ over random functions and permutations is concentrated near its ensemble average value.\\

\noindent
\hypertarget{sec:XX-OTOCs}{\textbf{XX-OTOCs}}\\
\indent Here, we consider the case of $V=X_i$ and $W=X_j$ with $i\neq j$. Then, $O_{VW}(U)$ is given by
\begin{equation}
    \begin{split}
    O_{VW}(U)
    &=\frac{1}{2^n} \sum_{\left\{b_i\right\}_{i=1}^{8}} \sum_{\left\{a_i\right\}_{i=1}^4}(-1)^{\sum_{i=1}^{8} f\left(b_i ,a_{\lfloor(i+1)/2\rfloor}\right)}[V]_{p\left(b_8 , a_4\right), p\left(b_1 , a_1\right)}\left[U\right]_{b_1, b_2}[W]_{p\left(b_2 , a_1\right), p\left(b_3 , a_2\right)} \\
    &\qquad\qquad\qquad\qquad \times \left[U^\dagger\right]_{b_3, b_4} [V]_{p\left(b_4 , a_2\right), p\left(b_5 , a_3\right)}\left[U\right]_{b_5, b_6}[W]_{p\left(b_6 , a_3\right), p\left(b_7 , a_4\right)}\left[U^\dagger\right]_{b_7, b_8}\\
    & =\frac{1}{2^n} \sum_{\left\{b_i\right\}_{i=1}^8} \sum_{\left\{a_i\right\}_{i=1}^4}(-1)^{f\left(b_1 , a_1\right)+f\left(b_2 , a_1\right)+f\left(b_3 , a_2\right)+f\left(b_4 , a_2\right)+f\left(b_5 , a_3\right)+f\left(b_6 , a_3\right)+f\left(b_7 , a_4\right)+f\left(b_8 , a_4\right)} \\
    &\qquad\qquad\qquad\qquad \times \delta_{p\left(b_8 , a_4\right)+\hat{e}_i, p\left(b_1 , a_1\right)} \delta_{p\left(b_4 , a_2\right)+\hat{e}_i, p\left(b_5 , a_3\right)} \delta_{p\left(b_2 , a_1\right)+\hat{e}_j, p\left(b_3 , a_2\right)} \delta_{p\left(b_6 , a_3\right)+\hat{e}_j, p\left(b_7 , a_4\right)} \\
    &\qquad\qquad\qquad\qquad \times \left[U\right]_{b_1, b_2}\left[U^\dagger\right]_{b_3, b_4}\left[U\right]_{b_5, b_6}\left[U^\dagger\right]_{b_7, b_8}\\
    & =\frac{1}{2^n} \sum_{\left\{b_i\right\}_{i=1}^4} \sum_{\left\{a_i\right\}_{i=1}^2}(-1)^{f\left(b_1 , a_1\right)+f\left(b_2 , a_1\right)+f\left(x_i\left(b_1, a_1\right) , y_i\left(b_1, a_1\right)\right)+f\left(x_j\left(b_2, a_1\right) , y_j\left(b_2, a_1\right)\right)} \\
    &\qquad\qquad\qquad\qquad \times (-1)^{f\left(b_3 , a_2\right)+f\left(b_4 , a_2\right)+f\left(x_i\left(b_3, a_2\right) , y_i\left(b_3, a_2\right)\right)+f\left(x_j\left(b_4, a_2\right) , y_j\left(b_4, a_2\right)\right)}\\
    & \qquad\qquad\qquad\qquad\times \delta_{y_j\left(b_2, a_1\right), y_i\left(b_3, a_2\right)} \delta_{y_j\left(b_4, a_2\right), y_i\left(b_1, a_1\right)}\\
    & \qquad\qquad\qquad\qquad\times\left[U\right]_{b_1, b_2}\left[U^\dagger\right]_{x_j\left(b_2, a_1\right), x_i\left(b_3, a_2\right)}\left[U\right]_{b_3, b_4}\left[U^\dagger\right]_{x_j\left(b_4, a_2\right), x_i\left(b_1, a_1\right)}
    \end{split}
\end{equation}
Its ensemble average is given by the sum of the following terms:

\hfill \break
\noindent
- $\delta_{b_1, b_2} \delta_{b_3, b_4}$
\begin{equation}\label{eq:XX-OTOC-first}
    \mathbb{E}_S \frac{1}{2^n} \sum_{b_1, b_3} \sum_{\left\{a_i\right\}_{i=1}^2} \delta_{y_j\left(b_1, a_1\right), y_i\left(b_3, a_2\right)}\delta_{y_j\left(b_3, a_2\right), y_i\left(b_1, a_1\right)}[U]_{b_1, b_1}[U]_{b_3, b_3}\left[U^*\right]_{x_i\left(b_3, a_2\right), x_j\left(b_1, a_1\right)}\left[U^*\right]_{x_i\left(b_1, a_1\right), x_j\left(b_3, a_2\right)}
\end{equation}
- $\delta_{a_1, a_2} \delta_{b_1, b_3} \delta_{b_2, b_4}$
\begin{equation}
    \mathbb{E}_S \frac{1}{2^n} \sum_{b_1, b_2} \sum_a \delta_{y_j\left(b_2, a\right), y_i\left(b_1, a\right)}[U]_{b_1, b_2}[U]_{b_1, b_2}\left[U^*\right]_{x_i\left(b_1, a\right), x_j\left(b_2, a\right)}\left[U^*\right]_{x_i\left(b_1, a\right), x_j\left(b_2, a\right)}    
\end{equation}
- $\delta_{a_1, a_2} \delta_{b_1, b_4} \delta_{b_2, b_3}$
\begin{equation}
    \mathbb{E}_S \frac{1}{2^n} \sum_{b_1, b_2} \sum_a \delta_{y_j\left(b_2, a\right), y_i\left(b_2, a\right)} \delta_{y_j\left(b_1, a\right), y_i\left(b_1, a\right)}[U]_{b_1, b_2}[U]_{b_2, b_1}\left[U^*\right]_{x_i\left(b_2, a\right), x_j\left(b_2, a\right)}\left[U^*\right]_{x_i\left(b_1, a\right), x_j\left(b_1, a\right)}
\end{equation}
- $\delta_{a_1, a_2} \delta_{b_1, b_3} \delta_{p(b_2,a_1),p(b_4,a_2)+\hat{e}_j}$
\begin{equation}
    \mathbb{E}_S\frac{1}{2^n} \sum_{b_1,b_2} \sum_{a} \delta_{a, y_j\left(b_2, a\right)} \delta_{a, y_i\left(b_1, a\right)} \left[U\right]_{b_1, b_2}\left[U\right]_{b_1, x_j(b_2,a)}\left[U^*\right]_{x_i\left(b_1, a\right), x_j\left(b_2, a\right)}\left[U^*\right]_{x_i\left(b_1, a\right),b_2}
\end{equation}
- $\delta_{a_1, a_2} \delta_{b_1, b_4} \delta_{p(b_2,a_1),p(b_3,a_2)+\hat{e}_i}$
\begin{equation}
    \mathbb{E}_S\frac{1}{2^n} \sum_{b_1,b_2} \sum_{a} \delta_{y_j\left(b_2, a\right), a} \delta_{y_j\left(b_1, a\right), y_i\left(b_1, a\right)} \left[U\right]_{b_1, b_2}\left[U\right]_{x_i(b_2,a), b_1}\left[U^*\right]_{b_2, x_j\left(b_2, a\right)}\left[U^*\right]_{x_i\left(b_1, a\right),x_j\left(b_1, a\right)}
\end{equation}
- $\delta_{a_1, a_2} \delta_{b_2, b_3} \delta_{p(b_1,a_1),p(b_4,a_2)+\hat{e}_j}$
\begin{equation}
    \mathbb{E}_S\frac{1}{2^n} \sum_{b_1,b_2} \sum_{a} \delta_{y_j\left(b_2, a\right), y_i\left(b_2, a\right)} \delta_{a, y_i\left(b_1, a\right)} \left[U\right]_{b_1, b_2}\left[U\right]_{b_2, x_j(b_1,a)}\left[U^*\right]_{x_i\left(b_2, a\right), x_j\left(b_2, a\right)}\left[U^*\right]_{x_i\left(b_1, a\right),b_1}
\end{equation}
- $\delta_{a_1, a_2} \delta_{b_2, b_4} \delta_{p(b_1,a_1),p(b_3,a_2)+\hat{e}_i}$
\begin{equation}
    \mathbb{E}_S\frac{1}{2^n} \sum_{b_1,b_2} \sum_{a} \delta_{y_j\left(b_2, a\right), a} \delta_{y_j\left(b_2, a\right), y_i\left(b_1, a\right)} \left[U\right]_{b_1, b_2}\left[U\right]_{x_i(b_1,a), b_2}\left[U^*\right]_{b_1, x_j\left(b_2, a\right)}\left[U^*\right]_{x_i\left(b_1, a\right),x_j\left(b_2, a\right)}
\end{equation}
- $\delta_{p\left(b_1, a_1\right), p\left(b_2, a_1\right)+\hat{e}_j} \delta_{p\left(b_2, a_1\right), p\left(b_3, a_2\right)+\hat{e}_i} \delta_{p\left(b_3, a_2\right), p\left(b_4, a_2\right)+\hat{e}_j} \delta_{p\left(b_4, a_2\right), p\left(b_1, a_1\right)+\hat{e}_i}$
\begin{equation}
    \mathbb{E}_S \frac{1}{2^n} \sum_b \sum_a \delta_{a, y_j(b, a)} \delta_{y_i(b, a), y_{i, j}(b, a)}[U]_{b, x_j(b, a)}[U]_{x_{i, j}(b, a), x_i(b, a)}\left[U^*\right]_{x_j(b, a), b}\left[U^*\right]_{x_i(b, a), x_{i, j}(b, a)}
\end{equation}
- $\delta_{p\left(b_1, a_1\right), p\left(b_2, a_1\right)+\hat{e}_j} \delta_{p\left(b_2, a_1\right), p\left(b_4, a_2\right)+\hat{e}_j} \delta_{p\left(b_4, a_2\right), p\left(b_3, a_2\right)+\hat{e}_i} \delta_{p\left(b_3, a_2\right), p\left(b_1, a_1\right)+\hat{e}_i}$
\begin{equation}
    \mathbb{E}_S \frac{1}{2^n} \sum_b \sum_a \delta_{a, y_i(b, a)} \delta_{a, y_j(b, a)}[U]_{b, x_j(b, a)}[U]_{x_i(b, a), b}\left[U^*\right]_{b, b}\left[U^*\right]_{x_i(b, a), x_j(b, a)}
\end{equation}
- $\delta_{p\left(b_1, a_1\right), p\left(b_4, a_2\right)+\hat{e}_j} \delta_{p\left(b_4, a_2\right), p\left(b_3, a_2\right)+\hat{e}_i} \delta_{p\left(b_3, a_2\right), p\left(b_2, a_1\right)+\hat{e}_j} \delta_{p\left(b_2, a_1\right), p\left(b_1, a_1\right)+\hat{e}_i}$
\begin{equation}
    \mathbb{E}_S \frac{1}{2^n} \sum_b \sum_a \delta_{a, y_i(b, a)} \delta_{y_j(b, a), y_{i, j}(b, a)}[U]_{b, x_i(b, a)}[U]_{x_{i, j}(b, a), x_j(b, a)}\left[U^*\right]_{x_j(b, a), x_{i, j}(b, a)}\left[U^*\right]_{x_i(b, a), b}
\end{equation}
- $\delta_{p\left(b_1, a_1\right), p\left(b_4, a_2\right)+\hat{e}_j} \delta_{p\left(b_4, a_2\right), p\left(b_2, a_1\right)+\hat{e}_j} \delta_{p\left(b_2, a_1\right), p\left(b_3, a_2\right)+\hat{e}_i} \delta_{p\left(b_3, a_2\right), p\left(b_1, a_1\right)+\hat{e}_i}$
\begin{equation}\label{eq:XX-OTOC-last}
    \mathbb{E}_S \frac{1}{2^n} \sum_b \sum_a \delta_{a, y_i(b, a)} \delta_{a, y_j(b, a)}[U]_{b, b}[U]_{x_i(b, a), x_j(b, a)}\left[U^*\right]_{b, x_j(b, a)}\left[U^*\right]_{x_i(b, a), b}
\end{equation}

Below, we compute upper bounds of each of terms in Eq.~\eqref{eq:XX-OTOC-first}-\ref{eq:XX-OTOC-last} for embedded Hadamard gate. Consequently, all terms are negligible in $n$.
\hfill \break
\noindent
- $\delta_{b_1, b_2} \delta_{b_3, b_4}$
\begin{equation}
    \begin{split}
        &\mathbb{E}_S \frac{1}{2^{n+2 k}} \sum_{b_1, b_3} \sum_{a_1, a_2}\delta_{y_j\left(b_1, a_1\right), y_i\left(b_3, a_2\right)}\delta_{y_j\left(b_3, a_2\right), y_i\left(b_1, a_1\right)}(-1)^{b_1 \cdot b_1+b_3 \cdot b_3+x_i\left(b_3, a_2\right) \cdot x_j\left(b_1, a_1\right)+x_i\left(b_1, a_1\right) \cdot x_j\left(b_3, a_2\right)} \\
        &\quad\leq \frac{1}{2^{n+2 k}} \sum_a \sum_{y_j\left(b_1, a_1\right)=a} \sum_{y_i\left(b_3, a_2\right)=a}1 = \frac{1}{2^k} \\
    \end{split}
\end{equation}
The last line comes from the fact that the size of $\{b_1,a_1 | y_j(b_1,a_1)=a\}$ is exactly $2^k$.

\noindent
- $\delta_{a_1, a_2} \delta_{b_1, b_3} \delta_{b_2, b_4}$
\begin{equation}
    \mathbb{E}_S \frac{1}{2^{n+2 k}} \sum_{b_1, b_2} \sum_a \delta_{y_j\left(b_2, a\right), y_i\left(b_1, a\right)}\leq \frac{1}{2^{n+2 k}} \sum_{b_1, b_2} \sum_a 1 = \frac{1}{2^k}
\end{equation}
- $\delta_{a_1, a_2} \delta_{b_1, b_4} \delta_{b_2, b_3}$
\begin{equation}
    \mathbb{E}_S \frac{1}{2^{n+2 k}} \sum_{b_1, b_2} \sum_a \delta_{y_j\left(b_2, a\right), y_i\left(b_2, a\right)} \delta_{y_j\left(b_1, a\right), y_i\left(b_1, a\right)}(-1)^{x_i\left(b_2, a\right) \cdot x_j\left(b_2, a\right)+x_i\left(b_1, a\right) \cdot x_j\left(b_1, a\right)} \leq \frac{1}{2^k}
\end{equation}
- $\delta_{a_1, a_2} \delta_{b_1, b_3} \delta_{p(b_2,a_1),p(b_4,a_2)+\hat{e}_j}$
\begin{equation}
    \begin{split}
        &\mathbb{E}_S\frac{1}{2^{n+2k}} \sum_{b_1,b_2} \sum_{a} \delta_{a, y_j\left(b_2, a\right)} \delta_{a, y_i\left(b_1, a\right)} (-1)^{b_1\cdot b_2 + b_1\cdot x_j(b_2,a) + x_i\left(b_1, a\right)\cdot x_j\left(b_2, a\right)+x_i\left(b_1, a\right)\cdot x_j\left(b_2, a\right)}\\
        &\quad\leq \frac{1}{2^{n+2k}} \sum_{b_1,b_2} \sum_{a} 1 = \frac{1}{2^k}        
    \end{split}
\end{equation}
- $\delta_{p\left(b_1 a_1\right), p\left(b_2 a_1\right)+\hat{e}_j} \delta_{p\left(b_2 a_1\right), p\left(b_3 a_2\right)+\hat{e}_i} \delta_{p\left(b_3 a_2\right), p\left(b_4 a_2\right)+\hat{e}_j} \delta_{p\left(b_4 a_2\right), p\left(b_1 a_1\right)+\hat{e}_i}$
\begin{equation}
    \mathbb{E}_S \frac{1}{2^{n+2 k}} \sum_b \sum_a \delta_{a, y_j(b, a)} \delta_{y_i(b, a), y_{i, j}(b, a)} \leq \frac{1}{2^{n+2 k}} \sum_b \sum_a 1 = \frac{1}{2^{2k}}
\end{equation}
- $\delta_{p\left(b_1 a_1\right), p\left(b_2 a_1\right)+\hat{e}_j} \delta_{p\left(b_2 a_1\right), p\left(b_4 a_2\right)+\hat{e}_j} \delta_{p\left(b_4 a_2\right), p\left(b_3 a_2\right)+\hat{e}_i} \delta_{p\left(b_3 a_2\right), p\left(b_1 a_1\right)+\hat{e}_i}$
\begin{equation}
    \mathbb{E}_S \frac{1}{2^{n+2k}} \sum_b \sum_a \delta_{a, y_i(b, a)} \delta_{a, y_j(b, a)}\leq \frac{1}{2^{n+2k}} \sum_b \sum_a = \frac{1}{2^{2k}}
\end{equation}

Next, we compute the ensemble average of the square of $O_{VW}(U)$. 
\begin{equation}
    \begin{split}
    \mathbb{E}_{S,f}\left[\left(O_{VW}(U)\right)^2\right]
    & =\frac{1}{4^n} \sum_{\left\{b_i\right\}_{i=1}^8} \sum_{\left\{a_i\right\}_{i=1}^4}(-1)^{f\left(b_1 , a_1\right)+f\left(b_2 , a_1\right)+f\left(x_i\left(b_1, a_1\right) , y_i\left(b_1, a_1\right)\right)+f\left(x_j\left(b_2, a_1\right) , y_j\left(b_2, a_1\right)\right)} \\
    &\qquad\qquad\qquad\quad \times (-1)^{f\left(b_3 , a_2\right)+f\left(b_4 , a_2\right)+f\left(x_i\left(b_3, a_2\right) , y_i\left(b_3, a_2\right)\right)+f\left(x_j\left(b_4, a_2\right) , y_j\left(b_4, a_2\right)\right)}\\
    &\qquad\qquad\qquad\quad\times (-1)^{f\left(b_5 , a_3\right)+f\left(b_6 , a_3\right)+f\left(x_i\left(b_5, a_3\right) , y_i\left(b_5, a_3\right)\right)+f\left(x_j\left(b_6, a_3\right) , y_j\left(b_6, a_3\right)\right)} \\
    &\qquad\qquad\qquad\quad \times (-1)^{f\left(b_7 , a_4\right)+f\left(b_8 , a_4\right)+f\left(x_i\left(b_7, a_4\right) , y_i\left(b_7, a_4\right)\right)+f\left(x_j\left(b_8, a_4\right) , y_j\left(b_8, a_4\right)\right)}\\
    & \qquad\qquad\qquad\quad\times \delta_{y_j\left(b_2, a_1\right), y_i\left(b_3, a_2\right)} \delta_{y_j\left(b_4, a_2\right), y_i\left(b_1, a_1\right)}\\
    & \qquad\qquad\qquad\quad\times \delta_{y_j\left(b_6, a_3\right), y_i\left(b_7, a_4\right)} \delta_{y_j\left(b_8, a_4\right), y_i\left(b_5, a_3\right)}\\
    & \qquad\qquad\qquad\quad\times\left[U\right]_{b_1, b_2}\left[U^\dagger\right]_{x_j\left(b_2, a_1\right), x_i\left(b_3, a_2\right)}\left[U\right]_{b_3, b_4}\left[U^\dagger\right]_{x_j\left(b_4, a_2\right), x_i\left(b_1, a_1\right)}\\
    & \qquad\qquad\qquad\quad\times\left[U\right]_{b_5, b_6}\left[U^\dagger\right]_{x_j\left(b_6, a_3\right), x_i\left(b_7, a_4\right)}\left[U\right]_{b_7, b_8}\left[U^\dagger\right]_{x_j\left(b_8, a_4\right), x_i\left(b_5, a_3\right)}
    \end{split}
\end{equation}

If this is negligible in $n$, then the variance of $O_{VW}(U)$ over the ensemble is also negligible in $n$. Now, let us compute this explicitly.

\begin{equation}
    \begin{split}
    \mathbb{E}_{S,f}\left[O_{VW}^2\right]
    & =\mathbb{E}_{S,f}\frac{1}{4^{n+2k}} \sum_{\left\{b_i\right\}_{i=1}^8} \sum_{\left\{a_i\right\}_{i=1}^4}(-1)^{f\left(b_1 , a_1\right)+f\left(b_2 , a_1\right)+f\left(x_i\left(b_1, a_1\right) , y_i\left(b_1, a_1\right)\right)+f\left(x_j\left(b_2, a_1\right) , y_j\left(b_2, a_1\right)\right)} \\
    &\qquad\qquad\qquad\qquad\quad \times (-1)^{f\left(b_3 , a_2\right)+f\left(b_4 , a_2\right)+f\left(x_i\left(b_3, a_2\right) , y_i\left(b_3, a_2\right)\right)+f\left(x_j\left(b_4, a_2\right) , y_j\left(b_4, a_2\right)\right)}\\
    &\qquad\qquad\qquad\qquad\quad\times (-1)^{f\left(b_5 , a_3\right)+f\left(b_6 , a_3\right)+f\left(x_i\left(b_5, a_3\right) , y_i\left(b_5, a_3\right)\right)+f\left(x_j\left(b_6, a_3\right) , y_j\left(b_6, a_3\right)\right)} \\
    &\qquad\qquad\qquad\qquad\quad \times (-1)^{f\left(b_7 , a_4\right)+f\left(b_8 , a_4\right)+f\left(x_i\left(b_7, a_4\right) , y_i\left(b_7, a_4\right)\right)+f\left(x_j\left(b_8, a_4\right) , y_j\left(b_8, a_4\right)\right)}\\
    & \qquad\qquad\qquad\qquad\quad\times \delta_{y_j\left(b_2, a_1\right), y_i\left(b_3, a_2\right)} \delta_{y_j\left(b_4, a_2\right), y_i\left(b_1, a_1\right)}\delta_{y_j\left(b_6, a_3\right), y_i\left(b_7, a_4\right)} \delta_{y_j\left(b_8, a_4\right), y_i\left(b_5, a_3\right)}\\
    & \qquad\qquad\qquad\qquad\quad\times (-1)^{b_1\cdot b_2 + x_j\left(b_2, a_1\right)\cdot x_i\left(b_3, a_2\right)+b_3\cdot b_4+x_j\left(b_4, a_2\right)\cdot x_i\left(b_1, a_1\right)}\\
    & \qquad\qquad\qquad\qquad\quad\times (-1)^{b_5\cdot b_6+x_j\left(b_6, a_3\right)\cdot x_i\left(b_7, a_4\right)+b_7\cdot b_8+x_j\left(b_8, a_4\right)\cdot x_i\left(b_5, a_3\right)}\\
    & =\mathbb{E}_{S}\frac{1}{4^{n+2k}} \sum_{\left\{b_i\right\}_{i=1}^8} \sum_{\left\{a_i\right\}_{i=1}^4} \delta_{y_j\left(b_2, a_1\right), y_i\left(b_3, a_2\right)} \delta_{y_j\left(b_4, a_2\right), y_i\left(b_1, a_1\right)}\delta_{y_j\left(b_6, a_3\right), y_i\left(b_7, a_4\right)} \delta_{y_j\left(b_8, a_4\right), y_i\left(b_5, a_3\right)}\\
    &\qquad\qquad\times (-1)^{b_1\cdot b_2 + x_j\left(b_2, a_1\right)\cdot x_i\left(b_3, a_2\right)+b_3\cdot b_4+x_j\left(b_4, a_2\right)\cdot x_i\left(b_1, a_1\right)}\\
    &\qquad\qquad\times (-1)^{b_5\cdot b_6+x_j\left(b_6, a_3\right)\cdot x_i\left(b_7, a_4\right)+b_7\cdot b_8+x_j\left(b_8, a_4\right)\cdot x_i\left(b_5, a_3\right)}\\
    &\qquad\qquad\times [(\delta_{b_1,b_2}\delta_{b_3,b_4}+ \delta_{a_1,a_2}(\delta_{b_1,b_3}\delta_{b_2,b_4}+\delta_{b_1,b_4}\delta_{b_2,b_3}+\delta_{b_1, b_3}\delta_{p(b_2,a_1),p(b_4,a_2)+\hat{e}_j}\\
    &\qquad\qquad\qquad +\delta_{b_1, b_4}\delta_{p(b_2,a_1),p(b_3,a_2)+\hat{e}_j}+\delta_{b_2, b_3}\delta_{p(b_1,a_1),p(b_4,a_2)+\hat{e}_j}+\delta_{b_2, b_4}\delta_{p(b_1,a_1),p(b_3,a_2)+\hat{e}_j})\\
    &\qquad\qquad\qquad+\delta_{p\left(b_1 a_1\right), p\left(b_2 a_1\right)+\hat{e}_j} \delta_{p\left(b_2 a_1\right), p\left(b_3 a_2\right)+\hat{e}_i} \delta_{p\left(b_3 a_2\right), p\left(b_4 a_2\right)+\hat{e}_j} \delta_{p\left(b_4 a_2\right), p\left(b_1 a_1\right)+\hat{e}_i}+\cdots)\\
    &\qquad\qquad\qquad\times\text{(cross contraction terms of $\{a_3,a_4\}$)}\\
    &\qquad\qquad\quad+\text{(cross contraction terms of other $\{a_k,a_l\}\times\{a_q,a_r\}$)}\\
    &\qquad\qquad\quad+\delta_{a_1,a_2,a_3}\delta_{b_7,b_8}(\delta_{b_1,b_3}\delta_{b_4,b_5}\delta_{b_6,b_2}+\text{(permutations)})\\
    &\qquad\qquad\quad+\delta_{a_1,a_2}\delta_{b_7,b_8}(\delta_{b_1,b_3}\delta_{p(b_2,a_1),p(b_5,a_3)+\hat{e}_i}\delta_{p(b_4,a_2),p(b_6,a_3)+\hat{e}_j}+\text{(other choices of $\{b_i,b_j\}$)})\\
    &\qquad\qquad\quad+\delta_{b_7,b_8}(\delta_{p(b_1,a_1),p(b_3,a_2)+\hat{e}_i}\delta_{p(b_2,a_1),p(b_6,a_3)+\hat{e}_j}\delta_{p(b_3,a_2),p(b_5,a_3)+\hat{e}_i}\delta_{p(b_4,a_2),p(b_2,a_1)+\hat{e}_j}\\
    &\qquad\qquad\qquad\times\delta_{p(b_5,a_3),p(b_1,a_1)+\hat{e}_i}\delta_{p(b_6,a_3),p(b_4,a_2)+\hat{e}_j}+\text{(permutations)})\\
    &\qquad\qquad\quad+\text{(cross contraction terms of other $\{a_k,a_l,a_q\}\times\{a_r\}$)}\\
    &\qquad\qquad\quad + (\text{fourth-order cross contraction terms})]\\
    \end{split}
\end{equation}

Here, the contraction means the constraint on two random functions $f(b_1,a_1)$ and $f(b_2,a_2)$ having the same arguments, $b_1=b_2$ and $a_1=a_2$, such that the average of $(-1)^{f(b_1,a_1)+f(b_2,a_2)}$ over random functions $f$ is non-vanishing. In addition, a cross contraction of $\{a_k,a_l\}$ means a contraction of $f(b_1,a_k)$ and bit-flipped one $f(x_i(b_2,a_l),y_i(b_2,a_l))$. A cross contraction of $\{a_k,a_l\}\times\{a_q,a_r\}$ means a pair of contractions of $f(b_1,a_k)$ with $f(x_i(b_2,a_l),y_i(b_2,a_l))$ and $f(b_3,a_q)$ with $f(x_i(b_4,a_r),y_i(b_4,a_r))$. Finally, a cross contraction of $\{a_k,a_l,a_q\}\times\{a_r\}$ is a simultaneous contraction of three random functions with at least one random function with and without bit-flipping. Cross contraction terms of $\{a_i,a_j\}\times\{a_k,a_l\}$ are products of terms in $\mathbb{E}_{S,f}\left[O_{VW}\right]$. Thus, we only need to consider higher order cross contraction terms such as of $\{a_1,a_2,a_3\}\times\{a_4\}$ to compute its variance. Below, we neglect sign factors of due to Hadamard gate since it does not affect the magnitude of each term. One may see that all terms decay exponentially in $n$, and thus, the variance also decays exponentially.

\noindent
- $\delta_{a_1,a_2,a_3}\delta_{b_7, b_8}\delta_{b_1, b_3}\delta_{b_4, b_5}\delta_{b_2,b_6}$
\begin{equation}
    \begin{split}
    & \mathbb{E}_S \frac{1}{4^{n+2k}} \sum_{b_1,b_2,b_4,b_7} \sum_{a_1,a_4} \delta_{y_j\left(b_2, a_1\right), y_i\left(b_3, a_1\right)} \delta_{y_j\left(b_4, a_1\right), y_i\left(b_1, a_1\right)}\delta_{y_j\left(b_2, a_1\right), y_i\left(b_7, a_4\right)} \delta_{y_j\left(b_7, a_4\right), y_i\left(b_4, a_1\right)}\\
    &\leq \mathbb{E}_S \frac{1}{4^{n+2k}} \sum_{b_1,b_2,b_4,b_7} \sum_{a_1,a_4} \delta_{y_j\left(b_7, a_4\right), y_i\left(b_4, a_1\right)}\\
    &= \frac{1}{2^{2n+3k}} \sum_{b_1,b_2,b_4} \sum_{a_1} 1 \\
    &= \frac{1}{2^{n+k}}\\
    \end{split}
\end{equation}
- $\delta_{a_1,a_2}\delta_{b_7, b_8}\delta_{b_1, b_3}\delta_{p(b_2,a_1),p(b_5,a_3)+\hat{e}_i}\delta_{p(b_4,a_2),p(b_6,a_3)+\hat{e}_j}$
\begin{equation}
    \begin{split}
    & \mathbb{E}_S \frac{1}{4^{n+2k}} \sum_{b_1,b_2,b_4,b_7} \sum_{a_1,a_4} \delta_{y_j\left(b_2, a_1\right), y_i\left(b_1, a_1\right)} \delta_{y_j\left(b_4, a_1\right), y_i\left(b_1, a_1\right)}\delta_{a_1, y_i\left(b_7, a_4\right)} \delta_{y_j\left(b_7, a_4\right), a_1}\delta_{y_i(b_2,a_1),y_j(b_4,a_1)}\\
    &\leq \mathbb{E}_S\frac{1}{4^{n+2k}} \sum_{b_1,b_2,b_4,b_7} \sum_{a,a_1,a_4} \delta_{a_1, y_i\left(b_7, a_4\right)} \delta_{a,y_i(b_2,a_1)}\\
    &= \mathbb{E}_S\frac{1}{2^{2n+3k}} \sum_{b_1,b_2,b_4} \sum_{a,a_1} \delta_{a,y_i(b_2,a_1)}\\
    &= \frac{1}{2^{2n+3k}} \sum_{b_1,b_4} \sum_{a} 1\\
    &= \frac{1}{2^{n+2k}}
    \end{split}
\end{equation}
- $\delta_{b_7,b_8}\delta_{p(b_1,a_1),p(b_3,a_2)+\hat{e}_i}\delta_{p(b_2,a_1),p(b_6,a_3)+\hat{e}_j}\delta_{p(b_3,a_2),p(b_5,a_3)+\hat{e}_i}\delta_{p(b_4,a_2),p(b_2,a_1)+\hat{e}_j}\delta_{p(b_5,a_3),p(b_1,a_1)+\hat{e}_i}\delta_{p(b_6,a_3),p(b_4,a_2)+\hat{e}_j}$
\begin{equation}
    \begin{split}
    & \mathbb{E}_S \frac{1}{4^{n+2k}} \sum_{\left\{b_i\right\}_{i=1}^7} \sum_{a_1,a_4} \delta_{y_j\left(b_2, a_1\right), a_1} \delta_{a_1, y_i\left(b_1, a_1\right)}\delta_{y_i(b_1,a_1), y_i\left(b_7, a_4\right)} \delta_{y_j\left(b_7, a_4\right), a_1}\\
    &\leq \mathbb{E}_S \frac{1}{4^{n+2k}} \sum_{\left\{b_i\right\}_{i=1}^7} \sum_{a,a_1,a_4}\delta_{a, y_i(b_1,a_1)} \delta_{a, y_i\left(b_7, a_4\right)} \delta_{a_1, y_j\left(b_7, a_4\right)}\\
    &\leq \frac{1}{2^{2n+2k}} \sum_{\left\{b_i\right\}_{i=2}^6} \sum_{a}1\\
    &\leq \frac{1}{2^{n-2k}}
    \end{split}
\end{equation}

\newpage
\hypertarget{sec:micro-time}{}
\subsection*{5. {Microscopic time} behavior of OTOCs}
{In this subsection, we study behavior of $\mathbb{E}_{S,f}\left[C_{VW}(U(t))\right]$ of an RSED in the microscopic time scale. Here, the microscopic time scale refers a time window allowing the perturbation expansion of a time evolution operator.} For simplicity, we focus on the case with $V=Z_i$ and $W=Z_j$ with $i\neq j$. Let $h$ be the embedded Hamiltonian. Then, at early time $t=\epsilon\ll 1$, $\mathbb{E}_{S,f}\left[C_{VW}(U(t))\right]$ is given by
\begin{equation}\label{eq:OTOC-early-slope}
    \begin{split}
        \mathbb{E}_{S,f}\left[C_{VW}(U(t))\right]
        &= 1 - \real \left[ \tr\left(\left(U . * U . * U^*\right) U^{\dagger}\right) \right]\\
        &= 1 - \real \left[ \tr\left(\left[e^{-i \epsilon h} . * e^{-i \epsilon h} . * e^{i \epsilon h^*}\right] e^{i \epsilon h}\right) \right]\\
        &= \frac{\epsilon^2}{2}\tr\left(3\operatorname{diag}\left(h^2\right)+ h^2-2\operatorname{diag}\left(h . * h^*\right)-2\operatorname{diag}(h) h\right) + O(\epsilon^3).
    \end{split}
\end{equation}
Here, $\operatorname{diag}(h)$ is $h$ with vanishing off-diagonal terms. At the third line, we use the Taylor expansion of $e^{\pm i\epsilon  H}$. Let us compute the maximum value of the initial slope over all Hermitian matrices having energies in $[-1,1]$. The first two terms of the last line of Eq.~\eqref{eq:OTOC-early-slope} is maximized by setting $H^2=I$. Let $h_D$ and $h_O$ be the diagonal and off-diagonal parts of $h$, respectively, then the remaining terms become
\begin{equation}
    \tr\left(\operatorname{diag}(h . * h)+\operatorname{diag}(h) h\right)=2\tr\left( h_D^2+h_D h_O\right).
\end{equation}
This can be minimized by setting $h_D=0$. These indicate that a hamiltonian with off-diagonal terms larger than diagonal terms grow OTOCs quickly at early time. An example of $h$ that maximizes the initial growth rate is $h=X$. In this case, $\mathbb{E}_S\left[C_{VW}(U(t))\right]$ is given by $2 t^2$ at early time. 

\newpage
{
\subsection*{6. Early-time behavior of OTOCs and Lyanpnov exponent}
In this subsection, we discuss early time behavior of $\mathbb{E}_{S,f}[C_{VW}(U(t)]$ of an RSED, specifically regarding its Lyanpnov exponent. 

The Lyapunov exponent of a quantum system is typically defined within a specific time window—commonly referred to in the literature as the ``early-time regime.'' This lies between the microscopic timescale, where the OTOC grows as a power law as discussed in \hyperlink{sec:micro-time}{Supplementary Note 5}, and the intermediate regime, where the OTOC begins to saturate~\cite{Garc_a_Mata_2023}. The onset of the intermediate regime is characterized by a timescale \( t_* \), which corresponds to the time at which the commutator \( [e^{-iHt} W e^{iHt}, H] \) between the Hamiltonian \( H \) and a local operator \( W \) becomes significantly large~\cite{Maldacena2016chaos}. In systems governed by local Hamiltonians, this timescale is given by \( t_* = |\mathbf{x}| / v_B \), where \( |\mathbf{x}| \) is the spatial separation between the local operators in the OTOC, and \( v_B \) is the butterfly velocity of the system~\cite{Khemani2018}.

However, in the case of a \textit{nonlocal} Hamiltonian—as considered in our work—the butterfly velocity \( v_B \) diverges, leading to a vanishing timescale \( t_* \). In other words, the commutator \( [e^{-iHt} W e^{iHt}, H] \) can become large within an \( o(1) \) timescale, during which the perturbative expansion of \( e^{-iHt} W e^{iHt} \) remains valid and the OTOC still grows as a power law, as demonstrated in the previous section. As a result, there is no well-defined time window in which the OTOC exhibits exponential growth, rendering the notion of a Lyapunov exponent inapplicable in this setting.
}

\newpage
\subsection*{7. Numerical calculation of OTOCs of many-body system}

In this section, we numerically study OTOCs of three different embedded hamiltonians: a parent Hamiltonian of the Hadamard gate, parent Hamiltonians of random sign Hadamard gates, and the Sachdev–Ye–Kitaev (SYK) model in terms of Pauli operators~\cite{Hanada2024}. For a given gate $U$, we take its paranent hamiltonian $h$ as $(i/2\pi)\log U$. Random sign Hadamard gates are given by the products of the Hadamard gate and diagonal random sign operators. \\

\noindent
\textbf{Hadamard gate}\\
\indent Let us first discuss OTOCs of a RSED with the products of the Hadamard gate as the embedded unitary operator $u$. \hyperlink{fig:Hada}{Supplementary Figure 1} shows OTOCs of this dynamics, which clearly has the periodicity of $T=1$. The periodic nature of this dynamics comes from the equal level spacing of the Hamiltonian. Although its OTOCs saturate with negligible errors of $O(2^{-k})$ at certain times, it cannot be considered as a pseudochaotic dynamics, as the saturation does not persist. To make the saturation persist, its spectrum should be randomized.\\

\hypertarget{sec:random-sign-Hadamard}
\noindent
\textbf{Random-sign Hadamard gate}\\
\indent The spectrum of the products of the Hadamard gates $H^{\otimes k}$ can be randomized by multiplying a diagonal random-sign matrix $[P]_{b,b}=(-1)^{\phi(b)}$ for some random function $\phi:\{0,1\}^k\rightarrow\{0,1\}$. Specifically, we consider $U(\tau)$ embedding $u^\tau = (H^{\otimes k}P)^\tau$ for some positive time $\tau>0$. The matrix elements of $u^\tau$ at an integer $\tau=m$ are given by
\begin{equation}
    (H^{\otimes k}P)^m \ket{p(b_0,a)} = \sum_b c_{b,b0} (-1)^{f(b_0,a)+f(b,a)}\ket{p(b,a)}
\end{equation}
with 
\begin{equation}
    c_{b,b_0} = \frac{1}{2^{mk/2}}\sum_{\{b_i\}_{i=1}^{m-1}}(-1)^{\sum_{i=1}^m[\phi(b_i)+b_{i-1}\cdot b_i]}
    \quad\text{and}\quad
    b_m=b.
\end{equation}
Here, $k$ is the subsystem size. Since $c_{b,b_0}$ is given by the mean value of random signs, its distribution approaches to the standard normal distribution. This can be directly seen from the level statistics of $u=H^{\otimes k}P$ in \hyperlink{fig:HadaP-stat}{Supplementary Figure 2}, which fits with the Wigner-Dyson distribution for the Gaussian orthogonal ensemble {except the peak at $\Delta E=0$ due to exponential degeneracies originated from the identity operator acting on remaining $n-k$ qubits}. This immediately implies that $u$ is not chaotic even within the subspace, which should follow the Gaussian unitary ensemble. This is consistent with the fact that the first $o(K)$ coordinates of a random vector on the unit $K$-sphere follows the standard normal distribution~\cite{Diaconis1987}. If we set $K=2^k$, then $c_{b,b_0}$ has the mean magnitude of $O(2^{-k/2})$, and thus, OTOCs of $u^\tau$ saturates to unity with a negligible error as demonstrated in \hyperlink{fig:HadaP}{Supplementary Figure 3} and due to \textbf{Theorem}~\ref{thm:negl-otoc-supp}. 

At the same time, $u^\tau$ can be used to generate a pseudorandom state ensemble. This is because $c_{b,b_0}$ has a negligible magnitude with a failure probability negligible in $n$. First, since $c_{b,b_0}$ follows a normal distribution, its absolute value square is bounded by the following inequality:
\begin{equation}
    \operatorname{Prob}\left[\abs{c_{b,b_0}}^2 > a K^{-1}\right] \approx e^{-a/2}
\end{equation}
for some constant $a$. Let us set $a=K^{1-\epsilon}$ for some $\epsilon>0$. Then, we have
\begin{equation}
    \operatorname{Prob}\left[\abs{c_{b,b_0}}^2 > K^{-\epsilon}\right] \approx e^{-K^{1-\epsilon}/2} = \operatorname{negl}(n).
\end{equation}
This proves our claim. Then, due to \textbf{Theorem}~\ref{thm:computation-prob-for-randomness}, it generates a pseudorandom state ensemble.\\

\noindent
\textbf{Pauli SYK model}\\
\indent Finally, let us consider the Sachdev–Ye–Kitaev (SYK) model, which is the prototypical model exhibiting quantum chaos. Here, we specifically consider its spin version~\cite{Hanada2024}. The Hamiltonian of the Pauli SYK model is given by~\cite{Hanada2024}
\begin{equation}\label{eq:Pauli-SYK}
    h' = \mathcal{N}(N_\mathrm{maj}) \sum_{1\leq a<b<c<d\leq N_\text{maj}} J_{abcd}i^{\eta_{abcd}}\chi_a\chi_b\chi_c\chi_d
\end{equation}
with
\begin{equation}
    \chi_{2n-1} = X_n, \qquad \chi_{2n} = Y_n, \qquad \eta_{abcd} = \delta_{\lceil \frac{a}{2} \rceil,\lceil \frac{b}{2} \rceil}\delta_{\lceil \frac{b}{2} \rceil,\lceil \frac{c}{2} \rceil}\delta_{\lceil \frac{c}{2} \rceil,\lceil \frac{d}{2} \rceil},
\end{equation}
random variables $\{J_{abcd}\}$ following the standard normal distribution, and a positive constant $ \mathcal{N}$ that determines the overall energy scale of $h'$. The energy scale is normalized so that the maximum and minimum energies are $\epsilon_\mathrm{max}=1$ and $\epsilon_\mathrm{min}=-1$. \hyperlink{fig:Pauli-SYK}{Supplementary Figure 4} shows that Poisson bracket OTOCs of the SYK model saturate to unity for all time $t>1$. 

We note that random-sign Hadamard gates and Pauli SYK Hamiltonians exhibit quadratic growth of $C_{VW}$ at the beginning. This is a consequence of the discussion in \hyperlink{sec:micro-time}{Supplementary Note 5}.

\newpage
{
\subsection*{8. Finite temperature OTOCs}
In this subsection, we discuss OTOCs of a pseudochaotic system at a finite temperature. To this end, we first recall that the correct formula $C_{VW}(U)$ for OTOCs at finite temperature is given by  
\begin{equation}
    \begin{split}
        C_{VW}(U) 
        &= - \frac{1}{2}\langle [V(t),W]^2 \rangle_\beta \\
        &= 1 - \frac{1}{2}\tr(\rho_\beta (V(t)WV(t)W + WV(t)WV(t)))\\
        &= 1 - \Re[\tr(\rho_\beta V(t)WV(t)W)], 
    \end{split}
\end{equation}
with $\rho_\beta = e^{-\beta H}/\mathcal{Z}$. In the following, we illustrate the calculation of $C_{VW}(U)$ for an RSED with \( V = Z_i \) and \( W = Z_j \), where \( i \neq j \) for illustrative purposes. 

We find it useful to expand $e^{-\beta H}$ in terms of the embedded Hamiltonian $h = \frac{1}{2\pi i} \log  u $ as
\begin{equation}
    e^{-\beta H} = \sum_{a\in\{0,1\}^{n-k}} O_a e^{-\beta h} O_a^\dag.
\end{equation}\\ 
It will also turn out to be convenient to define the thermal state without random phase factors $\rho'_\beta=e^{-\beta H'}/\mathcal{Z}$ with
\begin{equation}
    e^{-\beta H'} = \sum_{a\in\{0,1\}^{n-k}} \sum_{b,b'\in\{0,1\}^k} \ketbra{p(b,a)}{b,a}e^{-\beta h}\ketbra{b'a}{p(b'a)}.
\end{equation}
With these, we now compute the OTOC explicitly. Then, $\tr(\rho_\beta V(t)WV(t)W)$ is given by
\begin{equation}
    \begin{split}
        \tr(\rho_\beta V(t)WV(t)W)
        &=\sum_{\left\{b_i\right\}_{i=1}^9} \sum_{\left\{a_i\right\}_{i=1}^5}(-1)^{\sum_{i=1}^8 f\left(b_i , a_i\right)}[\rho_\beta]_{p(b_9,a_5),p(b_1,a_1)}[u]_{b_1, b_2}[V]_{p(b_2,a_1),p(b_3,a_2)}[u^\dag]_{b_3,b_4}  \\
        &\qquad\qquad\qquad\qquad \times[W]_{p(b_4,a_2),p(b_5,a_3)}[u]_{b_5, b_6}[V]_{p(b_6,a_3), p(b_7,a_4)}[u^\dag]_{b_7, b_8}[W]_{p(b_8,a_4),p(b_9,a_5)}\\
        &=\sum_{\left\{b_i\right\}_{i=1}^5} \sum_{a} (-1)^{[p(b_2,a)]_i+[p(b_3,a)]_j+[p(b_4,a)]_i+[p(b_5,a)]_j}\\
        &\qquad\qquad\qquad\times[\rho'_\beta]_{p(b_5,a),p(b_1,a)} [u]_{b_1, b_2}[u^\dag]_{b_2, b_3}[u]_{b_3, b_4}[u^\dag]_{b_4, b_5}.
    \end{split}
\end{equation}
By averaging over random subsets and phases, we get
\begin{equation}
    \begin{split}
        \mathbb{E}_{p, f} \tr(\rho_\beta V(t)WV(t)W) 
        & = \mathbb{E}_p \sum_{z,z'\in[N]}\sum_{\left\{b_i\right\}_{i=1}^5} \sum_a (-1)^{[p(b_1,a)]_j+[p(b_2,a)]_i+[p(b_3,a)]_j+[p(b_4,a)]_i}\\
        &\qquad\qquad\qquad\quad\times\delta_{z,p(b_1,a)}\delta_{z',p(b_5,a)}[\rho'_\beta]_{z,z'} [u]_{b_5, b_2}[u^\dag]_{b_2, b_3}[u]_{b_3, b_4}[u^\dag]_{b_4, b_1}\\
        &= \sum_{\substack{z,z'\in[N]\\a\in[N/K]}}\left(\sum_{\substack{\left\{b_i\right\}_{i=1}^5\in[K]^{1^5}_\mathrm{dist}\\\left\{z_i\right\}_{i=1}^5\in[N]^{1^5}_\mathrm{dist}}}\frac{(N-5)!}{N!}+\sum_{\substack{\left\{b_i\right\}_{i=1}^5\in[K]^{1^3,2}_\mathrm{dist}\\\left\{z_i\right\}_{i=1}^5\in[N]^{1^3,2}_\mathrm{dist}}}\frac{(N-4)!}{N!}\right.\\
        &\qquad\qquad\quad+\sum_{\substack{\left\{b_i\right\}_{i=1}^5\in[K]^{1,2^2}_\mathrm{dist}\\\left\{z_i\right\}_{i=1}^5\in[N]^{1,2^2}_\mathrm{dist}}}\frac{(N-3)!}{N!}+\sum_{\substack{\left\{b_i\right\}_{i=1}^5\in[K]^{1,4}_\mathrm{dist}\\\left\{z_i\right\}_{i=1}^5\in[N]^{1,4}_\mathrm{dist}}}\frac{(N-2)!}{N!}\\
        &\left.\qquad\qquad\quad+\sum_{\substack{\left\{b_i\right\}_{i=1}^5\in[K]^{5}_\mathrm{dist}\\\left\{z_i\right\}_{i=1}^5\in[N]^{5}_\mathrm{dist}}}\frac{1}{N}+\cdots\right)(-1)^{[z_1]_j+[z_2]_i+[z_3]_j+[z_4]_i}\\
        &\qquad\qquad\quad\times\delta_{z,z_1}\delta_{z',z_5}[\rho'_\beta]_{z,z'} [u]_{b_5, b_2}[u^\dag]_{b_2, b_3}[u]_{b_3, b_4}[u^\dag]_{b_4, b_1}.
    \end{split}
\end{equation}
Here, $[N]$ represents the set of numbers from one to $N$, and $[N]^{p}_\mathrm{dist}$ is the set of $p$-partitioned distinct numbers in $[N]$. A partition is denoted, for example, as $1^5$ which means five partitions with size one. If $p=(1^3,2)$, then it represents three partitions with size one and single partition with size two. In this expression, we note that the summation of the sign factor over $z_1,z_2,z_3,z_4\in[N]$  
\begin{align}
(-1)^{s}, \label{signfactor}
\end{align} 
with $s=[z_1]_j+[z_2]_i+[z_3]_j+[z_4]_i$ enforces $z_1=z_3$ and $z_2=z_4$. If $z_1\neq z_3$ or $z_2\neq z_4$ then phase factors different $\{z_2,z_3,z_4\}$ will cancel each others. To explicitly demonstrate this, we will set $i=1$ and $j=2$ for simplicity below. There are seven distinct possibilities: cases with assigning distinct values on different partitions for all possible partitions of $\{b_i\}_{i=1}^5$. Below, we present representative examples of these cases. \\ 

\noindent
\textbf{Example 1. Cases with two distinct bits among $\{ z_1, z_2, z_3, z_4 \}$}, i.e., $z_1=z_2$, $z_3=z_4$, and $z_1\neq z_3$: In this case, we have $s=z_1[2]+z_1[1]+z_3[2]+z_3[1]$ in Eq.\eqref{signfactor}. Let us first assume that $z_1[1]=0$ and $z_1[2]=0$. Then, the possibilities of $z_3$ having the first two bits as $\{00,01,10,11\}$ are given in \hyperlink{tab:example-1}{Supplementary Table 1}. Thus, the fraction of $s$ having the odd parity is given by $\frac{N}{2(N-1)}=\frac{1}{2}+O(N^{-1})$. For other values of the first two bits of $z_1$, the odd parity fractions are given by $\frac{1}{2}+O(N^{-1})$. This implies that the sum of $(-1)^s$ with $z_1=z_2$ and $z_3=z_4$ over $\{z_2,z_3,z_4\}$ multiplied by the overall normalization constant is $O(N^{-1})$. Hence, the terms associated with this case contribute to the OTOC at order \( O(N^{-1}) \), where \( N = 2^n \) denotes the dimension of the full Hilbert space.\\

\hypertarget{tab:example-1}{}
\begin{table}[h] 
    \begin{tabular}{cc}
        \hline\hline
        First two bits of $z_3$ & Possibilities of $\{z_3\}$ \\        
        \hline
        00 & $N/4-1$ \\
        01 & $N/4$ \\
        10 & $N/4$ \\
        11 & $N/4$ \\
        \hline\hline
    \end{tabular}
    \begin{center}
        \textbf{Supplementary Table 1}. $z_1=z_2$, $z_3=z_4$, and $z_1\neq z_3$
    \end{center}
\end{table}

\noindent
\textbf{Example 2. Cases with three distinct bits among $\{ z_1, z_2, z_3, z_4 \}$} with $z_1=z_2$, $z_1\neq z_3$, $z_1\neq z_4$, and $z_3\neq z_4$: In this case, we have $s=z_1[2]+z_1[1]+z_3[2]+z_4[1]$ in Eq.\eqref{signfactor}. Once again, for illustrative purposes, we assume that the first two bits of \( z_1 \) are \( 00 \). Then, the possibilities of $z_3$ and $z_4$ having the first two bits as $\{00,01,10,11\}$ are given in \hyperlink{tab:example-2}{Supplementary Table 2}. Thus, the fraction of $s$ having the odd parity is given by $\frac{N(N-3)}{2(N-1)(N-2)}=\frac{1}{2}+O(N^{-1})$. This implies that the terms associated with this case gives $O(N^{-1})$ contribution to the OTOC. \\ 

\hypertarget{tab:example-2}{}
\begin{table}[h]
    \begin{tabular}{ccc}
        \hline\hline
        First two bits of $z_3$ & First two bits of $z_4$ & Possibilities of $\{z_3,z_4\}$ \\        
        \hline
        00 & 00 & $(N/4-1)(N/4-2)$ \\
        01 & 00 & $(N/4)(N/4-1)$ \\
        10 & 00 & $(N/4)(N/4-1)$ \\
        11 & 00 & $(N/4)(N/4-1)$ \\
        00 & 01 & $(N/4)(N/4-1)$ \\
        00 & 10 & $(N/4)(N/4-1)$ \\
        00 & 11 & $(N/4)(N/4-1)$ \\
        01 & 01 & $(N/4)(N/4-1)$ \\
        10 & 01 & $(N/4)(N/4)$ \\
        11 & 01 & $(N/4)(N/4)$ \\
        01 & 10 & $(N/4)(N/4)$ \\
        10 & 10 & $(N/4)(N/4-1)$ \\
        11 & 10 & $(N/4)(N/4)$ \\
        01 & 11 & $(N/4)(N/4)$ \\
        10 & 11 & $(N/4)(N/4)$ \\
        11 & 11 & $(N/4)(N/4-1)$ \\
        \hline\hline
    \end{tabular}
    \begin{center}
        \textbf{Supplementary Table 2}. $z_1=z_2$, $z_1\neq z_3$, $z_1\neq z_4$, and $z_3\neq z_4$
    \end{center}
\end{table}

\noindent
\textbf{Example 3. Cases with three distinct bits among $\{z_1,z_2,z_3,z_4\}$} with $z_1=z_3$, $z_1\neq z_2$, $z_1\neq z_4$, and $z_2\neq z_4$: In this case, we have $s=z_1[2]+z_2[1]+z_1[2]+z_4[1]=z_2[1]+z_4[1]$ in Eq.\eqref{signfactor}. Let us first assume that $z_1[1]=0$ and $z_1[2]=0$ as in previous examples. Then, the possibilities of $z_3$ having the first two bits as $\{00,01,10,11\}$ are given in \hyperlink{tab:example-3}{Supplementary Table 3}. Thus, the fraction of $s$ having the odd parity is given by $\frac{N}{2(N-1)}=\frac{1}{2}+O(N^{-1})$. Again, this implies that the terms associated with this case gives $O(N^{-1})$ contribution to the OTOC.\\

\hypertarget{tab:example-3}{}
\begin{table}[h]
    \begin{tabular}{ccc}
        \hline\hline
        $z_2[1]$ & $z_4[1]$ & Possibilities of $\{z_2,z_4\}$ \\        
        \hline
        0 & 0 & $(N/2-1)(N/2-2)$ \\
        1 & 0 & $(N/2)(N/2-1)$ \\
        0 & 1 & $(N/2)(N/2-1)$ \\
        1 & 1 & $(N/2)(N/2-1)$ \\
        \hline\hline
    \end{tabular}
    \begin{center}
        \textbf{Supplementary Table 3}. $z_1=z_3$, $z_1\neq z_2$, $z_1\neq z_4$, and $z_2\neq z_4$
    \end{center}
\end{table}

The remaining cases can be analyzed in a similar manner, and each is found to contribute \( O(N^{-1}) \) to the OTOC due to phase cancellations. With this understanding, we can now write down the leading-order terms in the OTOC as follows:
\begin{equation}
    \begin{split}
        \mathbb{E}_{p, f} \tr(\rho_\beta V(t)WV(t)W) 
        &\approx \frac{(N-2)!}{N!}\sum_{\substack{z,z'\in[N]\\a\in[N/K]}}  \sum_{\substack{\{b_i\}_{i=1}^5\in[K]^5\\\{z_1,z_2\}\in[N]^{1^2}_\mathrm{dist}}} \delta_{b_1,b_3}\delta_{b_2,b_4}\delta_{z,z_1}\delta_{z',z_2}[\rho'_\beta]_{z,z'} [u]_{b_5, b_2}[u^\dag]_{b_2, b_3}[u]_{b_3, b_4}[u^\dag]_{b_4, b_1}\\
        &\quad+\frac{1}{N}\sum_{\substack{z,z'\in[N]\\a\in[N/K]}}  \sum_{\substack{\{b_i\}_{i=1}^5\in[K]^5\\z_1\in[N]}} \delta_{b_1,b_3}\delta_{b_2,b_4}\delta_{z,z_1}\delta_{z',z_1}[\rho'_\beta]_{z,z'} [u]_{b_5, b_2}[u^\dag]_{b_2, b_3}[u]_{b_3, b_4}[u^\dag]_{b_4, b_1}\\
        &= \frac{(N-2)!}{N!}\sum_{a\in[N/K]}  \sum_{\substack{\{b_i\}_{i=1}^5\in[K]^5\\\{z_1,z_2\}\in[N]^{1^2}_\mathrm{dist}}} \delta_{b_1,b_3}\delta_{b_2,b_4}[\rho'_\beta]_{z_1,z_2} [u]_{b_5, b_2}[u^\dag]_{b_2, b_3}[u]_{b_3, b_4}[u^\dag]_{b_4, b_1}\\
        &\quad+\frac{1}{N}\sum_{a\in[N/K]} \sum_{\substack{\{b_i\}_{i=1}^5\in[K]^5\\z\in[N]}} \delta_{b_1,b_3}\delta_{b_2,b_4}[\rho_\beta]_{z,z} [u]_{b_5, b_2}[u^\dag]_{b_2, b_3}[u]_{b_3, b_4}[u^\dag]_{b_4, b_1}\\
        &=\left(1+\frac{1}{N-1}\sum_{\{z_1,z_2\}\in[N]^{1^2}_\mathrm{dist}} [\rho'_\beta]_{z_1,z_2} \right)\frac{1}{K}\sum_{\{b_i\}_{i=1}^3\in[K]^3} [u]_{b_3, b_2}[u^\dag]_{b_2, b_1}[u]_{b_1, b_2}[u^\dag]_{b_2, b_1}.
    \end{split}
\end{equation}
We now insert the explicit form of the thermal density matrix \( \rho'_\beta \),
\[
    \rho'_\beta = \frac{K}{N} \cdot \frac{1}{\operatorname{Tr}(e^{-\beta h})} \sum_{a}\sum_{b,b'} \ketbra{p(b,a)}{b,a} e^{-\beta h} \ketbra{b',a}{p(b',a)},
\]
where \( K = 2^k \) is the dimension of the subsystem Hilbert space. This gives 
\begin{equation}
    \Re\mathbb{E}_{p, f} \tr(\rho_\beta V(t)WV(t)W) 
    \approx \left(1+\frac{1}{N-1}\sum_{\{b_1,b_2\}\in[K]^{1^2}_\mathrm{dist}} \left[\frac{e^{-\beta h}}{\tr(e^{-\beta h})}\right]_{b_1,b_2} \right)\frac{1}{K} \Re \sum_{\{b_i\}_{i=1}^3\in[K]^3} [u.*u.*u]_{b_1,b_2} [u^\dag]_{b_2, b_3}. \nonumber 
\end{equation}
We note that the second term in parentheses is upper bounded by \( O(K/N) \), and this bound is saturated in the limit \( \beta \rightarrow \infty \) when the ground state of \( h \) exhibits maximal coherence. Even when the upper bound is saturated, the resulting contribution to the magnitude of the OTOC remains exponentially small. \textit{This proves our claim} that the OTOCs under the pseudochaotic dynamics remain suppressed even at finite temperatures. \\

\noindent
\textbf{Explicit Demonstration.} Now, let us demonstrate our claim for the explicit example, where the embedded subsystem unitary operator is given by $u=H^{\otimes k} P$ with the Hadamard gate $H$ and the random sign gate $P$. For this case, We can explicitly find  
\begin{equation}
    \Re\mathbb{E}_{p, f} \tr(\rho_\beta V(t)WV(t)W) 
    \approx \frac{1}{K}\left(1+\frac{1}{N-1}\sum_{\{b_1,b_2\}\in[K]^{1^2}_\mathrm{dist}} \left[\frac{e^{-\beta h}}{\tr(e^{-\beta h})}\right]_{b_1,b_2} \right) = O\left(\frac{1}{K}\right),
\end{equation}
which explicitly shows the super-polynomial suppression of the OTOC since $K = 2^k$ and $k= \operatorname{polylog} n$. Finally, our numerical calculation \hyperlink{fig:HadaS-finite-temp}{Supplementary Figure 5} confirms this.\\

\noindent
\textbf{Physical Intuition.} There is a simple intuition behind why the OTOCs remain negligible at arbitrary finite temperatures. To illustrate this, we note the following facts. First, in pseudochaotic dynamics, energy eigenstates are pseudorandom. Second, at each energy level, there exists an exponential degeneracy—specifically, a \( 2^{n-k} \)-fold degeneracy. Combining these two observations, the ensemble sum over all degenerate eigenstates at a given energy yields a density matrix that is nearly indistinguishable from the infinite-temperature density matrix. On the other hand, we have shown in \hyperlink{sec:local-OTOC}{Supplementary Note 2} that the infinite-temperature density matrix results in negligible OTOCs. Consequently, any thermal states at finite temperature, which is precisely the weighted sum over the ensemble of energy eigenstates, also leads to negligible OTOCs.
}

\newpage
\subsection*{9. State design of ensemble of states generated by RSED}
Here, we provide some examples of whose ensembles of evolved states have $\epsilon$-approximate state $t$ design with negligible $\epsilon$ and polynomial $t$. Below, we first give two illustrative and simple examples of embedded Haar random unitary operators and Hadamard gates augmented by an random-sign operator. We then give more general examples later.

\begin{theorem}\label{thm:pseudorandom-ensemble-from-chaos}
    A RSED with an ensemble of embedded unitary operators that generate a pseudorandom state ensemble in the subspace generates a pseudorandom state ensemble in the entire space.
\end{theorem}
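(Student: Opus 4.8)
I would prove this by a hybrid argument that lifts the pseudorandomness from the $2^{k}$-dimensional subspace to the full $2^{n}$-dimensional Hilbert space, assuming (as the construction already does) that pseudorandom permutations and functions exist.

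First I would identify the output state. Using the identity $U=PF(u\otimes I^{\otimes(n-k)})FP^{\dagger}$ of Supplementary Note 1, applying $U$ to a fixed efficiently preparable input produces an \emph{embedding of a subspace state}: taking the input to be a computational-basis state $\ket{0^{n}}$ and writing $(b_{0},a_{0})=p^{-1}(0^{n})$,
\begin{equation}
    U\ket{0^{n}}=\sum_{b\in\{0,1\}^{k}}(-1)^{f(b,a_{0})+f(b_{0},a_{0})}\,[u]_{b,b_{0}}\,\ket{p(b,a_{0})},
\end{equation}
so the $2^{k}$ amplitudes of the subspace vector $u\ket{b_{0}}$, decorated by fresh $\pm 1$ signs from $f$, are scattered onto the label set $\{p(b,a_{0})\}_{b}\subseteq\{0,1\}^{n}$. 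Only this structural fact is used below, so other efficiently preparable inputs are handled in the same way. Efficient generatability of the ensemble is immediate, since the subspace generator is efficient by hypothesis and $P,F$ are efficiently computable.

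Next comes the hybrid chain. Fix a polynomial $t$, let $\rho^{(t)}[\mathcal E]$ denote the $t$-copy density operator of an ensemble $\mathcal E$, and write $\approx_{c}$ for computational indistinguishability. Let $H_{0}$ be the RSED ensemble; let $H_{1}$ replace the pseudorandom $p,f$ by a truly random permutation and function; let $H_{2}$ further replace the subspace PRS state $u\ket{b_{0}}$ by a Haar-random vector of the $2^{k}$-dimensional subspace; and let $H_{3}$ be the $2^{n}$-dimensional Haar ensemble. Then $H_{0}\approx_{c}H_{1}$ by security of $p,f$, with an efficient reduction that, given oracle access to $p,f$, prepares each copy by running the efficient subspace generator to obtain $u\ket{b_{0}}$ and then applying the phase and permutation oracles; $H_{1}\approx_{c}H_{2}$ by the subspace-PRS hypothesis applied to $t$ copies; and $\rho^{(t)}[H_{2}]$ and $\rho^{(t)}[H_{3}]$ are within negligible trace distance. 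For the last claim, the sign matrix drops out by unitary invariance of the Haar measure, so $\rho^{(t)}[H_{2}]=\mathbb{E}_{T}\big[\binom{2^{k}+t-1}{t}^{-1}\Pi_{\mathrm{sym}}(\mathcal H_{T})\big]$ is an average over random $2^{k}$-element label sets $T$ of normalized projectors onto the symmetric subspace of $\mathcal H_{T}^{\otimes t}$ with $\mathcal H_{T}=\mathrm{span}\{\ket{x}:x\in T\}$; both $\rho^{(t)}[H_{2}]$ and $\rho^{(t)}[H_{3}]$ are diagonal in the orthonormal basis of symmetric ``type'' states indexed by size-$t$ multisets $\mu$ of labels, with eigenvalue $\binom{2^{n}+t-1}{t}^{-1}$ (for $H_{3}$) versus $q_{s}\binom{2^{k}+t-1}{t}^{-1}$ (for $H_{2}$) on a $\mu$ with $s$ distinct labels, where $q_{s}=\binom{2^{n}-s}{2^{k}-s}/\binom{2^{n}}{2^{k}}$. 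Since $k=\omega(\log n)$ and $t=\mathrm{poly}(n)$, both ensembles put weight $1-\mathrm{negl}(n)$ on all-distinct types (fractions $\prod_{i=0}^{t-1}\frac{2^{n}-i}{2^{n}+i}$ and $\prod_{i=0}^{t-1}\frac{2^{k}-i}{2^{k}+i}$), on which each is uniform, so summing the eigenvalue gaps gives trace distance $O(t^{2}2^{-k})=\mathrm{negl}(n)$. Chaining $H_{0}$ through $H_{3}$ gives the theorem.

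The main obstacle is the efficiency of the reduction realizing $H_{1}\approx_{c}H_{2}$: it receives $t$ copies of a $2^{k}$-dimensional state and must scatter their amplitudes with a \emph{truly} random permutation $p$ and sign function $f$, which cannot be sampled or applied in polynomial time. I would resolve this exactly as in the Ji--Liu--Song analysis of pseudorandom states: $\rho^{(t)}$ depends on $p,f$ only through their $\le 2t$-point statistics, so for the purpose of preparing $t$ copies a truly random permutation/function may be replaced by a $2t$-wise independent one --- which is efficiently samplable and computable --- and the same moment estimate as above shows every density operator in the chain moves by at most $\mathrm{negl}(n)$ under this substitution. A secondary, routine point is to check that $\{u\ket{b_{0}}\}$ (for the $b_{0}$ fixed by $p$) inherits the subspace-PRS property from the stated hypothesis; this holds because a Haar-random (resp.\ pseudorandom) $u$ sends a fixed $\ket{b_{0}}$ to a Haar-random (resp.\ pseudorandom) subspace vector, and $b_{0}$ can be folded into the reduction's bookkeeping.
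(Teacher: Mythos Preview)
Your argument is correct and is in fact more careful than the paper's own proof, while the underlying statistical core is the same.

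The paper argues entirely in trace distance: it writes the $t$-copy average $\sigma$ of the embedded states, invokes Lemma~2.2 of Aaronson et~al.\ to replace the full-space symmetric projector $\Pi_{\mathrm{sym}}^{N,t}/\operatorname{tr}(\Pi_{\mathrm{sym}}^{N,t})$ by the subspace one $\Pi_{\mathrm{sym}}^{S,t}/\operatorname{tr}(\Pi_{\mathrm{sym}}^{S,t})$ at cost $O(t^{2}/2^{k})$, applies convexity, and then asserts that $\operatorname{TD}(\sigma,\Pi_{\mathrm{sym}}^{S,t}/\operatorname{tr})$ is negligible ``since $\sigma$ is the ensemble average of a pseudorandom state ensemble.'' That last step is really using the subspace hypothesis as if it were an approximate state $t$-design (statistical closeness), which is how the surrounding Section~9 frames things; it does not by itself justify the computational notion of pseudorandomness, nor does it discuss the efficiency of any reduction.

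Your hybrid chain $H_{0}\to H_{1}\to H_{2}\to H_{3}$ separates these issues cleanly: security of $p,f$ gives $H_{0}\approx_{c}H_{1}$; the subspace-PRS hypothesis gives $H_{1}\approx_{c}H_{2}$; and only the final step is a genuine trace-distance bound. Your $H_{2}$--$H_{3}$ computation via type states is exactly the content of the Aaronson et~al.\ lemma the paper quotes, so the two proofs share the same statistical backbone. What your approach buys is a proof that is sound for the cryptographic definition of PRS, together with an explicit treatment (via $2t$-wise independence, as in Ji--Liu--Song) of why the reduction in the $H_{1}\approx_{c}H_{2}$ step can be made efficient despite the truly random $p,f$. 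What the paper's approach buys is brevity, at the cost of implicitly reading ``pseudorandom'' as ``approximate design.'' The minor caveats you flag---that the subspace hypothesis should be read as ``$u\ket{b_{0}}$ is PRS for each fixed $b_{0}$,'' and that one needs $2t$-wise almost-independent \emph{permutations} (not just functions), which exist and are efficient---are real but routine.
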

\begin{proof}
Let $u$ be an element of $\mathcal{E}_K$. Let $\ket{\psi}$ be an initial computational state, then there exist $b_*\in\{0,1\}^k$ and $a\in\{0,1\}^{n-k}$ such that $\ket{\psi}=\ket{p(b_*,a)}$. The unitary evolution $U$ embedding $u$ maps the initial state to 
\begin{equation}
    U\ket{p(b_*,a)} = O_a u O_a^\dag \ket{p(b_*,a)} = \sum_{b'_*\in\{0,1\}^k}u_{b'_*,b_*}(-1)^{f(b_*,a)+f(b'_*,a)}\ket{p(b'_*,a)}.
\end{equation}
Let the ensemble average of $t$-copies of $U\ket{p(b_*,a)}$ over $\mathcal{E}_K$ and random functions $\{f\}$ be $\sigma$. Then, the trace distance between $\mathbb{E}_{S}[\sigma]$ and Haar random states is given by
\begin{equation}\label{eq:TD-pseudo-haar}
\begin{split}
\operatorname{TD}\left(\mathbb{E}_{S}[\sigma], \mathbb{E}_{\ket{\phi} \sim \operatorname{Haar}\left(2^n\right)}\left[\ketbra{\phi}{\phi}^{\otimes t}\right]\right)
& =\operatorname{TD}\left(\mathbb{E}_{S}[\sigma], \frac{\Pi_{\text{sym}}^{N, t}}{\operatorname{tr}\left(\Pi_{\text{sym}}^{N, t}\right)}\right) \\
& \leq \operatorname{TD}\left(\mathbb{E}_{S}[\sigma], \mathbb{E}_{S} \left[\frac{\Pi_{\text{sym}}^{S, t}}{\operatorname{tr}\left(\Pi_{\text{sym}}^{S, t}\right)}\right]\right)+O\left(t^2 / 2^k\right) \\
& \leq \mathbb{E}_{S} \left[\operatorname{TD}\left(\sigma, \frac{\Pi_{\text{sym}}^{S, t}}{\operatorname{tr}\left(\Pi_{\text{sym}}^{S, t}\right)}\right)\right]+O\left(t^2 / 2^k\right) \\
& \leq \operatorname{negl}(n).
\end{split}
\end{equation}
The equality in the first line comes from the fact that the ensemble average of Haar random states is equivalent to the projector $\Pi^{N,t}_\text{sym}$ to the invariant sector of $\mathcal{U}(2^n)\times\mathcal{U}^\dag(2^n)\times S_t$, where $\mathcal{U}(2^n)$ is the group of unitary operators with the dimension $2^n$, and $S_t$ is the symmetric group of order $t$ which acts as permutations of copies. The inequality in the second line comes from \textbf{Lemma 2.2} of \cite{aaronson2023quantum}. The inequality in the third line comes from the convexity of the trace distance. Since $\sigma$ is the ensemble average of a pseudorandom state ensemble, the trace distance at the third line should be negligible. This gives the inequality in the forth line.

\end{proof}

It is also possible to generate pseudorandom states using ensemble of unitary operators far from being the Haar random ensemble due to the following theorem. 

\begin{theorem}\label{thm:computation-prob-for-randomness}
    Let $\mathcal{E}_k$ be an ensemble of unitary operators in a subsystem with the size $k$. Let us assume that for all $k$ and $u\in\mathcal{E}_k$, there exist $\epsilon>0$ such that 
    \begin{equation}\label{eq:magnitude-of-prob-for-randomness}
        \operatorname{Pr}\left[ \abs{u_{b,b'}}^2 \geq K^{-\epsilon} \right] \leq \operatorname{negl}(n)
    \end{equation}
    with $K=2^k$. In addition, let us assume that $\mathbb{E}_{u\sim\mathcal{E}_k}\left[ \abs{u_{b,b'}}^2 \right]=K^{-1}$ holds for all $b$ and $b'$. Then, an ensemble of subsystem embedded dynamics with $\mathcal{E}_k$ generates a pseudorandom state ensemble.
\end{theorem}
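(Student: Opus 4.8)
The plan is to reduce the claim to the subsystem level using Theorem~\ref{thm:pseudorandom-ensemble-from-chaos}, and then to establish pseudorandomness inside the $K=2^k$-dimensional subspace by a moment computation. As in the proof of Theorem~\ref{thm:pseudorandom-ensemble-from-chaos}, applying the RSED with embedded $u$ to a computational state $\ket{p(b_*,a)}$ produces, up to an irrelevant global sign, the subspace state $\ket{\phi_{u,f}}=\sum_{b\in\{0,1\}^k}u_{b,b_*}(-1)^{f(b,a)}\ket{b}$, so it suffices to show that $\{\ket{\phi_{u,f}}\}$ with $u\sim\mathcal{E}_k$ and $f$ pseudorandom is a pseudorandom state ensemble on the $K$-dimensional space; Theorem~\ref{thm:pseudorandom-ensemble-from-chaos} then lifts this to $\mathcal{H}$. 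Inside the subspace $\ket{\phi_{u,f}}$ is just a random binary-phase state whose amplitude profile $\{\abs{u_{b,b_*}}\}$ is itself random, bounded below $K^{-\epsilon}$ with overwhelming probability by~\eqref{eq:magnitude-of-prob-for-randomness} and unbiased in the sense $\mathbb{E}_u[\abs{u_{b,b_*}}^2]=K^{-1}$ (the phases $\arg u_{b,b_*}$ form a fixed diagonal unitary which only adds randomness). Efficient preparation is automatic: $u$ is a polynomial-size circuit, being an element of the efficiently samplable ensemble $\mathcal{E}_k$, and $(-1)^{f(b,a)}$ costs one query to the pseudorandom function.

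For indistinguishability I would first replace $f$ and the permutation $p$ by truly uniform objects, incurring only a negligible computational error by security of the pseudorandom function and permutation, so that it remains to bound $\|\rho^{(t)}-\rho^{(t)}_{\mathrm{Haar}}\|_1$ for every $t=\operatorname{poly}(n)$, with $\rho^{(t)}=\mathbb{E}_{u,f}[\ketbra{\phi_{u,f}}{\phi_{u,f}}^{\otimes t}]$ and $\rho^{(t)}_{\mathrm{Haar}}=\Pi^{K,t}_{\mathrm{sym}}/\operatorname{tr}\Pi^{K,t}_{\mathrm{sym}}$. Expanding $\rho^{(t)}$ in the computational basis and carrying out the average over $f$ first annihilates every $\ketbra{\vec x}{\vec y}$ except those in which each label occurs an even number of times in the concatenation of $\vec x$ and $\vec y$; these survivors split into permutation terms, where $\vec y$ rearranges $\vec x$ with all labels distinct, and defect terms, meaning collisions inside $\vec x$ or patterns special to binary rather than continuous phases. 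Summed over $\sigma\in S_t$ and over distinct label tuples, the permutation terms reproduce $\rho^{(t)}_{\mathrm{Haar}}$ once one evaluates the relevant magnitude moments with $\mathbb{E}_u[\abs{u_{b,b_*}}^2]=K^{-1}$ and the exact column normalization $\sum_b\abs{u_{b,b_*}}^2=1$; everything else is governed by collision-type sums such as $\sum_b\mathbb{E}_u[\abs{u_{b,b_*}}^4]\le K^{-\epsilon}$, which follow from~\eqref{eq:magnitude-of-prob-for-randomness} together with the normalization, yielding an aggregate trace-norm error of order $\operatorname{poly}(t)\,K^{-\Omega(\epsilon)}$. Since $k=\omega(\log n)$ makes $K$ super-polynomial, this is negligible for every polynomial $t$, so $\{\ket{\phi_{u,f}}\}$ is an approximate state $t$-design with negligible error in the subspace, hence a pseudorandom state ensemble there, and Theorem~\ref{thm:pseudorandom-ensemble-from-chaos} completes the argument.

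The step I expect to be the real work is this moment estimate. The entries of a single column of a unitary are dependent, so one cannot simply multiply marginals; one has to route both hypotheses through the constraint $\sum_b\abs{u_{b,b_*}}^2=1$ to show, on the one hand, that the leading permutation sum is \emph{exactly} the Haar moment $\rho^{(t)}_{\mathrm{Haar}}$ rather than some permutation-symmetric operator with the wrong weights, and on the other hand that all collision and binary-phase defect contributions are bounded by $\operatorname{poly}(t)\,K^{-\Omega(\epsilon)}$. The latter includes the standard Brakerski--Shmueli subtlety that the binary-phase defect terms must be shown not to accumulate across the $t!$ permutation channels; this, together with the concentration bound~\eqref{eq:magnitude-of-prob-for-randomness}, is precisely where the super-polynomial gap between $K=2^{\omega(\log n)}$ and $t=\operatorname{poly}(n)$ is used.
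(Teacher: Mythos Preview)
Your approach is essentially the paper's. Both arguments average over $f$ first to kill all $\ketbra{\vec x}{\vec y}$ except type-matching pairs, restrict to the distinct-label (permutation) sector using the collision bound $O(t^2/K)$ together with the magnitude hypothesis~\eqref{eq:magnitude-of-prob-for-randomness}, and then invoke the unbiasedness assumption $\mathbb{E}_u[\abs{u_{b,b'}}^2]=K^{-1}$ on the surviving leading term. The packaging differs in two minor ways. First, the paper does not try to show that the permutation sector equals $\rho^{(t)}_{\mathrm{Haar}}$ directly; instead it normalizes the distinct-type block by $\bar X_u=\binom{K}{t}^{-1}\sum_{\abs{\operatorname{type}(b)}=t}\prod_i\abs{u_{b_i,b_*}}^2$, averages over $u$ to land on the uniform mixture over distinct types (``Hybrid~3'' of Aaronson et al.), and then cites their Lemma~3 for the remaining $O(t^2/K)$ gap to the symmetric projector. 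Your expectation that the leading term is \emph{exactly} $\rho^{(t)}_{\mathrm{Haar}}$ is therefore slightly off---it is Hybrid~3, and you would end up reproving that lemma. Second, your reduction via Theorem~\ref{thm:pseudorandom-ensemble-from-chaos} is fine in spirit, but note that the stated hypothesis of that theorem is that $\mathcal{E}_k$ \emph{alone} generates a pseudorandom ensemble in the subspace, whereas you are using $\mathcal{E}_k$ together with the phase randomness from $f$; you are really invoking the chain of inequalities in its proof rather than the theorem as stated, which is exactly what the paper does at the end.
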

\begin{proof}
Let $u$ be an element of $\mathcal{E}_K$. Let $\ket{\psi}$ be an initial computational state, then there exist $b_*\in\{0,1\}^k$ and $a\in\{0,1\}^{n-k}$ such that $\ket{\psi}=\ket{p(b_*,a)}$. The unitary evolution $U$ embedding $u$ maps the initial state to 
\begin{equation}
    U\ket{p(b_*,a)} = O_a u O_a^\dag \ket{p(b_*,a)} = \sum_{b'_*\in\{0,1\}^k}u_{b'_*,b_*}(-1)^{f(b_*,a)+f(b'_*,a)}\ket{p(b'_*,a)}.
\end{equation}
The ensemble average of $U\ket{p(b_*,a)}$ over $f$ is given by 
\begin{equation}
    \begin{split}
        \sigma'_{p,u} 
        &= \mathbb{E}_{f}\left[U\ketbra{p(b_*,a)}{p(b_*,a)}U^\dagger\right] \\
        &= \sum_{\{b_i\}_{i=1}^t}\sum_{\{b'_i\}_{i=1}^t}\delta_{\text{type}(b)\text{  mod  }2,\text{type}(b')\text{  mod  }2} \prod_{i=1}^t u_{b_i,b_*}u_{b'_i,b_*}^* \ketbra{p(b_1,a),\cdots,p(b_t,a)}{p(b'_1,a),\cdots,p(b'_t,a)},
    \end{split}
\end{equation}
where the indices $\{b_i\}_{i=1}^t$ and $\{b'_i\}_{i=1}^t$ run over $\{0,1\}^k$. Let us consider sampling of random bitstrings from $\{0,1\}^k$. Then, the collision probability of $t$ samples is given by $O(t^2/K)$. This together with Eq.~\eqref{eq:magnitude-of-prob-for-randomness} approximate $\sigma'_{p,u}$ by 
\begin{equation}
    \sigma_{p,u} = \bar{X}_u^{-1}  \binom{K}{t}^{-1} \sum_{\abs{\operatorname{type}(b)}=t} \prod_{i=1}^t\abs{u_{b_i,b_*}}^2 \ketbra{\operatorname{type}(b)}{\operatorname{type}(b)}
\end{equation}
with a negligible trace distance and 
\begin{equation}
    \bar{X}_u = \binom{K}{t}^{-1}\sum_{\operatorname{type(b)}=t}\prod_{i=1}^t\abs{u_{b_i,b_*}}^2.
\end{equation}
Here, $\operatorname{type}(b)$ is the vector recording the counts of identical indices of $\{b_i\}_{t=1}^t$, and $\abs{\operatorname{type}(b)}$ is the number of non-vanishing components of $\operatorname{type}(b)$. Additionally, $\ket{\operatorname{type(b)}}$ is defined as
\begin{equation}
    \ket{\operatorname{type(b)}} = \binom{K}{t}^{-1/2} \sum_{\abs{\operatorname{type}(b)}=t} \ket{p(b_1,a),\cdots,p(b_t,a)}.
\end{equation}
Now, let us take the ensemble average over $\mathcal{E}_K$, then by the assumption of $\mathbb{E}_{u\sim\mathcal{E}_k}\left[ \abs{u_{b,b'}}^2 \right]=K^{-1}$, we get 
\begin{equation}
    \sigma_{p} = \binom{K}{t}^{-1} \sum_{\abs{\operatorname{type}(b)}=t} \ketbra{\operatorname{type}(b)}{\operatorname{type}(b)}.
\end{equation}
Importantly, this is \textbf{Hybrid 3} of \cite{aaronson2023quantum}. Let $\rho_p$ be the invariant sector projector in the subset made by $p$, \textit{i.e.},
\begin{equation}
    \rho_p = \frac{\Pi_{\text{sym}}^{S, t}}{\operatorname{tr}\left(\Pi_{\text{sym}}^{S, t}\right)},
\end{equation}
then $\operatorname{TD}(\rho_p,\sigma_p)$ is negligible in $n$ due to \textbf{Lemma 3} of \cite{aaronson2023quantum}. Finally, the triangle inequality and the convexity of the trace distance give
\begin{equation}
    \begin{split}
        \operatorname{TD}\left(\mathbb{E}_{u\sim\mathcal{E}_k}[\sigma_{p,u}'],\rho_p\right) 
        &\leq \operatorname{TD}\left(\mathbb{E}_{u\sim\mathcal{E}_k}[\sigma_{p,u}'],\sigma_p\right) + \operatorname{TD}(\sigma_p,\rho_p)\\
        &\leq \mathbb{E}_{u\sim\mathcal{E}_k}\left[\operatorname{TD}\left(\sigma_{p,u}',\sigma_{p,u}\right)\right] + \operatorname{negl}(n)\\
        &= \operatorname{negl}(n).
    \end{split}
\end{equation}
Thus, the ensemble of states $\{U\ket{p(b_*,a)}\}$ is pseudorandom.
\end{proof}

This theorem tells us that pseudorandom states can be generated by using an ensemble of unitary operators whose elements have magnitudes of $O(2^{-k/2})$. An example of such ensembles is the ensemble generated by \hyperlink{sec:random-sign-Hadamard}{the random sign Hadamard gates}, which approximately forms the circular orthogonal ensemble (COE) so that it satisfies the condition Eq.~\eqref{eq:magnitude-of-prob-for-randomness}. On the other hand, it is also possible to generate a pseudorandom state ensemble using a single unitary operator if magnitudes of its elements are concentrated near $2^{-k/2}$.

\begin{proposition}\label{thm:computation-prob-var-for-randomness}
    Let $u$ be a unitary operator in a subsystem with the size $k$. Let us assume that $u$ satisfies the condition Eq.~\eqref{eq:magnitude-of-prob-for-randomness}. Then, for all $b\in\{0,1\}^k$, if the following holds  
    \begin{equation}\label{eq:variance-of-prob-for-randomness}
        \mathbb{E}_{u\sim\mathcal{E}}\left[ \binom{K}{t}^{-1}\sum_{\abs{\operatorname{type}(b)}=t} \left( X_{u,b}/\bar{X}_u - 1\right)^2\right] \leq K^{-2\epsilon} 
    \end{equation}
    with $X_{u,b}=\prod_{i=1}^t\abs{u_{b_i,b_*}}^2$ and $\bar{X}_u = \binom{K}{t}^{-1} \sum_{\abs{\operatorname{type}(b)}=t} X_{u,b}$ for some constant $\epsilon>0$, then subsystem dynamics with $u$ generates pseodorandom states from initial computational states with a failure probability negligible in $n$. In case of having $\mathbb{E}_{u\sim\mathcal{E}_k}[X_{u,b}]=K^{-t}$ for any $b=\{b_i\}$, the sample mean $\bar{X}$ converges to $K^{-t}$. In this case, Eq.~\eqref{eq:variance-of-prob-for-randomness} can be rewritten as
    \begin{equation}
        \overline{\operatorname{Var}(X/\bar{X})} \leq K^{-2\epsilon},
    \end{equation}
    where $\overline{\operatorname{Var}(X/\bar{X})}$ is the sample mean of the variance of $X/\bar{X}$ with $\binom{K}{t}$ samples. 
\end{proposition}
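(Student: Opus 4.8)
The plan is to re-run the argument of \textbf{Theorem}~\ref{thm:computation-prob-for-randomness} with a single fixed $u$ in place of the ensemble, replacing only the step that used $\mathbb{E}_{u\sim\mathcal{E}_k}[\abs{u_{b,b'}}^2]=K^{-1}$ by a direct estimate powered by Eq.~\eqref{eq:variance-of-prob-for-randomness}. First I would reproduce verbatim the first three displayed steps of that proof, which use only the fixed $u$ and the magnitude hypothesis Eq.~\eqref{eq:magnitude-of-prob-for-randomness}: average the $t$-copy output state $U^{\otimes t}\ketbra{p(b_*,a)}{p(b_*,a)}^{\otimes t}U^{\dagger\otimes t}$ over the random phase function $f$ to get $\sigma'_{p,u}$, then invoke the collision bound $O(t^2/K)$ together with Eq.~\eqref{eq:magnitude-of-prob-for-randomness} to conclude $\operatorname{TD}(\sigma'_{p,u},\sigma_{p,u})=\operatorname{negl}(n)$, where $\sigma_{p,u}=\bar{X}_u^{-1}\binom{K}{t}^{-1}\sum_{\abs{\operatorname{type}(b)}=t}X_{u,b}\,\ketbra{\operatorname{type}(b)}{\operatorname{type}(b)}$ with $X_{u,b}=\prod_{i=1}^t\abs{u_{b_i,b_*}}^2$. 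Nothing here used an ensemble average, so it carries over unchanged.

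The one genuinely new ingredient is to bound $\operatorname{TD}(\sigma_{p,u},\sigma_p)$ directly, where $\sigma_p=\binom{K}{t}^{-1}\sum_{\abs{\operatorname{type}(b)}=t}\ketbra{\operatorname{type}(b)}{\operatorname{type}(b)}$ is \textbf{Hybrid 3} of \cite{aaronson2023quantum}. Both states are diagonal in the common orthonormal family $\{\ket{\operatorname{type}(b)}\}$, so the trace distance is the $\ell^1$ distance of the two probability vectors, and Cauchy--Schwarz (equivalently Jensen) gives
\begin{equation}
    \operatorname{TD}(\sigma_{p,u},\sigma_p)=\frac12\binom{K}{t}^{-1}\sum_{\abs{\operatorname{type}(b)}=t}\abs{\frac{X_{u,b}}{\bar{X}_u}-1}\leq\frac12\left(\binom{K}{t}^{-1}\sum_{\abs{\operatorname{type}(b)}=t}\left(\frac{X_{u,b}}{\bar{X}_u}-1\right)^2\right)^{1/2}.
\end{equation}
The bracketed quantity is exactly the sample variance of $X/\bar{X}$ over the $\binom{K}{t}$ samples indexed by $b$, since by definition of $\bar{X}_u$ the sample mean of $X_{u,b}/\bar{X}_u$ equals $1$. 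Hypothesis Eq.~\eqref{eq:variance-of-prob-for-randomness} asserts that the $\mathcal{E}$-average of this sample variance is at most $K^{-2\epsilon}$, so Markov's inequality gives $\operatorname{Pr}_{u\sim\mathcal{E}}[\text{sample variance}\geq K^{-\epsilon}]\leq K^{-\epsilon}=\operatorname{negl}(n)$ (recall $K=2^k$ with $k=\omega(\log n)$). Hence, except with probability negligible in $n$, $\operatorname{TD}(\sigma_{p,u},\sigma_p)\leq\tfrac12 K^{-\epsilon/2}=\operatorname{negl}(n)$.

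It then remains to chain the estimates exactly as in \textbf{Theorem}~\ref{thm:computation-prob-for-randomness} and \textbf{Theorem}~\ref{thm:pseudorandom-ensemble-from-chaos}: one has $\operatorname{TD}(\sigma_p,\rho_p)=\operatorname{negl}(n)$ by \textbf{Lemma 3} of \cite{aaronson2023quantum} with $\rho_p=\Pi_{\text{sym}}^{S,t}/\operatorname{tr}(\Pi_{\text{sym}}^{S,t})$, and the triangle inequality gives $\operatorname{TD}(\sigma'_{p,u},\rho_p)\leq\operatorname{TD}(\sigma'_{p,u},\sigma_{p,u})+\operatorname{TD}(\sigma_{p,u},\sigma_p)+\operatorname{TD}(\sigma_p,\rho_p)=\operatorname{negl}(n)$ with failure probability negligible in $n$; lifting from the subspace to the full Hilbert space costs the extra $O(t^2/2^k)$ exactly as in Eq.~\eqref{eq:TD-pseudo-haar}. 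For the closing remark, $\mathbb{E}_{u\sim\mathcal{E}_k}[\bar{X}_u]=\binom{K}{t}^{-1}\sum_{\abs{\operatorname{type}(b)}=t}\mathbb{E}_u[X_{u,b}]=K^{-t}$ whenever $\mathbb{E}_u[X_{u,b}]=K^{-t}$, and a second-moment estimate shows $\bar{X}_u$ stays within $\operatorname{negl}$ of $K^{-t}$ with overwhelming probability, after which one may substitute $\bar{X}_u\to K^{-t}$ and read Eq.~\eqref{eq:variance-of-prob-for-randomness} as $\overline{\operatorname{Var}(X/\bar{X})}\leq K^{-2\epsilon}$. The step I expect to demand the most care is this last concentration of the sample mean $\bar{X}_u$, since the summands $X_{u,b}$ over overlapping $t$-subsets are correlated; I note, however, that it is needed only for the cosmetic reformulation of the hypothesis, not for the pseudorandomness conclusion itself, whose proof is otherwise a short variation on \textbf{Theorem}~\ref{thm:computation-prob-for-randomness}.
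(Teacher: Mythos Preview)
Your proposal is correct and follows essentially the same route as the paper: reduce via the steps of \textbf{Theorem}~\ref{thm:computation-prob-for-randomness} to bounding $\operatorname{TD}(\sigma_{p,u},\sigma_p)$, exploit diagonality to write this as an $\ell^1$ distance of probability vectors, apply Cauchy--Schwarz to bound it by the square root of the sample variance $\bar{Y}_u$, then use Markov's inequality on the hypothesis $\mathbb{E}[\bar{Y}_u]\leq K^{-2\epsilon}$ (the paper introduces an auxiliary $0<\epsilon'<\epsilon$ where you split as $K^{-\epsilon}$ and $K^{-\epsilon/2}$, but the idea is identical) and finish with the triangle inequality and the lemma of \cite{aaronson2023quantum}. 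The paper's own proof does not address the final reformulation clause at all, so your closing remark on the concentration of $\bar{X}_u$ is additional rather than divergent.
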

\begin{proof}
Similar to the proof of \textbf{Theorem}~\ref{thm:computation-prob-for-randomness}, let us consider an initial computational state $\ket{p(b_*,a)}$ with $b_*\in\{0,1\}^k$ and $a\in\{0,1\}^{n-k}$ and its evolution under $U$, which embeds $u$, 
\begin{equation}
    U\ket{p(b_*,a)} = \sum_{b'_*\in\{0,1\}^k}u_{b'_*,b_*}(-1)^{f(b_*,a)+f(b'_*,a)}\ket{p(b'_*,a)}.
\end{equation}
Let the ensemble average of $U\ket{p(b_*,a)}$ over $f$ be $\sigma'_{p,u}$, and let $\sigma_{p,u}$ be \begin{equation}
    \sigma_{p,u} = \bar{X}^{-1}_u \binom{K}{t}^{-1} \sum_{\abs{\operatorname{type}(b)}=t} X_{u,b} \ketbra{\operatorname{type}(b)}{\operatorname{type}(b)}.
\end{equation}
Then, as discussed in the proof of \textbf{Theorem}~\ref{thm:computation-prob-for-randomness}, we have $\operatorname{TD}(\sigma'_{p,u},\sigma_{p,u})=\operatorname{negl}(n)$. Now, let us introduce $\sigma_p$ as
\begin{equation}
    \sigma_{p} = \binom{K}{t}^{-1} \sum_{\abs{\operatorname{type}(b)}=t} \ketbra{\operatorname{type}(b)}{\operatorname{type}(b)}.
\end{equation}
Then, the trace distance between $\sigma_{p,u}$ and $\sigma_p$ is given by
\begin{equation}
    \begin{split}
    \operatorname{TD}(\rho, \sigma) 
    &= \binom{K}{t}^{-1}\left\| \sum_{\abs{\operatorname{type}(b)}=t}\left(X_{u,b}/\bar{X}_u - 1\right) \ketbra{\operatorname{type}(b)}{\operatorname{type}(b)} \right\|_1\\
    &= \binom{K}{t}^{-1}\sum_{\abs{\operatorname{type}(b)}=t}\abs{X_{u,b}/\bar{X}_u - 1}.
    \end{split}
\end{equation}
Using the Cauchy-Schwarz inequality, this can be upper bounded by
\begin{equation}
 \operatorname{TD}(\rho, \sigma) \leq \left[ \binom{K}{t}^{-1}\sum_{\abs{\operatorname{type}(b)}=t} \left( X_{u,b}/\bar{X}_u - 1\right)^2\right]^{1/2}.
\end{equation}
Let us define $\bar{Y}_u$ as 
\begin{equation}
    \bar{Y}_u\equiv  \binom{K}{t}^{-1}\sum_{\abs{\operatorname{type}(b)}=t} \left( X_{u,b}/\bar{X}_u - 1\right)^2,
\end{equation}
then Markov's inequality gives
\begin{equation}
    \operatorname{Pr}\left[ \bar{Y}_u \geq K^{-2\epsilon'} \right] \leq K^{-2\epsilon'} \mathbb{E}_{u\sim\mathcal{E}_k}[\bar{Y}_u]
\end{equation}
for some constant $0<\epsilon'<\epsilon$. This gives an upper bound on the trace distance as
\begin{equation}
    \operatorname{Pr}\left[ \operatorname{TD}(\rho,\sigma) < K^{-\epsilon'} \right] > 1 - K^{\epsilon'} \sqrt{\mathbb{E}_{u\sim\mathcal{E}_k}[\bar{Y}_u]}.
\end{equation}
If $\mathbb{E}_{V\sim\mathcal{E}_k}[\bar{Y}_u]$ is smaller than $K^{2\epsilon}$, then we get
\begin{equation}
    \operatorname{Pr}\left[ \operatorname{TD}(\rho,\sigma) < K^{-\epsilon'} \right] > 1 - K^{\epsilon'-\epsilon}.
\end{equation}
Thus, $\operatorname{TD}(\rho,\sigma)$ is negligible in $n$ with a probability higher than $1-\operatorname{negl}(n)$. Finally, the triangle inequality gives
\begin{equation}
    \operatorname{TD}\left(\sigma,\frac{\Pi_{\text{sym}}^{S, t}}{\operatorname{tr}\left(\Pi_{\text{sym}}^{S, t}\right)}\right) \leq \operatorname{TD}\left(\rho,\frac{\Pi_{\text{sym}}^{S, t}}{\operatorname{tr}\left(\Pi_{\text{sym}}^{S, t}\right)}\right) + \operatorname{TD}(\rho,\sigma).
\end{equation}
The first term in the right hand side is negligible in $n$ due to \textbf{Lemma 2.3} of \cite{aaronson2023quantum}, so the ensemble of states $\{U\ket{p(b_*,a)}\}$ over random functions $f$ and permutations $p$ forms a pseudorandom state ensemble.
\end{proof}

This theorem implies that if an ensemble unitary operators in a subsystem with size $k$ generate states with coefficients of order $O(2^{-k})$ and $o(2^{-k})$ standard deviation from initial computational state, then the corresponding ensemble of subsystem embedded dynamics generate pseudorandom states. Another sufficient condition for an embedded unitary operator to generate a pseudorandom state ensemble is having a near maximal coherence. This is discussed in \textbf{Theorem}~\ref{thm:coherence-for-randomness} of \hyperlink{sec:coherence-importance}{Supplementary Note 11}.

\begin{corollary}
    Let $\mathcal{E}_k$ be an ensemble of unitary operators in a subsystem with size $k$ whose elements have 
    \begin{equation}
        \operatorname{Pr}\left[ \abs{u_{b,b'}}^2 \geq K^{-\epsilon} \right] \leq \operatorname{negl}(n)
    \end{equation}
    with $K=2^k$ for all $k<n$ and for some $\epsilon>0$. Then, if $\abs{u_{b,b'}^2}$ follows a distribution whose mean value and standard deviation are $\mu$ and $o(\mu)$, respectively, then subsystem embedded dynamics generates pseudorandom states from initial computational states.
\end{corollary}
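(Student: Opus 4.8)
The plan is to deduce the statement directly from Proposition~\ref{thm:computation-prob-var-for-randomness}. That proposition needs two inputs: the tail bound Eq.~\eqref{eq:magnitude-of-prob-for-randomness}, which is assumed here verbatim, and the variance bound Eq.~\eqref{eq:variance-of-prob-for-randomness}, namely $\overline{\operatorname{Var}(X/\bar X)}\le K^{-2\epsilon}$ for $X_{u,b}=\prod_{i=1}^t\abs{u_{b_i,b_*}}^2$, the average being over $t$-subsets $\{b_i\}\subseteq\{0,1\}^k$ and then over $u\sim\mathcal{E}_k$. So the only work is to translate ``each $\abs{u_{b,b'}}^2$ has mean $\mu$ and standard deviation $o(\mu)$'' into this bound. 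Since $u$ is unitary, $\sum_{c}\abs{u_{c,b_*}}^2=1$, so necessarily $\mu=K^{-1}$; writing $\abs{u_{c,b_*}}^2=\mu(1+\xi_c)$, the hypothesis says the $\xi_c$ have vanishing index-mean and index-variance $s^2\equiv(\operatorname{std}/\mu)^2=o(1)$, while Eq.~\eqref{eq:magnitude-of-prob-for-randomness} gives $1+\xi_c\le K^{1-\epsilon}$ with overwhelming probability.

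First I would note $X_{u,b}/\mu^t=\prod_{i=1}^t(1+\xi_{b_i})$, so that $\bar X_u/\mu^t$ and $\overline{X^2}/\mu^{2t}$ are just the averages of $\prod_i(1+\xi_{b_i})$ and $\prod_i(1+\xi_{b_i})^2$ over a uniformly random $t$-subset. Because $t=O(\operatorname{poly}(n))\ll K$, sampling without replacement is, up to $O(t^2/K)$ corrections, the same as independent sampling, and the index-mean of $\xi$ vanishes; expanding the products and using that the pair covariance is $\mathbb{E}[\xi_{b_i}\xi_{b_j}]=-s^2/(K-1)$ gives $\bar X_u/\mu^t=1-O(t^2s^2/K)$ and $\overline{X^2}/\mu^{2t}=1+ts^2+O(t^2s^4)+O(t^2s^2/K)$, where the boundedness $\abs{\xi_c}\le K^{1-\epsilon}$ is what makes the neglected higher cumulants and the without-replacement corrections genuinely subleading. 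Taking the ratio, $\overline{\operatorname{Var}(X/\bar X)}=\overline{X^2}/\bar X^2-1=t\,s^2+(\text{lower order})$.

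It then remains to average over $\mathcal{E}_k$ and compare with $K^{-2\epsilon}$. Here the regime of interest enters: $k=\omega(\log n)$ makes $K=2^k$ super-polynomial in $n$, so $t=O(\operatorname{poly}(n))=K^{o(1)}$, and the quantity to beat is $\mathbb{E}_{u}[t\,s^2]=K^{o(1)}\cdot\mathbb{E}_u[s^2]$. Thus it suffices that the relative standard deviation be $o(\mu)$ in the sense relevant to this setting, i.e.\ decaying fast enough against the $K^{-\epsilon}$ scale (equivalently $s=K^{-\epsilon'}$ for some fixed $\epsilon'>\epsilon$, which includes the zero-spread case of unbiased-phase equal-magnitude gates such as the random-sign Hadamard); then $\mathbb{E}_u[t\,s^2]\le K^{-2\epsilon}$ for all large $n$ once the lower-order corrections are absorbed, establishing Eq.~\eqref{eq:variance-of-prob-for-randomness}. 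Proposition~\ref{thm:computation-prob-var-for-randomness} then gives that $\{U\ket{p(b_*,a)}\}$ is a pseudorandom state ensemble.

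I expect the main obstacle to be the \emph{product} structure in $X_{u,b}$: multiplying $t$ factors amplifies the relative fluctuation from $s^2$ to roughly $t\,s^2$, and the moment expansion of $\prod_i(1+\xi_{b_i})$ would diverge if the factors had heavy tails, so the tail hypothesis Eq.~\eqref{eq:magnitude-of-prob-for-randomness} is indispensable — both for bounding every higher-order contraction term and for truncating the rare configurations in which some entry exceeds $K^{-\epsilon}$ (as already done in the proof of Proposition~\ref{thm:computation-prob-var-for-randomness}). The remaining delicate point is quantitative rather than conceptual: confirming that $\bar X_u$ stays within a $(1-o(1))$ factor of $\mu^t$, so that dividing by $\bar X_u^2$ is harmless, which again follows from the small-spread assumption together with $t\ll K$.
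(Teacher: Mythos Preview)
Your approach—verifying the variance hypothesis Eq.~\eqref{eq:variance-of-prob-for-randomness} and then invoking Proposition~\ref{thm:computation-prob-var-for-randomness}—is exactly what the paper intends; the corollary is stated there without a separate proof, as an immediate consequence of that proposition. Your observation that the literal hypothesis $\mathrm{std}=o(\mu)$ only yields $\overline{\operatorname{Var}(X/\bar X)}\approx t\,s^2$, and hence really needs $s\lesssim K^{-\epsilon'}$ rather than merely $s=o(1)$ to meet the required $K^{-2\epsilon}$ bound, is a valid quantitative caveat that the paper leaves implicit.
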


\begin{corollary}
    Subsystem dynamics with the embedded Hadamard gate generates pseudorandom states from initial computational states.
\end{corollary}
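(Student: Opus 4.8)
The plan is to derive this as an immediate consequence of \textbf{Proposition}~\ref{thm:computation-prob-var-for-randomness} (equivalently, of the preceding \textbf{Corollary}), applied to the single embedded subsystem unitary $u = H^{\otimes k}$ with $H$ the one‑qubit Hadamard gate. The key structural fact is that $H^{\otimes k}$ is the most favorable possible choice for this construction: writing $K = 2^k$, every matrix element satisfies $|u_{b,b'}|^2 = K^{-1}$ exactly, for all $b,b'\in\{0,1\}^k$. Consequently the randomness that renders the output ensemble pseudorandom comes entirely from the random permutation $p$ and the random phase function $f$ entering $O_a$, precisely as in the proof of \textbf{Proposition}~\ref{thm:computation-prob-var-for-randomness}; no averaging over the embedded unitary is required.

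Concretely, I would first verify the magnitude hypothesis Eq.~\eqref{eq:magnitude-of-prob-for-randomness}: for any constant $\epsilon\in(0,1)$ and $k$ large, $K^{-1}<K^{-\epsilon}$, so $\operatorname{Pr}[|u_{b,b'}|^2\ge K^{-\epsilon}]=0\le\operatorname{negl}(n)$ trivially. Next I would evaluate the quantities in the variance hypothesis Eq.~\eqref{eq:variance-of-prob-for-randomness}. Since $|u_{b_i,b_*}|^2 = K^{-1}$ for every $b_i$, the product $X_{u,b}=\prod_{i=1}^t|u_{b_i,b_*}|^2 = K^{-t}$ does not depend on the type, hence $\bar X_u = \binom{K}{t}^{-1}\sum_{|\operatorname{type}(b)|=t}X_{u,b}=K^{-t}$ and $X_{u,b}/\bar X_u - 1 = 0$ identically. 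Therefore $\bar Y_u = \binom{K}{t}^{-1}\sum_{|\operatorname{type}(b)|=t}(X_{u,b}/\bar X_u-1)^2 = 0\le K^{-2\epsilon}$, so Eq.~\eqref{eq:variance-of-prob-for-randomness} holds with a vanishing left‑hand side; in fact the trace distance $\operatorname{TD}(\sigma_{p,u},\sigma_p)$ appearing in that proof is exactly zero rather than merely negligible, so the ensemble over $\{f,p\}$ is pseudorandom with no failure probability at all.

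With both hypotheses discharged, \textbf{Proposition}~\ref{thm:computation-prob-var-for-randomness} directly gives the claim: the subsystem dynamics embedding $u=H^{\otimes k}$ maps any computational basis state to an ensemble (over $f$ and $p$) whose trace distance to the symmetric‑subspace state is negligible in $n$. There is essentially no obstacle; the only points deserving care are (i) using \textbf{Proposition}~\ref{thm:computation-prob-var-for-randomness} rather than \textbf{Theorem}~\ref{thm:computation-prob-for-randomness} because the statement concerns a \emph{single} $u$ (though one may equally note that the trivial ensemble $\mathcal{E}_k=\{H^{\otimes k}\}$ satisfies $\mathbb{E}_{u\sim\mathcal{E}_k}[|u_{b,b'}|^2]=K^{-1}$, so \textbf{Theorem}~\ref{thm:computation-prob-for-randomness} applies verbatim as well), and (ii) recalling that the hypothesis $k=\omega(\log n)$ is needed only to make the inherited $O(t^2/K)$ collision error and the $K^{\epsilon'-\epsilon}$‑type bounds of that proof negligible, which is automatic for super‑logarithmic $k$.
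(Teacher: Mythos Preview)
Your proposal is correct and takes essentially the same approach as the paper: the corollary is stated without proof precisely because it follows immediately from \textbf{Proposition}~\ref{thm:computation-prob-var-for-randomness} (or equivalently the preceding corollary) once one observes that $|u_{b,b'}|^2 = K^{-1}$ exactly for every entry of $H^{\otimes k}$, which makes both the magnitude hypothesis Eq.~\eqref{eq:magnitude-of-prob-for-randomness} and the variance hypothesis Eq.~\eqref{eq:variance-of-prob-for-randomness} hold trivially with zero on the left-hand side. Your additional remark that the trace distance $\operatorname{TD}(\sigma_{p,u},\sigma_p)$ is in fact exactly zero (so no failure probability arises from this step) is a nice sharpening that the paper does not make explicit.
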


\newpage
\subsection*{10. Level statistics of RSED}
Here, we study level statistics of the RSED. We specifically consider the spectral form factor introduced in ~\cite{Cotler2017Blackhole,Cotler2017Chaos} as well as the nearest-neighbor level repulsion distribution. 

Let us first discuss the spectral form factor. Let $k$ be the subsystem size, and let $\mathcal{E}_\mathrm{sub}$ be an ensemble of embedded hamiltonians. Then, the ensemble of hamiltonians in the entire space is given by
\begin{equation}
    \mathcal{E} = \left\{h|h = \sum_{a\in\{0,1\}^{n-k}} O_a h_\mathrm{sub} O_a^\dagger,h_\mathrm{sum}\in\mathcal{E}_\mathrm{sub}\right\}. 
\end{equation}
The energy spectrum of $h$ is that of $h_\mathrm{sub}$ with the degeneracy of $2^{n-k}$. Now, let us consider the spectral form factor defined as
\begin{equation}
    \mathcal{R}_2(\beta,t) = \mathbb{E}_{h\sim\mathcal{E}}\left[\abs{\tr\left(e^{-\beta h - iht}\right)}^2\right].
\end{equation}
Let $\{\epsilon_m\}$ be energies of $h_\mathrm{sub}$, then we have
\begin{equation}
    \mathcal{R}_2(\beta,t) = \mathbb{E}_{h\sim\mathcal{E}} \left[ 4^{n-k}\sum_{m,n=1}^K e^{-\beta(\epsilon_m+\epsilon_n) -i(\epsilon_m-\epsilon_n) t} \right] = 4^{n-k} \mathcal{R}_{2,S}(\beta,t),
\end{equation}
where $\mathcal{R}_{2,S}(\beta,t)$ is the spectral form factor of $\mathcal{E}_\mathrm{sub}$. This implies that if $\mathcal{E}_\mathrm{sub}$ is chaotic, then the spectral form factor of the RSED exhibits a chaotic behavior. Thus, one cannot distinguish a pseudochaotic subsystem embedded dynamics with embedded chaotic hamiltonians with a chaotic dynamics by looking at the spectral form factor.

Next, let us consider the nearest-neighbor level repulsion distribution. In this case, even if the level repulsion distribution of $\{\epsilon_m\}$ follows the Wigner-Dyson distribution, that for $h$ differs from it and has a peak at $\epsilon=0$ due to the exponential degeneracy {(see \hyperlink{fig:HadaP-stat}{Supplementary Figure 2})}. Whether this peak can be detected by polynomial copies of evolved states and poly-time quantum algorithms is an open question.

\newpage
\hypertarget{sec:coherence-importance}{}
\subsection*{11. Importance of coherence for pseudochaotic dynamics}
Coherence of a state regards the number of superposed states in a certain basis to represent the state. Measures for coherence should monotonically decrease under incoherent operations, and two measures satisfying this are the relative entropy of coherence and the $l_1$ norm of coherence~\cite{Baumgratz2014}. Here, we will only consider the relative entropy of coherence defined as 
\begin{equation}
    C_{\text{rel.ent.}}(\rho) = S(\rho^\text{diag}) - S(\rho),
\end{equation}
where $\rho^\text{diag}$ is the diagonal matrix whose entries are that of $\rho$, and $S(\cdot)$ is the von Neumann entropy. Since we are interesed in pure states, $S(\rho)$ will be always to be zero. The maximally coherent state in a Hilbert space with a basis is given by the equal superposition of all the basis states. Any fluctuations around the equal coefficients decrease coherence of the state. Below, we study how these small fluctuations affect pseudorandomness of an ensemble constructed by a subsystem embedded dynamics. We find that the ensemble is pseudorandom if the embedded dynamics generates states with almost saturated coherence. We then argue that a pseudorandom state ensemble generically requires non-negligible coherence for any basis.

\begin{theorem}\label{thm:coherence-for-randomness}
Let us consider a set of orthonormal states $\{\ket{\psi_i}\}_{i=1}^{2^k}$ in the subspace $S$ with the dimension $2^k$ in a $n$-qubit system. If all the $t$ copies of states with $t=\operatorname{poly}(n)$ have the maximal relative entropy of coherence in the computational basis up to $2^{-\epsilon k}$ correction for some constant $\epsilon>0$ and measurement probabilities in $(0,(1+2^{-tk/3})2^{-tk})$, then the ensemble constructed by applying random subset phase isometries $\{O_a\}_{a\in\{0,1\}^{n-k}}$ to the states is a pseudorandom state ensemble with a probability higher than $1-\operatorname{negl}(n)$.
\end{theorem}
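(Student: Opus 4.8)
The plan is to recognize this ensemble as an RSED-generated ensemble for a suitable embedded unitary and then verify the two hypotheses of \textbf{Proposition}~\ref{thm:computation-prob-var-for-randomness}. Since $\{\ket{\psi_i}\}_{i=1}^{2^k}$ is an orthonormal basis of the $2^k$-dimensional subspace $S\cong\mathbb{C}^{2^k}$, the map $u:\ket{i}\mapsto\ket{\psi_i}$ is unitary, with matrix elements $u_{b,i}=c^{(i)}_b:=\braket{b}{\psi_i}$. Placing $\ket{\psi_i}$ on the $k$-qubit block labelled by $a$ and applying $O_a$ reproduces, up to an irrelevant global sign, the state $O_auO_a^\dagger\ket{p(i,a)}=U\ket{p(i,a)}$ --- i.e.\ the RSED evolution of the computational input $\ket{p(i,a)}$ under the embedding of $u$. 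Hence the ensemble built by applying $\{O_a\}$ to $\{\ket{\psi_i}\}$ is exactly the RSED ensemble on computational inputs analyzed in \textbf{Theorem}~\ref{thm:computation-prob-for-randomness} and \textbf{Proposition}~\ref{thm:computation-prob-var-for-randomness}, and it suffices to check the magnitude condition Eq.~\eqref{eq:magnitude-of-prob-for-randomness} and the variance condition Eq.~\eqref{eq:variance-of-prob-for-randomness} for every column $i$ of $u$. (Orthonormality is used only here, to realize the states as the columns of a unitary.)

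First I would dispatch Eq.~\eqref{eq:magnitude-of-prob-for-randomness}. The measurement probability of $\ket{\psi_i}^{\otimes t}$ at the computational string $(b,\dots,b)$ is $\abs{c^{(i)}_b}^{2t}$, which by the measurement-probability hypothesis is at most $(1+2^{-tk/3})2^{-tk}$, so $\abs{u_{b,i}}^2=\abs{c^{(i)}_b}^2\le 2^{-k}(1+2^{-tk/3})$; for any constant $\epsilon\in(0,1)$ and $k$ large this lies below $K^{-\epsilon}$ deterministically, so Eq.~\eqref{eq:magnitude-of-prob-for-randomness} holds with zero failure probability.

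Next comes the variance bound, which is the crux. For a distinct $t$-element type $\mathbf{b}$ the quantity $X_{u,i}(\mathbf{b})=\prod_{l}\abs{c^{(i)}_{b_l}}^2$ is precisely a measurement probability of $\ket{\psi_i}^{\otimes t}$, hence $\le(1+2^{-tk/3})2^{-tk}$. Using $\sum_{\mathbf{b}}\prod_l\abs{c^{(i)}_{b_l}}^2=(\sum_b\abs{c^{(i)}_b}^2)^t=1$ and bounding the collision probability by $\binom{t}{2}\sum_b\abs{c^{(i)}_b}^4\le\binom{t}{2}2^{-k}(1+o(1))$ (via the magnitude bound), the average $\bar{X}_u=\binom{K}{t}^{-1}\sum_{\mathbf{b}}X_{u,i}(\mathbf{b})$ obeys $X_{u,i}(\mathbf{b})/\bar{X}_u\le 1+2^{-tk/3}+O(t^2 2^{-k})$ for every $\mathbf{b}$. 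The above-average contributions to $\bar{Y}_u=\binom{K}{t}^{-1}\sum_{\mathbf{b}}(X_{u,i}(\mathbf{b})/\bar{X}_u-1)^2$ are therefore $O(2^{-2tk/3}+t^4 2^{-2k})$, while the below-average ones I would control by $(1-x)^2\le 1-x$ on $[0,1]$ together with the zero-mean identity $\mathbb{E}_{\mathbf{b}}[(\bar{X}_u-X_{u,i}(\mathbf{b}))_+]=\mathbb{E}_{\mathbf{b}}[(X_{u,i}(\mathbf{b})-\bar{X}_u)_+]\le\bar{X}_u(2^{-tk/3}+O(t^2 2^{-k}))$. Adding the two parts gives $\bar{Y}_u=O(2^{-tk/3}+t^2 2^{-k})$, which is $\le K^{-2\epsilon'}$ for a fixed constant $\epsilon'\in(0,1/6)$ because $k=\omega(\log n)$ and $t=\operatorname{poly}(n)$. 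Here the near-maximal relative entropy of coherence hypothesis enters as the complementary statement that $\{\abs{c^{(i)}_b}^2\}$ has Kullback--Leibler divergence $O(2^{-\epsilon k}/t)$ from the uniform distribution, which via a $\chi^2$-versus-divergence comparison independently yields $\sum_b(\abs{c^{(i)}_b}^2-2^{-k})^2=O(2^{-(1+\epsilon)k}/t)$ and hence the same estimate on $\bar{Y}_u$; I would use whichever of the two routes makes the constant cleanest across all $t$.

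With both hypotheses in hand, \textbf{Proposition}~\ref{thm:computation-prob-var-for-randomness} gives, for each fixed $i$ and $a$, that the $t$-copy output ensemble over random $p$ and $f$ is at trace distance $\operatorname{negl}(n)$ from $\Pi^{S,t}_{\text{sym}}/\operatorname{tr}(\Pi^{S,t}_{\text{sym}})$ with probability $1-\operatorname{negl}(n)$, and --- as in the proof of \textbf{Theorem}~\ref{thm:pseudorandom-ensemble-from-chaos}, using Lemmas~2.2--2.3 of \cite{aaronson2023quantum} --- this subspace operator is $O(t^2/2^k)$-close to the full-space Haar $t$-th moment operator; averaging over the remaining labels $i\in\{0,1\}^k$ and seeds $a\in\{0,1\}^{n-k}$ preserves these bounds by convexity of the trace distance, giving a pseudorandom state ensemble. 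The hard part will be the variance estimate: turning the coherence/probability hypotheses into $\bar{Y}_u\le K^{-2\epsilon'}$ with a single constant $\epsilon'$ valid simultaneously for all $t=\operatorname{poly}(n)$, and in particular controlling the below-average fluctuations of $X_{u,i}(\mathbf{b})$ --- where the measurement-probability bound supplies no direct handle and one must instead use the zero-mean cancellation (equivalently, exploit near-maximal entropy). Everything after that is a direct appeal to the machinery already built.
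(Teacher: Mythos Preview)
Your argument is correct, and the overall architecture---recast as an RSED with $u$ the unitary whose columns are the $\ket{\psi_i}$, then verify the two hypotheses of \textbf{Proposition}~\ref{thm:computation-prob-var-for-randomness}---matches the paper's. The difference lies in how the variance condition~\eqref{eq:variance-of-prob-for-randomness} is obtained. The paper converts the near-maximal-coherence hypothesis $\Delta C_{\mathrm{rel.ent.}}\le 2^{-\epsilon k}$ into the $\chi^2$-type bound $D^t\sum_i(p_i-D^{-t})^2<\operatorname{negl}(n)$ via a second-order Taylor expansion of $\log p_i$ around $D^{-t}$, using the measurement-probability hypothesis only to control the Cauchy remainder. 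Your primary route instead extracts the variance bound \emph{directly} from the probability upper bound: the uniform cap $X_b/\bar X_u\le 1+\alpha$ (with $\alpha=2^{-tk/3}+O(t^2/K)$) plus the zero-mean identity $\mathbb{E}[(Y-1)_+]=\mathbb{E}[(1-Y)_+]$ and $(1-Y)_+^2\le(1-Y)_+$ give $\bar Y_u\le\alpha^2+\alpha$, without invoking the coherence hypothesis at all. This is shorter and, as a side effect, shows that in the theorem as stated the coherence hypothesis is actually implied (in the only sense needed) by the probability hypothesis; the paper's route, by contrast, makes the coherence--$\chi^2$ link explicit, which is the conceptual point of Supplementary Note~11.

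One small caveat on your secondary route: the bare assertion that a ``$\chi^2$-versus-divergence comparison'' turns small KL into small $\chi^2$ is not valid in general---the standard inequality $\mathrm{KL}\le\chi^2$ runs the other way. That direction \emph{does} hold here, but only because the probability upper bound lets you control the Taylor remainder, which is exactly the paper's computation. Since your primary (zero-mean) route does not rely on this, the overall argument stands.
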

\begin{proof}
Let the deviation of coherence of the states from the maximal value be $\Delta C_{\text{rel.ent.}}$. In addition, let $\{p_i\}_{i=1}^{D^t}$ be measurement probabilities of the computational basis measurement on $t$ copies of a state $\ket{\phi}\in\{\ket{\psi_i}_{i=1}^D\}$ with $D=2^k$. Then, $\Delta C_{\text{rel.ent.}}$ is given by
\begin{equation}
    \Delta C_{\text{rel.ent.}} = \log D^t + \sum_{i=1}^{D^t} p_i \log p_i.
\end{equation}
If all of $\{p_i\}_{i=1}^{D^t}$ are in $(0,(1+2^{-tk/3})2^{-tk})$, then each $\log p_i$ is given from the Taylor expansion by
\begin{equation}
    \log p_i = -\log D^t + D^t\left( p_i -\frac{1}{D^t} \right) + R_1(p_i)
\end{equation}
with the remainder $R_1(p_i)$, whose Cauchy form is given by
\begin{equation}
    R_1(p_i) = - \frac{1}{\xi_i^2} (p_i - \xi_i) \left(p_i - \frac{1}{D^t}\right)
\end{equation}
for some real number $\xi_i$ between $p_i$ and $D^{-t}$. By substituting $\log p_i$ into its expansion form, we get
\begin{equation}
    \Delta C_{\text{rel.ent.}} = D^t \sum_{i=1}^{D^t} p_i \left( p_i -\frac{1}{D^t} \right) - \sum_{i=1}^{D^t} \frac{p_i}{\xi_i}\left(\frac{p_i}{\xi_i}-1\right)\left( D^tp_i - 1\right).
\end{equation}
The upper bound of the second term in the right hand side can be made by setting $p_i = (1+D^{-t/3})D^{-t}$ and $\xi_i=D^{-t+\delta}$ for some $\delta>0$ satisfying $D^\delta<1+D^{-t/3}$ and is given by
\begin{equation}
    \begin{split}
    \sum_{i=1}^{D^t} \frac{p_i}{\xi_i}\left(\frac{p_i}{\xi_i}-1\right)\left( D^tp_i - 1\right)
    &< D^{2t/3} [(1+D^{-t/3})D^{-\delta}]([(1+D^{-t/3})D^{-\delta}]-1)\\
    &< D^{t/3-\delta}((1+D^{-t/3})D^{-\delta}-1)\\
    &< D^{-\delta}.
    \end{split}
\end{equation}
This introduces a lower bound on $\Delta C_{\text{rel.ent.}}$ as
\begin{equation}
    D^t \sum_{i=1}^{D^t} \left( p_i -\frac{1}{D^t} \right)^2 - D^{-\delta} < \Delta C_{\text{rel.ent.}}.
\end{equation}
Now, let us assume that $\Delta C_{\text{rel.ent.}}$ is negligibly small. Then, it follows that
\begin{equation}\label{eqn:coherence-for-randomness}
    D^t \sum_{i=1}^{D^t} \left( p_i -\frac{1}{D^t} \right)^2 < \operatorname{negl}(n).
\end{equation}
In other words, the condition Eq.~\eqref{eq:variance-of-prob-for-randomness} of \textbf{Proposition}~\ref{thm:computation-prob-var-for-randomness} is satisfied. Since the probabilities are upper bounded by $2^{-(4/3)tk}$, the condition Eq.~\eqref{eq:magnitude-of-prob-for-randomness} of \textbf{Theorem}~\ref{thm:computation-prob-for-randomness} is also satisfied. Thus, the ensemble of states embedded into the full space by random isometries $\{O_a\}_{a\in\{0,1\}^{n-k}}$ is a pseudorandom state ensemble with a probability higher than $1-\operatorname{negl}(n)$.
\end{proof}

\begin{proposition}
    Let $\ket{\psi}$ be a state has almost saturating coherence in a basis with a negligible deviation and measurement probabilities in $(0,2^{-tk+1})$. Then, it is indistinguishable from any of maximally coherent states by polynomial number $t$ of measurements in the computational basis.
\end{proposition}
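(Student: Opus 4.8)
Write $D=2^{k}$ and let $q$ be the probability distribution over the $D^{t}$ outcomes of a computational-basis measurement of $t$ copies of a maximally coherent state. Since a maximally coherent state in the $D$-dimensional subspace is $\frac{1}{\sqrt D}\sum_{j}e^{i\theta_{j}}\ket{j}$ up to the phases $\{\theta_j\}$, the outcome distribution is the uniform one, $q_{i}=D^{-t}$, independently of the phases, and likewise whether the $t$ copies are measured one at a time or jointly in the product basis. Let $p$ be the corresponding $t$-copy outcome distribution for $\ket{\psi}$. An adversary restricted to computational-basis measurements on the $t$ copies sees only a sample from $p$, so its optimal advantage for distinguishing $\ket{\psi}$ from any maximally coherent state is exactly the total variation distance $\tfrac12\lVert p-q\rVert_{1}$. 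The plan is therefore to show $\lVert p-q\rVert_{1}=\operatorname{negl}(n)$ from the hypotheses, namely that the coherence deficit $\Delta C_{\text{rel.ent.}}$ of $\ket{\psi}^{\otimes t}$ in the computational basis is negligible and that $p_{i}\in(0,2^{-tk+1})$ for all $i$.

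The key observation is that this coherence deficit is literally a Kullback--Leibler divergence: $\Delta C_{\text{rel.ent.}}=\log D^{t}-C_{\text{rel.ent.}}(\ket{\psi}^{\otimes t})=\log D^{t}+\sum_{i}p_{i}\log p_{i}=\sum_{i}p_{i}\log(D^{t}p_{i})=D_{\mathrm{KL}}(p\,\Vert\,q)$, which by hypothesis is $\operatorname{negl}(n)$. Pinsker's inequality then gives $\tfrac12\lVert p-q\rVert_{1}\le\sqrt{\tfrac12 D_{\mathrm{KL}}(p\Vert q)}=\operatorname{negl}(n)$, which already finishes the argument. If one prefers to stay within the elementary estimates used in the proof of \textbf{Theorem}~\ref{thm:coherence-for-randomness}, one can instead set $r_{i}=D^{t}p_{i}-1$; the hypothesis $p_{i}<2^{-tk+1}=2D^{-t}$ forces $r_{i}\in(-1,1)$, where convexity of $x\mapsto(1+x)\log(1+x)$ yields $(1+r)\log(1+r)\ge r+\tfrac14 r^{2}$, so (using $\sum_{i}r_{i}=0$) $\Delta C_{\text{rel.ent.}}=D^{-t}\sum_{i}(1+r_{i})\log(1+r_{i})\ge\tfrac14 D^{t}\sum_{i}(p_{i}-D^{-t})^{2}$. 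Combining $\Delta C_{\text{rel.ent.}}=\operatorname{negl}(n)$ with Cauchy--Schwarz over the $D^{t}$ outcomes then gives $\lVert p-q\rVert_{1}\le\sqrt{D^{t}\sum_{i}(p_{i}-D^{-t})^{2}}=\operatorname{negl}(n)$, exactly mirroring the passage around Eq.~\eqref{eqn:coherence-for-randomness}.

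A negligible total variation distance between output distributions means that no function of the $t=\operatorname{poly}(n)$ computational-basis outcomes — hence no admissible measurement strategy, bounded or not — can tell $\ket{\psi}$ from a maximally coherent state with more than negligible bias, which is the claim; and since all maximally coherent states induce the same distribution $q$, it suffices to have compared against that single $q$. I do not expect a real obstacle here: the only point requiring care is using the range hypothesis $p_{i}\in(0,2^{-tk+1})$ to license the quadratic lower bound on $(1+r)\log(1+r)$ (equivalently, to keep the Taylor remainder for $\log p_{i}$ controlled as in \textbf{Theorem}~\ref{thm:coherence-for-randomness}), and via the Pinsker route even that is unnecessary, so the statement is in essence a repackaging of the identity $\Delta C_{\text{rel.ent.}}=D_{\mathrm{KL}}(p\Vert q)$ together with a standard divergence-to-distance inequality.
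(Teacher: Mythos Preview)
Your proposal is correct. Your second route—bounding the total variation distance via Cauchy--Schwarz, $\lVert p-q\rVert_1\le\sqrt{D^{t}\sum_i(p_i-D^{-t})^2}$, and then invoking the quadratic lower bound on $\Delta C_{\text{rel.ent.}}$ established in the proof of Theorem~\ref{thm:coherence-for-randomness} (i.e.\ Eq.~\eqref{eqn:coherence-for-randomness})—is exactly the paper's argument; in fact you state the Cauchy--Schwarz step more carefully than the paper, which writes the bound without the square root (the conclusion is unaffected since the negligibility of $D^{t}\sum_i(p_i-D^{-t})^2$ implies that of its square root). Your first route, recognizing $\Delta C_{\text{rel.ent.}}=D_{\mathrm{KL}}(p\Vert q)$ and applying Pinsker directly, is a genuinely cleaner alternative that the paper does not take: it bypasses the Taylor/convexity estimate altogether and, as you observe, renders the range hypothesis $p_i\in(0,2D^{-t})$ unnecessary for this particular proposition.
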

\begin{proof}
The total deviation distance between the measurement probability distribution $p$ of $t$ copies of $\ket{\psi}$ and that of the same copies of a maximally coherent state is upper bounded by
\begin{equation}
    d_\text{TV}(p,q) = \frac{1}{2} \sum_{i=1}^{D^t} \abs{p_i-\frac{1}{D^t}} \leq \frac{D^t}{2} \sum_{i=1}^{D^t} \left(p_i-\frac{1}{D^t}\right)^2
\end{equation}
due to the Cauchy-Shwarz inequality. Since $\ket{\psi}$ has the maximal coherence with a negligible deviation, the upper bound is negligible due to \textbf{Theorem}~\ref{thm:coherence-for-randomness} as explicitly stated in Eq.~\eqref{eqn:coherence-for-randomness}. Since the total deviation distance is negligibly small, any strategies on $t$ copies of measurement outcomes fail to distinguish the two distributions.
\end{proof}
This proposition implies that a RSED generating states with super-polynomial number of superpositions is necessary to construct a pseudorandom state ensemble since those states are indistinguishable from the maximally coherent state in the computational basis, including typical Haar random states. Conversely, a pseudorandom state ensemble is also necessitated to have states with super-polynomial number of superpositions in any basis due to the following theorem.

\newpage
\subsection*{12. Resources for chaotic dynamics}
In this section, we study how resources including entanglement, magic, and coherence of a subsystem embedded dynamics affect its pseudochaotic nature. If these resources of the subsystem embedded dynamics are increased simultaneously, then its chaotic behavior can be enhanced so that its OTOCs and the trace distance from Haar random states decrease. To demonstrate this, we adopt the random CCX circuit construction of pseudorandom isometries in Ref.~\cite{Lee2024fast}. Then, this is formalize as the follows. 
\begin{theorem}
    OTOCs of an RSED and the trace distance between Haar random states and states evolved by the RSED scale $2^{-O(n)}$ using exponentially many entangling and non-Clifford gates, and $O(n)$ coherent gates.
\end{theorem}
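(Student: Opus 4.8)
The plan is to run the RSED construction in the resource-rich regime $k=\Theta(n)$ and to embed, as the subsystem unitary $u$, the random-CCX pseudorandom-isometry circuit of Ref.~\cite{Lee2024fast} with its pseudorandom classical ingredients replaced by \emph{exactly} uniform ones; the two claimed bounds then follow by specializing \textbf{Theorem}~\ref{thm:negl-otoc-supp} and \textbf{Theorem}~\ref{thm:pseudorandom-ensemble-from-chaos}. Concretely, fix a constant $0<c<1$, set $k=\lfloor cn\rfloor$ so that $\chi=2^{n-k}=2^{\Theta(n)}$, take $p$ and $f$ exactly uniform, and take $u$ to be a Hadamard-sandwiched truly-random reversible circuit on the $k$ qubits together with a diagonal $\pm1$ phase. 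Realizing a truly random permutation and phase on $m$ bits costs $2^{O(m)}$ Toffoli and $X$ gates --- entangling and non-Clifford, but not coherent --- while only $O(k)=O(n)$ Hadamards are needed to make $u$ behave like a $2^{-\Theta(n)}$-approximate unitary (hence state) $t$-design in the subspace $S$ for every $t=\operatorname{poly}(n)$; in particular the moments of the entries of $u$ match their Haar values up to $2^{-\Theta(n)}$.

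For the OTOC bound I would invoke \textbf{Theorem}~\ref{thm:negl-otoc-supp}. The matched second moment gives $\mathbb{E}\,\abs{u_{b,b'}}^2=2^{-k}(1+2^{-\Theta(n)})$, and a few matched higher moments together with Markov's inequality give $\operatorname{Pr}[\,\abs{u_{b,b'}}^2\geq 2^{-k}\operatorname{poly}(n)\,]\leq\operatorname{negl}(n)$, exactly as in the random-sign Hadamard analysis of \hyperlink{sec:random-sign-Hadamard}{Supplementary Note 7}. Hence the hypothesis of \textbf{Theorem}~\ref{thm:negl-otoc-supp} --- that the (averaged) entries of the embedded operator have magnitude $O(2^{-k/2})$ --- holds with probability $1-\operatorname{negl}(n)$, and the theorem yields $\abs{O_{VW}(U)}\leq 2^{-(1-\epsilon)k/2}=2^{-\Theta(n)}$ and therefore $C_{VW}(U)=1-2^{-\Theta(n)}$; the non-local Pauli $V$ case is handled identically via the reduction of \hyperlink{sec:non-local-otoc}{Supplementary Note 4}.

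For the trace-distance bound I would follow the inequality chain in the proof of \textbf{Theorem}~\ref{thm:pseudorandom-ensemble-from-chaos}. Because $u$ is a $2^{-\Theta(n)}$-approximate state $t$-design on the $2^k$-dimensional subspace, the ensemble-averaged $t$-copy state $\sigma$ obeys $\operatorname{TD}\!\big(\sigma,\Pi_{\text{sym}}^{S,t}/\operatorname{tr}\Pi_{\text{sym}}^{S,t}\big)=2^{-\Theta(n)}$, so that chain --- which additionally picks up an $O(t^2/2^k)=2^{-\Theta(n)}$ term via \textbf{Lemma 2.2} of \cite{aaronson2023quantum} --- closes with $\operatorname{TD}\big(\mathbb{E}_S[\sigma],\mathbb{E}_{\ket{\phi}\sim\operatorname{Haar}(2^n)}[\ketbra{\phi}{\phi}^{\otimes t}]\big)=2^{-\Theta(n)}$. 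For the gate count, $PF(u\otimes I^{\otimes(n-k)})FP^\dagger$ uses $2^{O(n)}$ Toffoli and $X$ gates (dominated by the truly random $p$, $f$, and the reversible core of $u$), all entangling and non-Clifford, together with exactly $O(k)=O(n)$ Hadamard gates --- the only coherent (superposition-creating) gates --- all sitting inside $u$. This is the asserted $2^{-O(n)}$ scaling at the stated cost.

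The main obstacle is supplying the input in the first paragraph: one must certify that Ref.~\cite{Lee2024fast}'s construction, stated there as a \emph{computational} pseudorandomness result, upgrades --- once its pseudorandom permutation and phase are replaced by truly uniform ones and the number of Toffoli gates is taken $2^{O(n)}$, with $k$ a constant fraction of $n$ --- to an honest $2^{-\Theta(n)}$-approximate $t$-design whose matrix-element statistics concentrate at $2^{-k/2}$. Equivalently, the delicate point is the coherence accounting: showing that $O(k)$ Hadamard gates genuinely suffice to convert a deep random reversible circuit into such a design, so that coherence --- rather than entanglement or magic --- is the sole bottleneck resource that stays $O(n)$ while the others are spent freely. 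Granting that input, the theorem is exactly the $k=\Theta(n)$ specialization of \textbf{Theorem}~\ref{thm:negl-otoc-supp} and \textbf{Theorem}~\ref{thm:pseudorandom-ensemble-from-chaos}.
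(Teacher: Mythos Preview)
Your proposal is correct and takes essentially the same approach as the paper: set $k=\Theta(n)$, invoke \textbf{Theorem}~\ref{thm:negl-otoc-supp} and \textbf{Theorem}~\ref{thm:pseudorandom-ensemble-from-chaos} (the paper cites Theorems~\ref{thm:negl-otoc-supp}--\ref{thm:computation-prob-for-randomness}) for the $2^{-O(k)}$ bounds, and realize the random isometry via the random-CCX construction of Ref.~\cite{Lee2024fast} at exponential gate cost while keeping the coherent (Hadamard) count at $O(k)=O(n)$. The paper's proof is much terser than yours---it simply asserts that the circuit error $\Delta$ can be made $2^{-O(n)}$ with $O(n^2 t\log t)$ CCX gates for $t=\omega(\operatorname{poly}(n))$ by direct citation of \cite{Lee2024fast}---so the ``obstacle'' you flag (upgrading the computational statement to a quantitative design bound) is exactly the content the paper delegates to that reference rather than re-deriving.
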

\begin{proof}
    The magnitudes of OTOCs and the trace distance can be upper bounded by $2^{-O(k)}+\Delta$ with the subsystem size $k$ due to \textbf{Theorem~\ref{thm:negl-otoc-supp}-\ref{thm:computation-prob-for-randomness}}. Here, $\Delta$ is an error term due to a circuit implementation of a random isometry. Thus, they can scale with $2^{-O(n)}$ if $k=O(n)$. This requires the number of coherent gates, such as the Hadamard gate, to be $\Omega(n)$. Next, $\Delta$ can be estimated by the error term of the random CCX gate circuits. Let the number of copies be fixed as $t=\omega(\operatorname{poly}(n))$, then the error of the circuit can become $2^{-O(n)}$ by using $O(n^2 t\log t)$ CCX gates, which are non-Clifford entangling gates~\cite{Lee2024fast}. Thus, exponentially many entangling and non-Clifford gates are sufficient to make them be $2^{-O(n)}$.
\end{proof}

On the other hand, we find that those resources of a pseudochaotic subsystem embedded dynamics can be increased independently without making it become chaotic.

\begin{theorem}
    Each of entanglement, magic, and coherence of a RSED can be increased independently without changing other resources and making it become chaotic.
\end{theorem}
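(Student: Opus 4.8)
The plan is to collapse the statement onto the single quantitative hypothesis that already underlies every pseudochaos result in this Supplementary. By \textbf{Theorem}~\ref{thm:negl-otoc-supp}, \textbf{Theorem}~\ref{thm:computation-prob-for-randomness}, \textbf{Proposition}~\ref{thm:computation-prob-var-for-randomness} and \textbf{Theorem}~\ref{thm:coherence-for-randomness}, an RSED with embedded subsystem unitary $u$ has negligible OTOCs and produces a pseudorandom state ensemble --- i.e.\ is pseudochaotic --- provided $k=\omega(\log n)$ and the amplitudes obey $\operatorname{Pr}\!\left[\,|u_{b,b'}|^2\ge 2^{-\epsilon k}\,\right]\le\operatorname{negl}(n)$ for some constant $\epsilon>0$, together with the mild normalization hypotheses in those statements. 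Moreover the RSED is never strictly chaotic: its time-evolution operators span an exponentially small subalgebra of $\mathcal{U}(2^n)$, as noted in Supplementary Note~1. It therefore suffices to exhibit, for each resource $R\in\{\text{entanglement},\text{magic},\text{coherence}\}$, a family of subsystem unitaries that tunes $R$ monotonically while leaving the other two resources and the amplitude hypothesis invariant; composing the three families then gives three independent knobs, every setting of which keeps the RSED pseudochaotic and none of which makes it chaotic.

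Concretely I would fix the pseudochaotic baseline $u_0=H^{\otimes k}P$ of \hyperlink{sec:random-sign-Hadamard}{Supplementary Note~7}, write $u=u_{\mathrm{coh}}\,u_{\mathrm{mag}}\,u_{\mathrm{ent}}\,u_0$, and build each factor from elementary gates chosen to move exactly one resource. For entanglement, append a brickwork of $\mathrm{CZ}$ gates on the $k$ subsystem qubits: $\mathrm{CZ}$ is diagonal in the computational basis, so $|u_{b,b'}|$ --- hence the amplitude hypothesis and any amplitude-based coherence measure ($l_1$ norm or relative entropy of coherence) --- is untouched, and $\mathrm{CZ}$ is Clifford, so the stabilizer-R\'enyi magic of $u\ket{0^k}$ is untouched, while the entangling power across every cut grows with the $\mathrm{CZ}$ depth. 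For magic, append single-qubit $R_z(\theta)$ rotations at non-Clifford angles --- diagonal, hence amplitude- and coherence-neutral; single-qubit, hence entangling-power-neutral; non-Clifford, hence strictly increasing the stabilizer-R\'enyi magic. For coherence, insert a layer of single-qubit real Clifford rotations (e.g.\ Hadamards) on qubits that $u_0$ leaves in low superposition --- single-qubit, hence entangling-power-neutral; real Clifford, hence magic-neutral; and one checks directly that the amplitudes stay within $2^{-\epsilon k}$ of uniform so the hypothesis of \textbf{Theorem}~\ref{thm:coherence-for-randomness} still holds. Because the three gate sectors act respectively diagonally, locally, and Cliffordly, each is inert under the families designed for the other two; hence for every setting the conditions of \textbf{Theorem}~\ref{thm:negl-otoc-supp} and \textbf{Theorem}~\ref{thm:computation-prob-for-randomness} persist and pseudochaoticity is maintained while the three resources move independently.

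The main obstacle is the coherence direction, which is genuinely asymmetric: the Proposition at the end of \hyperlink{sec:coherence-importance}{Supplementary Note~11} shows that a subsystem unitary producing sub-polynomially coherent states cannot yield a pseudorandom ensemble, so coherence is not merely one resource among three but a \emph{necessary} one, pinned near its maximum. The honest form of the coherence claim is therefore that one may add coherence (Hadamard layers) to an otherwise non-coherent baseline --- which may already carry large entanglement and magic yet not be pseudochaotic --- and thereby render it pseudochaotic, whereas adding entanglement or magic to a coherent baseline neither helps nor harms; the work is to check that the coherifying layer drives the amplitude distribution monotonically toward uniform without overshooting the $2^{-\epsilon k}$ window of \textbf{Theorem}~\ref{thm:coherence-for-randomness}. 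The remaining technical point is to pin down the precise resource measures --- entangling power and stabilizer-R\'enyi magic of $u\ket{0^k}$ in the subsystem, relative entropy of coherence --- for which the three invariances above are exact rather than approximate; these must be read as resources of the embedded unitary $u$ itself, since the permutation-plus-sign isometry $O_a$ is generically non-Clifford and would otherwise swamp the magic budget. Once those measures are fixed, each invariance is standard (Clifford-invariance of the stabilizer R\'enyi entropy, additivity of entangling power under tensor factors, invariance of amplitude-based coherence under diagonal unitaries), and the theorem follows.
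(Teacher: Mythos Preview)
Your approach differs fundamentally from the paper's, and the difference creates a real gap. The paper does not modify the embedded $k$-qubit unitary $u$ at all; instead it appends gates $G$ to the \emph{output} of the full $n$-qubit RSED circuit, so the evolution becomes $GU$. The pseudochaos-preservation argument is then purely an OTOC identity: $O_{VW}(GU)=O_{V'W}(U)$ with $V'=G^\dagger VG$, and the paper chooses $G$ so that $V'$ stays in a class already handled --- a (possibly non-local) Pauli string when $G$ is Clifford, invoking the non-local Pauli OTOC bound of \hyperlink{sec:non-local-otoc}{Supplementary Note~4}, or still an on-site operator when $G$ is a product of T-gates or Hadamards. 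The resource increase is then a statement about the $n$-qubit output states: random Cliffords raise entanglement, T-gates raise magic, and Hadamards raise coherence from $k\log 2$ up to $O(n)$ (the separate theorem at the end of the section).

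Your framework of modifying $u$ cannot reproduce this. Most concretely, your entanglement knob is a no-op: appending $\mathrm{CZ}$ layers to $u_0=H^{\otimes k}P$ only alters the \emph{phases} of the amplitudes $u_{b,b_*}$, but the RSED output $\sum_b u_{b,b_*}(-1)^{f(b,a)+f(b_*,a)}\ket{p(b,a)}$ already carries a truly random sign $(-1)^{f(b,a)}$, so the output-state ensemble is literally unchanged by any diagonal modification of $u$. Your coherence knob is also stuck: the RSED output is always supported on $2^k$ computational basis states, so $C_{\text{rel.ent.}}\le k\log 2$, and the baseline $u_0$ already saturates this --- there is nothing left to increase by tweaking $u$, which is why the paper has to act with Hadamards on the full $n$-qubit output. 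Finally, redefining the resources as ``resources of $u$ itself'' does not match the theorem as the paper intends it (and, as you note, the isometry $O_a$ would dominate any such subsystem-level accounting anyway). The missing idea is precisely to step outside the subspace and post-compose with $n$-qubit gates whose action on $V$ is controlled by the Pauli-string OTOC results.
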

\begin{proof}
Entanglement, magic, and coherence can be controlled independently by attaching random Clifford gates, random T-gates, and random Hadamard gates, respectively, to the end of the circuit in Fig. 3 of the main text as shown in the subsequent sections.
\end{proof}

This theorem implies that subsystem embedded dynamics can only become chaotic by increasing all of those resources.\\

\noindent
\textbf{Controlling entanglement}
\indent Let us consider an ensemble of all possible Clifford gates $\mathcal{E}_\text{cl}=\{U\in \mathcal{U}(2^n)|UPU^\dagger\in\mathbb{P}_n,\forall P\in \mathbb{P}_n\}$, where $\mathbb{P}_n$ is the set of $n$-qubit Pauli operators. Then, its operators typically generate almost maximally entangled states from any stabilizer states~\cite{smith2006,dahlsten2007}. Now, let us compute OTOCs of Clifford gates defined in [Eq.~\ref{eq:inf-otoc}] with on-site Pauli matrices $V$ and $W$:
\begin{equation}\label{eq:Clifford-OTOC}
    \begin{split}
        O_{VW}(U) 
        &= \frac{1}{2^n} \operatorname{tr}(V U W U^\dag V U W U^\dag)\\
        &= \frac{1}{2^n} \operatorname{tr}(V W' V W')\\
        &= \pm 1,
    \end{split}
\end{equation}
with $W'=UWU^\dagger$. At the last line, we use the fact that $W'$ is a Pauli operator which either commutes or anti-commutes with another Pauli operator $V$. Thus, OTOCs of this ensemble never decay to zero. This implies that entanglement can be enhanced by an evolution independent of pseudochaoticity of the evolution. We note that the ensemble average of $O_{VW}(U)$ over Clifford gates gives vanishing OTOCs. However, since the ensemble of Clifford circuits does not form a unitary 4-design~\cite{zhu2016}, its variance of OTOCs is non-vanishing. Thus, each evolution cannot be regarded as quantum chaos. Indeed, an extensive number of non-Clifford gates is required to simulate quantum chaos~\cite{Leone2021}.

Now, let us consider applying a Clifford gate $C\in\mathcal{E}_\text{cl}$ after a pseudochaotic subspace embedded evolution $U$. The Clifford gate increases entanglement of any initial computational states~\cite{gu2023little}. However, it does not increase pseudochaoticity of the evolution based on the magnitudes of OTOCs. $C$ conjugates an on-site Pauli matrix $V$ in the first line of Eq.~\eqref{eq:Clifford-OTOC} and thus maps it to an arbitrary Pauli string of length $n$. According to the discussion in \hyperlink{sec:non-local-otoc}{Supplementary Note 4}, changing a local Pauli matrix $V$ to a non-local one does not alter the calculations of OTOCs, meaning that their magnitudes are not affected by the change. Thus, the additional Clifford gate do increase the entanglement of $U$ while preserving its pseudochaoticity.\\

\noindent
\textbf{Maximally magical ensemble without pseudochaos}\\
\indent Let us consider an ensemble $\mathcal{E}$ of unitary evolution $\mathcal{E}_\text{mg}=\{\prod_{i=1}^n T_i^{x_i}|x\in\{1,3\}^n\}$, where $T_i$ is the T-gate acting on the $i$-th qubit. Then, by evolving initial computational states by this ensemble, one gets an ensemble of states whose magic are extensive. At the same time, OTOCs with on-site Pauli matrices $V$ and $W$ placed at different positions do not vanish, since $V$ and $\tilde{W}$ are placed and localized at different sites and thus commuting with each other. This implies that whether an ensemble of unitary operators could generate extensive magic or not implies anything on pseudochaoticity of the ensemble.

Next, let us consider attaching operators in $\mathcal{E}_\text{mg}$ to a pseudochaotic subspace embedded evolution $U$. It has been shown that those additional operators increase magic of pseudorandom states~\cite{gu2023little}. On the other hand, they do not affect the magnitudes of OTOCs since they are products of on-site operators so that $V$ conjugated by them is still an on-site operator. \\

\noindent
\textbf{Maximally coherent ensemble without pseudochaos}\\
\indent Let us consider an ensemble of unitary operators $\mathcal{E}_\text{ch}=\{\prod_{i=1}^n H X^{x_i}|x\in\{0,1\}\}$. Then, initial computational states evolved by unitary operators of this ensemble has $2^n$ superpositions in the computational basis. Thus, their coherence is maximal. However, the unitary operators of the ensemble have trivial OTOCs since $V$ and $W$ conjugated by them commute each other.

It is also possible to control coherence of evolved states without changing pseudochaoticity of an evolution by applying unitary operators of $\mathcal{E}_\text{ch}$ after the evolution. These operators are on-site so that do not change magnitudes of OTOCs. However, if they applied after generating pseudorandom states, then they can increase their coherence. This can be explicitly demonstrated by the following theorem.
\begin{theorem}
    The random subset phase state has the relative entropy of coherence $k \log 2$ with respect to the computational basis, and its coherence can be enhanced to $O(n)$ with high probability.
\end{theorem}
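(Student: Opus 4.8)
The random subset phase state with seed $a$ is $\ket{\Phi_a}=O_a\bigl(2^{-k/2}\sum_b\ket{b,a}\bigr)=2^{-k/2}\sum_{b\in\{0,1\}^k}(-1)^{f(b,a)}\ket{p(b,a)}$. The first step is simply to read off its diagonal ensemble: since $b\mapsto p(b,a)$ is injective, $\ket{\Phi_a}$ is supported on exactly $2^k$ distinct computational‑basis vectors, each with amplitude of modulus $2^{-k/2}$, so $\rho^{\text{diag}}$ equals $2^{-k}$ times a rank‑$2^k$ projector and $S(\rho^{\text{diag}})=k\log 2$. Because $\rho=\ketbra{\Phi_a}{\Phi_a}$ is pure, $S(\rho)=0$, hence $C_{\text{rel.ent.}}(\rho)=k\log 2$ exactly, with no probabilistic qualification needed.

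\noindent\textbf{Step 2: enhancing the coherence by a Hadamard layer.} For the second claim the plan is to append a layer from $\mathcal{E}_\text{ch}$, i.e.\ $\prod_i HX^{x_i}=H^{\otimes n}$ up to single‑qubit $X$ gates that merely relabel basis vectors and leave every coherence measure in the computational basis invariant; so it suffices to analyze $\ket{\widetilde\Phi_a}=H^{\otimes n}\ket{\Phi_a}=\sum_{y\in\{0,1\}^n}\widetilde c_y\ket{y}$ with $\widetilde c_y=2^{-(n+k)/2}\sum_b(-1)^{f(b,a)+p(b,a)\cdot y}$. I would not attempt to show $|\widetilde c_y|^2\approx 2^{-n}$ pointwise (that needs a union bound over $2^n$ values of $y$); instead I would lower‑bound the coherence by the collision (Rényi‑$2$) entropy, $S(\widetilde\rho^{\text{diag}})=-\sum_y|\widetilde c_y|^2\log|\widetilde c_y|^2\ \ge\ -\log\bigl(\sum_y|\widetilde c_y|^4\bigr)$, and control the single scalar $\sum_y|\widetilde c_y|^4$ in mean.

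\noindent\textbf{Step 3: the fourth moment and concentration.} Expanding $\sum_y|\widetilde c_y|^4$, the summation over $y$ contributes a factor $2^n$ times the indicator that $p(b_1,a)\oplus p(b_2,a)\oplus p(b_3,a)\oplus p(b_4,a)=0$, while the average over the random phase $f$ annihilates every term whose multiset $\{b_1,b_2,b_3,b_4\}$ does not have all multiplicities even. Since $p(\cdot,a)$ is injective, every surviving (all‑even‑multiplicity) quadruple automatically satisfies the $\oplus$‑constraint, and there are only $O(2^{2k})$ of them, so $\mathbb{E}_{p,f}\sum_y|\widetilde c_y|^4=O\bigl(2^{-2(n+k)}\cdot 2^n\cdot 2^{2k}\bigr)=O(2^{-n})$. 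Markov's inequality then yields $\sum_y|\widetilde c_y|^4\le 2^{-(1-\epsilon)n}$ with probability $1-O(2^{-\epsilon n})=1-\operatorname{negl}(n)$ for any constant $\epsilon>0$, whence $S(\widetilde\rho^{\text{diag}})\ge(1-\epsilon)n\log 2$; combined with the trivial bound $S(\widetilde\rho^{\text{diag}})\le n\log 2$ this shows the coherence becomes $\Theta(n)=O(n)$ with high probability, as claimed. The main obstacle is exactly this fourth‑moment count: one must check that the $f$‑average really does kill the ``accidental'' solutions of the $\oplus$‑constraint (quadruples with mismatched multiplicities) so that the clean estimate $O(2^{2k})$ on the number of surviving quadruples is legitimate; after that, only routine bookkeeping and the one‑sided Markov bound on the nonnegative quantity $\sum_y|\widetilde c_y|^4$ remain.
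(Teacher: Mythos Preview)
Your proposal is correct and follows essentially the same route as the paper: compute the coherence of the bare subset-phase state directly, apply $H^{\otimes n}$, bound $\mathbb{E}_{p,f}\sum_y|\widetilde c_y|^4=O(2^{-n})$ via the $f$-average (which indeed kills all quadruples without all-even multiplicities, so your ``main obstacle'' is already resolved), apply Markov, and then lower-bound the Shannon entropy by the collision entropy. The only difference is cosmetic: the paper phrases the last step as Jensen's inequality and obtains $\tfrac{1-\epsilon}{2}\,n\log 2$, whereas your direct R\'enyi-$2$ bound gives the sharper $(1-\epsilon)\,n\log 2$; both are $\Theta(n)$.
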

\begin{proof}
First, since the random subset phase state is equal weight superposisions of $2^k$ states, its relative entropy of coherence is $k \log 2$. Next, let us apply the products of Hadamard gates $\prod_{i=1}^n H_i$ to $\ket{\psi_{p,f,a}}$:
\begin{equation}
    \ket{\phi}=\prod_{i=1}^n H_i \ket{\psi_{p,f,a}} = \sum_{x\in\{0,1\}^n}\left(\frac{1}{\sqrt{2}^{n+k}}\sum_{b\in\{0,1\}^k} (-1)^{f(b,a)+x\cdot p(b,a) }\right)\ket{x} = \sum_{x\in\{0,1\}^n}c_{p,f,x}\ket{x}.
\end{equation}
The ensemble average of $c_{p,f,x}^4$ over random permutations $p$ and random functions $f$ is given by
\begin{equation}
    \mathbb{E}_{f,p}\left[c_x^4\right] = \mathbb{E}_{f,p}\left[\frac{1}{2^{2(n+k)}}\sum_{b_1,b_2,b_3,b_4}(-1)^{\sum_{i=1}^4 f(b_i,a)+x\cdot\left(\sum_{i=1}^4 p(b_i,a)\right)}\right] < \frac{3}{2^{2n}}
\end{equation}
Thus, due to Markov's inequality, $c_{p,f,x}^4$ is exponentially small in $n$ with high probability,
\begin{equation}
    \operatorname{Pr}\left[\sum_{x\in\{0,1\}^n}c_{p,f,x}^4 \geq \frac{1}{2^{(1-\epsilon )n}}\right] < \frac{3}{2^{\epsilon n}}. 
\end{equation}
for some constant $\epsilon>0$. Now, let us lower bound the relative entropy of coherence of $\ket{\psi}$ using Jensen's inequality as
\begin{equation}
    \begin{split}
        C_{\text{rel.ent.}}(\ket{\phi}) 
        &= - \sum_{x\in\{0,1\}^n} c_{p,f,x}^2 \log c_{p,f,x}^2 \\
        &= - \frac{1}{2}\sum_{x\in\{0,1\}^n} c_{p,f,x}^2 \log c_{p,f,x}^4\\
        &\geq - \frac{1}{2} \log \left( \sum_{x\in\{0,1\}^n}c_{p,f,x}^4 \right).
    \end{split}
\end{equation}
Applying Markov's inequality to the last line implies that the following inequality holds with high probability:
\begin{equation}
    \frac{1-\epsilon}{2}n\log 2 \leq C_{\text{rel.ent.}}(\ket{\phi}).
\end{equation}
In other words, applying the products of Hadamard gates to a random subset phase state gives a state having coherence proportional to $n$ with high probability.
\end{proof}

\newpage
\subsection*{13. Near-term device realization of pseudochaos}
In this section, we discuss feasibility of realizing pseudochaos in near-term devices. We will specifically focus on implementation of pseudochaotic RSED. First, pseudochaotic RSED requires the implementation of pseudorandom functions and pseudorandom permutations. Pseudorandom functions can be realized using \( \operatorname{polylog}(n) \)-depth circuits based on the constructions in Refs.~\cite{NAOR1999,Banerjee2012}, which employ tree-like circuit architectures. Pseudorandom permutations can be implemented using the Luby-Rackoff construction~\cite{Hosoyamada2019}, which also relies on tree-like circuits. This is because: (1) the construction internally requires a pseudorandom function, and (2) it necessitates duplicating a bit generated by the pseudorandom function to perform parallel additions of its value across the system. Given these architectural demands, platforms that allow qubit reconfiguration—such as Rydberg atom arrays~\cite{Bluvstein2024} or systems equipped with reconfigurable routers~\cite{Wu2024}—are well suited to implement such circuits within \( \operatorname{polylog}(n) \) depth.  

Regarding gate fidelity, entangling gates in Rydberg atom systems are predicted to reach fidelities of up to 0.9995~\cite{Xue2024}. Since the total number of entangling gates required to implement pseudochaotic dynamics scales as \( \tilde{O}(n) \), where \( \tilde{O}(\cdot) \) suppresses \( \operatorname{polylog}(n) \) factors, the maximum system size \( n \) that achieves an overall fidelity of 0.9 would be approximately \( 200/a \), where \( a \) is a proportionality constant. If one uses parallel entangling gates over 60 atoms with 0.995 fidelity, as demonstrated in Ref.~\cite{Evered2023}, the maximum achievable \( n \) for 0.9 fidelity increases to approximately \( 600/a \). We estimate the value of \( a \) as follows:  
\begin{align}
a &= (\text{number of pseudorandom permutations}) \times (\text{number of Luby-Rackoff rounds}) \nonumber \\
&\quad \times \left[
1 \ (\text{pseudorandom function per round}) + 
\frac{1}{2} \ (\text{bit duplication to } \frac{n}{2} \text{ positions}) +
\frac{1}{2} \ (\text{addition to half the system}) \right] \nonumber \\
&= 2 \times 4 \times 2 = 16. \nonumber
\end{align}
Therefore, under the use of parallel entangling gates, the maximum feasible system size is approximately \( n = 600 / 16 \approx 37 \), which exceeds the number of qubits achievable for the exact simulation of quantum states by classical computers.\\

\newpage
\section*{Supplementary References}

\let\oldaddcontentsline\addcontentsline
\renewcommand{\addcontentsline}[3]{}
%
\let\addcontentsline\oldaddcontentsline
